\numberwithin{equation}{section} 
\newtheorem{theorem}{Theorem}[section]
\newtheorem{lemma}[theorem]{Lemma}
\newtheorem{proposition}[theorem]{Proposition}
\newtheorem{example}[theorem]{Example}
\newtheorem{remark}[theorem]{Remark}
\newtheorem{definition}[theorem]{Definition}
\newcommand{\sgn}{\text{\rm sgn }}
\newcommand{\Span}{\text{\rm Span }}
\newcommand{\unit}[1]{\ensuremath{\mathrm{#1}}}
\newcommand{\dif}{\ensuremath{\mathrm{d}}}
\def\bb1{{1\!\!1}}
\def\cF{ \mathcal{F}}
\def\trace{{\rm tr }}
\newcommand{\eps}{\ensuremath{\epsilon}}
\def\cW{\mathcal{W}}
\def\cV{\mathcal{V}}
\title{Spectral stability of ideal-gas shock layers}
\author[Humpherys, Lyng, and Zumbrun]{Jeffrey Humpherys, Gregory Lyng, and Kevin Zumbrun}
\thanks{
J.H. was partially supported by NSF grant DMS-0607721. 
G.L. was partially supported by a Basic Research Grant from the 
College of Arts and Sciences at the University of Wyoming.
K.Z. was partially supported  by NSF grant DMS-0300487.
}
\address{Department of Mathematics, Brigham Young University, Provo, UT 84602}
\email{jeffh@math.byu.edu}
\address{Department of Mathematics, University of Wyoming, Laramie, 
WY 82071}
\email{glyng@uwyo.edu}
\address{Department of Mathematics, Indiana University, Bloomington, IN 47402}
\email{kzumbrun@indiana.edu}
\date{Last Updated:  December 18, 2007}
\begin{document}

\begin{abstract}
Extending recent results in the isentropic case, 
we use a combination of asymptotic ODE estimates and numerical
Evans-function computations to examine the
spectral stability of shock-wave solutions of the compressible
Navier--Stokes equations with ideal gas equation of state.
Our main results are that, in appropriately rescaled
coordinates, the Evans function associated with the linearized
operator about the wave (i) converges in the large-amplitude limit
to the Evans function for a limiting shock profile of the
same equations, for which internal energy vanishes at one endstate;
and (ii) has no unstable (positive real part) zeros 
outside a uniform ball $|\lambda|\le \Lambda$.
Thus, the rescaled eigenvalue ODE for the set of all shock waves, 
augmented with the (nonphysical) limiting case, form a compact
family of boundary-value problems that can be conveniently 
investigated numerically.
An extensive numerical Evans-function study yields
one-dimensional spectral stability, independent of amplitude,
for gas constant $\gamma$ in  $[1.2, 3]$ and ratio $\nu/\mu$ 
of heat conduction to viscosity coefficient within 
$[0.2,5]$ ($\gamma\approx 1.4$, $\nu/\mu\approx 1.47$ for air).
Other values may be treated similarly but were not considered.
The method of analysis extends also to the multi-dimensional case,
a direction that we shall pursue in a future work.
\end{abstract}

\maketitle

\tableofcontents

\section{Introduction}\label{sec:intro}

A long-standing question in gas 
%JH
%and fluid 
dynamics is the 
stability of viscous shock layers, or traveling-wave solutions
$$
U(x,t)=\bar U(x-st), \quad
\lim_{z\to \pm \infty}\bar U(z)=U_\pm,
$$
of the compressible Navier--Stokes equations, where $U(x,t)=(v,u,E)^T$ 
is a vector recording specific volume, velocity, and total energy
of the 
%JH
%gas or 
fluid at location $x\in \mathbb{R}$ and time $t\in \mathbb{R}^+$.
A closely related question is the relation between Navier--Stokes
solutions and solutions of the formally limiting Euler equations in the
limit as viscosity and heat conduction coefficients go to zero:
more precisely, validity of formal matched asymptotics 
predicting that the Navier--Stokes solution consists approximately
of an Euler solution with smooth viscous shock layers replacing discontinuous
Euler shocks.

Recent progress in the form of
``Lyapunov-type'' theorems established in \cite{MZ.4,Z.3,GMWZ.1,GMWZ.2,GMWZ.3}
has reduced both problems to determination of spectral stability of
shock layers, i.e., the study of the eigenvalue ODE associated with
the linearized operator about the wave:
a standard analytically and numerically well-posed
(boundary value) problem in ODE that can be attacked
by the large body of techniques developed for asymptotic,
exact, and numerical study of ODE.
Indeed, the cited results hold for a substantially more general 
class of equations, and in one- or multi-dimensions.
%However, the determination of spectral stability is in general
%a difficult problem in systems of ODE.

In \cite{HuZ.1,PZ,FS,FS.2}, it has
been established in a similarly general context 
(general equations, one- and multi-dimensions),
using asymptotic ODE techniques, 
that spectral stability holds always in the small-amplitude limit, 
where amplitude is measured by $|U_+-U_-|$,
i.e., for shocks sufficiently close to a constant solution,
thus satisfactorily resolving the long-time stability and 
small-viscosity limit problems for small-variation solutions.

However, until very recently, the spectral stability of
{\it large-amplitude} shock waves has remained from a theoretical
viewpoint essentially open,
the sole exceptions being (i) a result of stability of Navier--Stokes shocks for
isentropic gas dynamics with $\gamma$-law gas in the 
special case $\gamma\to 1$, obtained by Matsumura-Nishihara \cite{MN}
quite early on using an ingenious energy estimate specific to that case; and (ii)
and a result of Zumbrun \cite{Z.4}---again obtained by energy
estimates special to the model---which establishes the stability of stationary phase-transitional shocks of
an isentropic viscous-capillary van der Waals model introduced by
Slemrod \cite{S}.

Progress instead has focused, quite successfully, on
the development of efficient
and general numerical methods for the study of stability of
individual waves, or compact families of waves, of essentially 
arbitrary systems; see, for example, \cite{Br.1,Br.2,BrZ,BDG,HuZ.2,BHRZ}.
These techniques, based on Evans-function computations,
effectively resolve the question of spectral stability for waves of 
large but finite amplitude, but leave open the question of stability
in the large-amplitude limit.
For discussion of the Evans function and its numerical computation, 
see \cite{AGJ,GZ,Z.3,BrZ,BHRZ} or Section \ref{evans} below.

Quite recently, however, Humpherys, Lafitte, and Zumbrun \cite{HLZ}
have introduced a new strategy combining asymptotic ODE techniques with 
numerical Evans-function computations,
by which they were able to carry out a global analysis
of shock stability in the isentropic $\gamma$-law case,
yielding stability independent of amplitude for 
$\gamma \in [1,2.5]$.\footnote{Other $\gamma$ may be treated 
similarly but were not 
%TODO: wording (but also grammar! ``carried out'' is incorrect)
%checked.}
%carried out.}
%tested.}
%examined.}
%examined in this study.}
considered.}
Specifically, after an appropriate rescaling, they showed by
a detailed asymptotic analysis of the linearized eigenvalue
ODE that the associated Evans functions (determining stability of the 
viscous shock profile) converge in the large-amplitude limit
to an Evans function associated with their formal limit, which
may then be studied either numerically or analytically (for example,
by energy estimate as in \cite{HLZ}).

The purpose of the present paper is to extend the approach
of \cite{HLZ} to the full (nonisentropic) Navier--Stokes equations
of compressible gas dynamics with ideal gas equation of state,
resolving in this fundamental case
the long-standing questions of viscous shock stability and
behavior in the small-viscosity limit.
Specifically, we show, as in the isentropic case, that
the Evans function indeed converges in the large-amplitude
limit, to a value corresponding to the Evans function of
a limiting system.
Compactifying the parameter range by adjoining this limiting system,
we then carry out systematic numerical Evans-function 
computations
as in \cite{BHRZ,HLZ} to determine stability for gas constant 
$\gamma \in [1.2,3]$
and (rescaled) ratio of heat conduction to viscosity coefficient
%TODO: update values.
%$\nu/\mu \in [.1,10]$.
$\nu/\mu \in [0.2,5]$, 
%in agreement with
well-including the
physical values
given in Appendices \ref{constants} and \ref{simple}.
The result, as in the isentropic case, is {\it unconditional stability,
independent of amplitude for an ideal gas equation of state.}
%TODO: include ratio $\nu=\kappa/c_v \mu$ of heat conduction to 
%viscosity?

%TODO: not sure of this part, probably leave out (KZ).
%(BUT, interesting to think about maybe...)
%Indeed, we find that the behavior of the Evans function 
%across the various parameter/argument regimes
%is largely determined by interpolation between
%its (analytically-established)
%large- and small-amplitude and high- and low-frequency limits.
%Near $\lambda=\infty$, $D(\lambda)\sim Ce^{\alpha \sqrt{\lambda}}$,
%giving characteristic shape of outer contour arc.
%Near $\lambda=0$, $D(\lambda)\sim c_0 + c_1 \lambda$, $c_j$ real, $c_0\ne 0$,
%giving tangency to imaginary axis near image of $\lambda=0$.
%However, typical ``loop'' seen in image of the imaginary axis is not
%explained by Taylor expansion!
%(unless it is $c_1\lambda + c_3\lambda^3$, with $c_2=0$..???)

\subsection{Discussion and open problems}\label{discussion}
% JH changed to present tense, minor wordcrafting
The asymptotic analysis of \cite{HLZ} is quite delicate; it depends sensitively both on the use of Lagrangian coordinates and on the precise way of writing the eigenvalue ODE as a first-order system.  It is thus not immediately clear that the analysis can be extended to the more complicated nonisentropic case.  Moreover, since Lagrangian coordinates---specifically, the associated
change of spatial variable 
\begin{equation}\label{change}
\dif\tilde x/\dif x=\rho (x),
\end{equation}
where $\rho$ is density---are not available in multi-dimensions, it is likewise, at first glance, unclear how how to extend the analysis beyond one spatial dimension.
%END JH

Remarkably, we find that the structure of the full, physical equations
is much more favorable to the analysis than that of the isentropic model.   
In particular, whereas in the isentropic case the eigenvalue
equations in the large-amplitude limit are a nonstandard 
singular perturbation of the limiting equations that must
be analyzed ``by hand'', in the full
(nonisentropic) gas case, they are a {\it regular perturbation}
for which convergence may be concluded by standard theorems
on continuous dependence of the Evans function with respect
to parameters; see, for example, the basic convergence lemma of \cite{PZ}.

%TODO This paragraph is cumbersome.
%gdl: Rewrote and slightly expanded this paragraph
Indeed, for $\gamma$ bounded away from the nonphysical case $\gamma=1$
(see Section \ref{prelim} for a description of the equations and the
physical background),
we have the striking difference that, for a fixed left endstate $U_-$, the
density remains uniformly bounded above and below
for all viscous shock profiles connecting $U_-$ to a right state $U_+$,
with energy going to infinity in the large-amplitude limit. By contrast, 
in the isentropic case, the density is artificially tied
to energy and thus density goes to infinity in the large-amplitude limit for
any $\gamma \ge 1$; see, e.g., \cite{Sm,Se.1,Se.2}.
This untangling of the large-amplitude behaviors of the density 
and the energy sets the stage for our analysis.  Below, 
to see this untangling, instead of fixing a left endstate $U_-$ and asking
which right endstates $U_+$ may be connected to $U_-$ by a viscous 
shock profile, we fix the shock speed $s=-1$ and all coordinates of the 
left state $U_-$ \emph{except} the energy. We find again 
that the density stays bounded above and below for all possible right
states $U_+$ connected by a shock profile to some such $U_-$.
%
%A related fact, relevant to the parametrization studied here,
%is that, fixing instead the shock speed $s=-1$
%and all coordinates of the left state $U_-$ 
%except energy, we find again
%that the density stays bounded above and below for all possible right
%states $U_+$ connected by a shock profile to some such $U_-$. In this case 
%the strong-shock limit corresponds to the point at which 
%energy goes to zero for the left-hand state $U_-$.
%END REWRITE-gdl

Since the equations remain regular so long as density is bounded from
zero and infinity, one important consequence of this fact is that
we need only check a few basic properties such as uniform decay of profiles
and continuous extension of stable/unstable subspaces to conclude
that the strong-shock limit is in the nonisentropic case
a regular perturbation of the limiting system as claimed;
see Section \ref{formulation} for details.

%This fact has the important implication that Lagrangian and
A second important consequence is that Lagrangian and
Eulerian coordinates are essentially equivalent in the nonisentropic
case so long as $\gamma$ remains uniformly bounded from $1$,
whereas, in the isentropic case, the equations become singular
for Eulerian ($x$) coordinates in the large-amplitude limit, 
by \eqref{change} together with the fact that density goes to infinity.
Here, we have chosen to work with Lagrangian coordinates
for comparison with previous one-dimensional analyses 
in the isentropic case \cite{BHRZ,HLZ,CHNZ}.
However, we could just as well have worked in Eulerian coordinates, 
including full multi-dimensional
effects, to obtain by the same 
arguments that the large-amplitude limit is a regular perturbation 
of the (unintegrated) limiting eigenvalue equation, and therefore the Evans
functions converge in the limit also in this 
multi-dimensional setting.

Likewise, uniform bounds on unstable eigenvalues may be obtained
in multi-dimensions by adapting the
asymptotic analysis of \cite{GMWZ.1} similarly as we have adapted
in Section \ref{high} the asymptotic analysis of \cite{MZ.3}.
Thus, for $\gamma$ uniformly bounded from $1$, 
{\it the analysis of this paper extends 
%CHANGED: don't want to depreciate multi-d...
%in routine fashion 
with suitable modification
to the multi-dimensional case}, making possible the resolution of 
multi-dimensional viscous stability by a systematic numerical 
Evans-function study as in the one-dimensional case.  
We shall carry out 
%CHANGED: likewise, don't depreciate multi-d contribution...
%these computations 
the multi-dimensional analysis
in a following work \cite{HLyZ}.

Presumably, the same procedure of compactifying the parameter
space after rescaling to bounded domain would work for any
gas law with appropriate asymptotic behavior as $\rho, e\to \infty$.
Thus, we could in principle investigate also van der Waals gas/fluids,
for example, which could yield interesting different behavior: in particular,
(as known already from stability index considerations \cite{ZS,Z.3}) 
instability in some regimes.
Other interesting areas for investigation include the study of
boundary layer stability (see \cite{CHNZ} for an analysis 
of the isentropic case), and stability of weak and strong detonation
solutions (analogous to shock waves) of the compressible Navier--Stokes
equations for a reacting gas.
A further interesting direction is to investigate the effects 
of temperature dependence of viscosity and heat conduction
on behavior for large amplitudes; see 
%TODO: restore?
%Section \ref{tdepcase} and 
Appendices \ref{kinetic} and \ref{tempdep}.
%TODO: update cross-ref., statement? 

In this work, we have restricted to the parameter range 
$\gamma \in [1.2,3]$ and 
%$\nu/\mu\in [.1,10]$, where
%$\nu/\mu\in [1,2]$, where
$\nu/\mu\in [0.2,5]$, where
$\gamma$ is the gas constant,
$\nu=\kappa/c_v$ is a rescaled coefficient of heat conduction ($\kappa$ the
Fourier conduction and $c_v$ specific heat),
and $\mu$ is the coefficient of viscosity; see 
equations. \eqref{eq:mass}--\eqref{eq:energy}, Section \ref{prelim}.
Similar computations may be carried out for arbitrary $\gamma$
bounded away from the nonphysical limit $\gamma=1$.  To approach
the singular limit $\gamma=1$ would presumably require a nonstandard
singular perturbation analysis like that of \cite{HLZ} in the isentropic case,
as the structure is similar; see Remark \ref{eblowup}.
The limits $\nu/\mu \to 0$ and $\nu/\mu \to \infty$ are more standard
singular perturbations with fast/slow structure that should be treatable
by the methods of \cite{AGJ}; this would be a very interesting direction
for further study.
%TODO: put back in the below sentence? (if numerics done...)
We note that our results for large 
%and small 
$\nu/\mu$ 
%indeed indicate
do indicate possible
further simplification in behavior, as the singular perturbation structure
would suggest; see Remark \ref{smallconj} and Figure \ref{fig0d}.
%ENDTODO
%NOTE: small-amplitude too could be done by numerics/singular pert.,
%though no need... might be useful in more general situations, though
%(e.g., nonclassical waves- see [Brin]).
For dry air at normal temperatures, $\gamma\approx 1.4$ and
$\nu/\mu\approx 1.47$, well within range; see Appendix \ref{constants}.

Finally, we mention the issue of rigorous verification.
Our results, though based on rigorous analysis, do not constitute
numerical proof, and are not intended to.
In particular, we do not use interval arithmetic.
%For example, in some places we have substituted numerically
%efficient but nonrigorous convergence studies
%based on abstract convergence results for computable but
%inefficient rigorous A Priori bounds.
%TODO: did we really do the above? In the end, for main
%compuational regime \nu\in [1/2,2], we did not I guess,
%or at least did only in nonnecessary ways (esp. setting $L$
%value... also, lambda step-size, etc.
%MAYBE SUBSTITUTE a less strong statement here?
%(a posteriori vs. a priori?)
%NO, BASICALLY OK I GUESS.... -K
Nonetheless, the numerical evidence for stability appears 
overwhelming, particularly in view of the fact that the family
$\{D(\lambda,v_+)\}$ of Evans contours estimated in the 
stability computations is analytic in both parameters,
yielding extremely strong interpolation estimates by the rigidity
of analytic functions.

In any case, our analysis contains all of the elements necessary
for numerical proof, the effective realization of which, however,
is a separate problem of independent interest.
Given the fundamental nature of the problem, we
view this as an important area for further investigation.

\subsection{Plan of the paper}
In Section \ref{prelim}, we set up
the problem, describing the equations, rescaling appropriately,
and verifying existence and uniform decay of profiles independent
of shock strength.
In Section \ref{formulation}, we construct the Evans function
and establish the key fact that it is continuous in all parameters
up to the strong-shock limit.
In Section \ref{high}, we carry out the main technical work of the
paper, establishing an upper bound on the modulus of unstable 
eigenvalues of the linearized operator about the wave in terms
of numerically approximable quantities associated with the traveling-wave
profile.
In Section \ref{numprot}, we describe our numerical 
method, first estimating a maximal radius within which
unstable eigenvalues are confined, then computing the winding 
number of the Evans function around the semicircle with that
radius to estimate the number of unstable eigenvalues, for
(a discretization of)
all parameters within the compact parameter domain, including
the strong-shock limit.
Finally, in Section \ref{numresults}, we perform the numerical
computations indicating stability.

In Appendices \ref{constants} and \ref{simple}, we discuss further the 
dimensionless constants $\Gamma$ and $\nu/\mu$,
and determine their values for air and other common gases.
In Appendix \ref{dense}, we discuss equations of state for 
%more general materials.
fluids and dense gases.
In Appendix \ref{mach}, we compute a formula for the Mach number,
a useful dimensionless quantity measuring shock strength independent
of scaling. 
%Finally, in Appendix \ref{liftbd}, we give a general
In Appendix \ref{liftbd}, we give a general
bound on the operator norm of lifted matrices acting on exterior products,
useful for analysis of the exterior product method of \cite{Br.1,BDG}.
In Appendix \ref{tempdep}, we discuss the 
changes needed to accommodate 
temperature-dependence in the coefficients of
viscosity and heat conduction, as predicted by the 
kinetic theory of gases.
\bigbreak

\section{Preliminaries}\label{prelim}

In Lagrangian coordinates, the Navier--Stokes equations for  
compressible gas dynamics take the form
\begin{align}
v_t-u_x&=0,\label{eq:mass}\\
u_t+p_x&=\left(\frac{\mu u_x}{v}\right)_x,\label{eq:momentum}\\
E_t+(pu)_x&=\left(\frac{\mu uu_x}{v}\right)_x+\left(\frac{\kappa T_x} 
{v}\right)_x,\label{eq:energy}
\end{align}
where $v$ is the specific volume, $u$ is the velocity, $p$ is the  
pressure, and the energy $E$ is made up of the internal energy $e$  
and the kinetic energy:
\begin{equation}
E=e+\frac{u^2}{2}.\label{eq:internal_kinetic}
\end{equation}
The constants $\mu$ and $\kappa$ represent viscosity and heat  
conductivity. Finally, $T$ is the temperature, and we assume that the  
internal energy $e$ and the pressure $p$ are known functions of the  
specific volume and the temperature:
\begin{equation*}
p=p_0(v,T),\quad e=e_0(v,T).
\end{equation*}

An important special case occurs when we consider an ideal,  
polytropic gas. In this case the energy and pressure functions take  
the specific form
\begin{equation}
p_0(v,T)=\frac{\bar RT}{v},\quad e_0(v,T)=c_vT,
\label{eq:ideal_gas}
\end{equation}
where $\bar R> 0$ and $c_v>0$ are constants that characterize the gas.
Alternatively, the pressure may be written as
\begin{equation}\label{Gammaeq}
p=\frac{\Gamma e}{v},
\end{equation}
where $\Gamma =\gamma -1= \frac{\bar R}{c_v }> 0$,
$\gamma > 1$ the 
%gas constant.
adiabatic index.
Equivalently, in terms of the entropy and specific volume, the pressure reads
$$
p(v,S)=ae^{S/c_v}v^{-\gamma},
$$
where $S$ is thermodynamical entropy,
or $p(v)=av^{-\gamma}$ in the isentropic approximation; see \cite 
{Sm,BHRZ,HLZ}.

In the thermodynamical rarefied gas approximation, $\gamma>1$ is the 
average over constituent particles of $\gamma=(N+2)/N$, where $N$ is the 
number of internal degrees of freedom of an individual particle,
or, for molecules with ``tree'' (as opposed to ring, or other
more complicated) structure, 
\begin{equation}\label{gammaformula}
\gamma=\frac{2n+3}{2n + 1}, 
\end{equation}
where $n$ is the number of constituent atoms \cite{Ba}: 
$\gamma= 5/3 \approx 1.66$ for monatomic, $\gamma= 7/5=1.4$ for diatomic gas.  
%For dense fluids, $\gamma$ is typically determined phenomenologically \cite{H}.
For fluids or dense gases, $\gamma$ is typically determined 
phenomenologically \cite{H}.
In general, $\gamma$ is usually taken within $1 \leq \gamma \leq 3$ 
in models of 
%gas- or fluid-dynamical flow, 
gas-dynamical flow, 
whether phenomenological or derived by statistical 
mechanics \cite{Sm,Se.1,Se.2}.
%TODO: not exactly true- take $\gamma=4-7$ for fluid (water, mercury) modeling
%in ``stiffened'' equation of state; what IS true (maybe) is that for situations
%where ideal gas eos is reasonable, $\gamma$ seems to be in this range...
%(I'm not even completely positive about this weaker statement,
%but think it's true..-K) DONE.

\subsection{Viscous shock profiles}\label{sec:vp}
A \emph{viscous shock profile} of \eqref{eq:mass}--\eqref{eq:energy}  
is a traveling-wave solution,
\begin{equation}
v(x,t)=\hat v(x-st),\;\;u(x,t)=\hat u(x-st),\;\; T(x,t)=\hat T(x-st),
\label{eq:tw_ansatz}
\end{equation}
moving with speed $s$ and connecting constant states $(v_\pm,u_\pm,T_ 
\pm)$. Such a solution is a stationary solution of the system of PDEs
\begin{equation}
v_t-sv_x-u_x=0,
\label{eq:moving_mass}
\end{equation}
\begin{equation}
u_t-su_x+p_0(v,T)_x=\left(\frac{\mu u_x}{v}\right)_x,
\label{eq:moving_momentum}
\end{equation}
\begin{equation}
\big[e_0(v,T)+u^2/2\big]_t
-s\big[e_0(v,T)+u^2/2\big]_x
+(p_0(v,T)u)_x
=\left(\frac{\mu uu_x}{v}\right)_x+\left(\frac{\kappa T_x}{v}\right)_x.
\label{eq:moving_energy}
\end{equation}

\subsection{Rescaled equations}\label{sec:rescaled}
Under the rescaling
\begin{equation}\label{scaling}
(x,t,v,u,T)\to (-\epsilon sx, \epsilon s^2t, v/\epsilon, -u/(\epsilon  
s), T/(\epsilon^2 s^2)),
\end{equation}
where $\epsilon$ is chosen so that $v_-=1$,
the system \eqref{eq:moving_mass}-\eqref{eq:moving_energy} becomes
\begin{align}
v_t+v_x-u_x&=0,\label{eq:rescale_mass}\\
u_t+u_x+p_x&=\left(\frac{\mu u_x}{v}\right)_x,\label 
{eq:rescale_momentum}\\
E_t+E_x+(pu)_x&=\left(\frac{\mu uu_x}{v}\right)_x+\left(\frac{\kappa  
T_x}{v}\right)_x,\label{eq:rescale_energy}
\end{align}
where the pressure and internal energy in the (new) rescaled  
variables are given by
\begin{equation}
p(v,T)=\epsilon^{-1}s^{-2}p_0(\epsilon v,\epsilon^2s^2T)
\label{eq:new_pressure}
\end{equation}
and
\begin{equation}
e(v,T)=\epsilon^{-2}s^{-2}e_0(\epsilon v,\epsilon^2s^2 T);
\label{eq:new_energy}
\end{equation}
in the ideal gas case, {\it the pressure and internal energy laws
remain unchanged}
\begin{equation}
p(v,T)=\frac{\bar R T}{v},
\qquad
e(v,T)=c_vT,
\label{eq:ideal_gas_scaling1}
\end{equation}
with the same $\bar R$, $c_v$.  Likewise, $\Gamma$ remains unchanged in \eqref 
{Gammaeq}.

\subsection{Rescaled profile equations}\label{sec:viscous}

Viscous shock profiles of 
\eqref{eq:rescale_mass}--\eqref{eq:rescale_energy}
satisfy the system of ordinary  differential equations
\begin{align}
v'-u'&=0,\label{eq:ode_mass}\\
u'+p(v,T)'&=\left(\frac{\mu u'}{v}\right)',\label{eq:ode_momentum}\\
\big[e(v,T)+u^2/2\big]'+(p(v,T)u)'&=\left(\frac{\mu uu'}{v}\right)'+ 
\left(\frac{\kappa T'}{v}\right)',\label{eq:ode_energy}
\end{align}
together with the boundary conditions
\[
\big(v(\pm\infty),u(\pm\infty),T(\pm\infty)\big)=\big(v_\pm,u_\pm,T_ 
\pm\big).
\]
Evidently, we can integrate each of the differential equations from $- 
\infty$ to $x$, and using the boundary conditions (in particular $v_- 
=1$ and $u_-=0$), we find, after some elementary manipulations, the  
profile equations:
\begin{align}
\mu v'&=v\Big[(v-1)+p(v,T)-p(v_-,T_-)\Big],\label{eq:profile1}\\
\kappa T'&=v\left[e(v,T)+\frac{(v-1)^2}{2}-e(v_-,T_-)\right]+v(v-1) 
\Big[p(v_-,T_-)-(v-1)\Big].\label{eq:profile2}
\end{align}
We note that in the case of an ideal gas, with $v_-=1$,
%and taking without loss of generality $\mu=1$,
these ODEs simplify somewhat, to
\begin{align}
v'&=\frac{1}{\mu}\left[v(v-1)+\Gamma (e-ve_-)\right],\label{eq:ideal_profile1}\\
%TODO: fixed! parentheses in wrong place! (DONE)
e'&=\frac{v}{\nu}\left[-\frac{(v-1)^2}{2}+(e-e_-)+(v-1)\Gamma e_-\right]
.\label{eq:ideal_profile2}
\end{align}
where $\nu:=\kappa/c_v$ and $\Gamma$ is as in \eqref{Gammaeq}.

\begin{remark}\label{goodprof}
Remarkably, the right-hand sides of the profile ODE are
polynomial in $(v,e)$, so smooth even for values on the
boundaries $\hat v=0$ or $\hat e=0$ of the physical parameter
range.  This is in sharp contrast to the isentropic case \cite{BHRZ,HLZ},
for which the ODE become singular as $v\to 0$, except in the special
case $\gamma=1$. 
\end{remark}

\subsection{Rankine-Hugoniot conditions}\label{RH}

Substituting $v_+,u_+,e_+$ into the rescaled profile equations 
\eqref{eq:ode_mass}--\eqref{eq:ode_energy}
and requiring that the right-hand side vanish
yields the Rankine-Hugoniot conditions
\begin{align}
-s[v] &=[u],\label{eq:rh_mass}\\
-s[u]&=-[p],\label{eq:rh_momentum}\\
-s\left[e+\frac{u^2}{2}\right]&=-[pu],\label{eq:rh_energy}
\end{align}
where $[f(U)]:=f(U_+)-f(U_-)$ denotes jump between $U_\pm$.

We specialize now to the ideal gas case. 
Under the scaling \eqref{scaling},
we have $s=-1$, $v_-=1$, $u_-=0$.
% $\mu=1$.
Fixing $\Gamma_{max}\ge \Gamma \ge \Gamma_{min}>0$  
%$\nu_{max}\ge \nu\ge \nu_{min}>0$, 
and letting $v_+$
vary in the range $1\ge v_+\ge v_*(\Gamma):= \Gamma/(\Gamma+2)$,
we use \eqref{eq:rh_mass}--\eqref{eq:rh_energy} to solve for
\[
u_+, e_+\;\text{and}\; e_-.
\]
Our assumptions reduce \eqref{eq:rh_mass}--\eqref{eq:rh_energy} to
\begin{align}
v_+-1 &=u_+,\label{eq:new_rh_mass}\\
u_+&=-(p_+-p_-)=-\Gamma\Big(\frac{e_+}{v_+}- e_-\Big),
\label{eq:new_rh_momentum}\\
(e_+-e_-)+\frac{u_+^2}{2}&=-p_+u_+=-\Gamma\frac{e_+u_+}{v_+}.\label 
{eq:new_rh_energy}
\end{align}
Equation \eqref{eq:new_rh_mass} immediately gives $u_+$.
Subtracting $\frac{u_+}{2}$ times \eqref{eq:new_rh_momentum} from
\eqref{eq:new_rh_energy}, and rearranging, we obtain
\begin{equation}\label{eq:R}
e_+= e_-
\frac{1+ \frac{\Gamma}{2}(1-v_+)}
{1-\frac{\Gamma}{2v_+}(1-v_+)}=
\frac{ e_- v_+}{(\Gamma +2)}
\frac{(\Gamma +2 -\Gamma v_+))}
{(v_+-v_*)},
\end{equation}
$v_*=\frac{\Gamma}{\Gamma+2}$,
from which we obtain the physicality condition
\begin{equation}\label{phys}
v_+> v_*:= \frac{\Gamma}{\Gamma + 2},
\end{equation}
corresponding to positivity of the denominator, with
$\frac{e_+}{e_-}\to +\infty$ as $v\to v_*$.
Finally, substituting into $1=s^2=-\frac{[p]}{[v]}$ and rearranging,
we obtain
\begin{equation}\label{e-}
e_-=\frac{(\Gamma +2)(v_+-v_*)}{2\Gamma(\Gamma +1)},
\end{equation}
and thus
\begin{equation}\label{e+}
e_+=\frac{v_+(\Gamma +2-\Gamma v_+)}{2\Gamma(\Gamma +1)},
\end{equation}
completing the description of the endstates.

We see from this analysis that the strong-shock limit corresponds,
for fixed $\Gamma$, to the limit $v_+\to v_*$, with all other
parameters functions of $v_+$.
In this limit,
\begin{equation}\label{asymptotics1}
v_-=1, \quad u_-=0,  \quad e_- \sim (v_+-v_*)\to 0,
\end{equation}
and
\begin{equation}\label{asymptotics2}
u_+\sim (v_+-1)\to \frac{-2}{\Gamma +2}, \quad
e_+ \to \frac{1- v_*^2}{2(\Gamma +1)}= \frac{2}{(\Gamma +2)^2}.
\end{equation}

At this point, taking without loss of generality $\mu=1$,
we have reduced to a three-parameter family of problems on compact
parameter range, parametrized by 
$\Gamma_{max}\ge \Gamma \ge \Gamma_{min}>0$,
$1\ge v_ +\ge v_*(\Gamma)\ge v_*(\Gamma_{min})>0$,
and $\nu_{max}\ge \nu \ge \nu_{min}$.

\begin{remark}\label{eblowup}
As $\Gamma \to 0$, we find from
\eqref{e+} that $e_+$ blows up as $(v_+-v_*)/\Gamma$, i.e.,
our rescaled coordinates remain bounded only if $v_+-v_*\le C\Gamma\to 0$, $C>0$ constant,
as well.
(This is reflected in the limiting profile equation for $\Gamma=0$,
which admits only profiles from $v_-=1$ to $v_+=0$;
see \eqref{eq:ideal_profile2}, which, for $\Gamma=0$, reduces
to $v'=v(v-1)$.)
Thus, our techniques apply for $\Gamma\to 0$ only
in the (simultaneous) large-amplitude limit.
\end{remark}

\subsection{Existence and decay of profiles}\label{hypcheck}
Specializing to the ideal gas case, we next study existence
and behavior of profiles.  Existence and exponential decay
of profiles has been established by Gilbarg \cite{Gi} 
for all {\it finite-amplitude shocks} $1\ge v_+>v_*$.
Thus, the question is whether these properties
extend to the strong-shock limit, the main issue being
to establish {\it uniform exponential decay} as $x\to \pm \infty$,
independent of shock strength.

Since the profile equations \eqref{eq:ideal_profile1}--\eqref{eq:ideal_profile2}
are smooth (polynomial) in $(v,e)$, the issue of uniform decay 
reduces essentially to {\it uniform hyperbolicity} of endstates $(v,e)_\pm$,
i.e., nonexistence of purely imaginary linearized
growth/decay rates at $\pm \infty$.
Linearizing \eqref{eq:ideal_profile1}--\eqref{eq:ideal_profile2}
about an equilibrium state, we obtain
\begin{equation}\label{linprof}
\begin{pmatrix}
v \\ e
\end{pmatrix}'=
M
\begin{pmatrix}
v \\ e
\end{pmatrix},
\qquad
M:=
\begin{pmatrix}
\mu^{-1} & 0\\
0 & v\nu^{-1}
\end{pmatrix}
\begin{pmatrix}
2v-1-\Gamma e_- & \Gamma \\
1-v+\Gamma e_- & 1\\
\end{pmatrix}.
\end{equation}
Since $M$ is $2\times 2$, its
eigenvalues are
$$
m= \frac{\trace M \pm \sqrt{\trace M^2 -4\det M}}{2},
$$
and so hyperbolicity is equivalent to 
$\det M \ne 0$  and $\det M < 0$  or $\trace M \ne 0$.

Computing, we have
\begin{equation}\label{hypcon}
\det M=
(v/\mu \nu)\Big(
(\Gamma +2)v-(\Gamma+1)(1+\Gamma e_-)
\Big),
\end{equation}
so that $\det M\gtrless 0$ is equivalent 
(for $\Gamma > 0$, hence $v\ge v_*>0$)
to
\begin{equation}\label{hypcon2}
(\Gamma +2)v-(\Gamma+1)(1+\Gamma e_-)\gtrless 0.
\end{equation}

At $v=v_-=1$, this reduces to $e_-\ne \frac{1}{\Gamma(\Gamma+1)}$,
or, using \eqref{e-} to 
$$
v_+ \frac{(\Gamma +2)(1-v_+)} {2}> 0,
$$
except in the characteristic case $v_+=1$, while 
$$
\trace M= \mu^{-1}(1-\Gamma e_- + (\nu/\mu)^{-1})\ge 
\mu^{-1}(1- \frac{\Gamma +2}{2(\Gamma+1)} + (\nu/\mu)^{-1})
\ge \nu>0.
$$
At $v=v_+$, \eqref{hypcon2} reduces, using \eqref{e-}, to
$$
\det M= (v_+/\mu \nu)\Big(
v_+ \frac{(\Gamma +2)(v_+-1)} {2} \Big)< 0,
$$
except in the characteristic case $v_+=1$.
Thus, for $v_+$ bounded from zero,
{\it hyperbolicity fails at $x=\pm\infty$
only in the characteristic case $v_-=v_+=1$}.

Next, let us recall the existence proof of \cite{Gi},
which proceeds by the observation that isoclines
$v'=0$ and $e'=0$ obtained by setting the right-hand sides
of \eqref{eq:ideal_profile1} and \eqref{eq:ideal_profile2} to zero
bound a convex lens-shaped region whose vertices are the
unique equilibria $U_\pm$, that is invariant under the forward flow
of \eqref{eq:ideal_profile1}--\eqref{eq:ideal_profile2}, and
into which enters the unstable manifold of $U_-$;
recall that $\det M<0$ at $-\infty$, hence there is a one-dimensional
unstable manifold.
It follows that the unstable manifold must approach attractor $U_+$
as $x\to +\infty$, determining the unique connecting orbit describing
the profile.

By the above-demonstrated hyperbolicity,
this argument extends also to the case $v_+=v_*$ ($e_-=0$),
yielding at once existence and uniform boundedness of profiles across
the whole parameter range (recall that
$(v,e)_\pm$ are uniformly bounded, 
by the Rankine--Hugoniot analysis of the previous section),
in particular for the limiting profile equations at $v_+=v_*$ of
\begin{align}
v'&=\frac{1}{\mu}\left[v(v-1)+\Gamma e\right],\label{eq:lim_profile1}\\
e'&=\frac{v}{\nu}\left[-\frac{(v-1)^2}{2}+e\right].
\label{eq:lim_profile2}
\end{align}

%that, (i) away from $v_+=0$ or the characteristic limit $v_+=1$, 
%we have uniform exponential convergence of $(\hat v, \hat u, \hat e)$ 
%to its endstates, and (ii) We have uniform convergence as $v_+\to v_*$ to
%a limiting profile $(\hat v^0, \hat u^0, \hat e^0)$ satisfying
%\eqref{eq:ideal_profile1}--\eqref{eq:ideal_profile2} with $v_+=v_*$, or

Collecting facts, we have the following key result.
%, as a consequence
%of which profiles vary smoothly over the range of investigation
%$v_+\in [1-\epsilon, v_*]$, $\epsilon >0$ corresponding to
%the strong-shock limit, with uniform exponential decay.

\begin{lemma}\label{profdecay}
For $\Gamma$ bounded and bounded away from the nonphysical limit $\Gamma=0$, 
$\mu, \nu$ bounded and bounded from zero, and $v_+$ bounded away from 
the characteristic limit $v_-=1$,
profiles $\hat U=(\hat v,\hat u,\hat e)^T$ of the rescaled
equations \eqref{eq:ideal_profile1}--\eqref{eq:ideal_profile2}
exist for all $1\ge v_+\ge v_*$,
decaying exponentially to their endstates $U_\pm$
as $x\to \pm \infty$, uniformly in $\Gamma, v_+, \mu, \nu$.
\end{lemma}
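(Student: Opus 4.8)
The plan is to prove Lemma~\ref{profdecay} in two stages: first establishing \emph{existence} and \emph{uniform boundedness} of the connecting orbit for all $1\ge v_+\ge v_*$ (including the limiting case $v_+=v_*$), and then upgrading to \emph{uniform exponential decay} via a uniform hyperbolicity estimate at the endstates. Most of the analytical content has in fact already been assembled in the discussion preceding the statement, so the proof is largely a matter of collecting and invoking these facts carefully while tracking uniformity in the parameters $(\Gamma,v_+,\mu,\nu)$.

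For existence and boundedness, I would first note that by the Rankine--Hugoniot analysis of Section~\ref{RH}, the endstates $(v,e)_\pm$ are given by the explicit formulas \eqref{e-}--\eqref{e+} and are uniformly bounded above and bounded away from $0$ in $v$ for $\Gamma\in[\Gamma_{min},\Gamma_{max}]$ and $v_+\in[v_*(\Gamma),1]$ bounded away from the characteristic limit; in particular $v_\pm$ stay in a fixed compact subinterval of $(0,\infty)$. Then I would reproduce Gilbarg's isocline argument \cite{Gi}: the curves $v'=0$ and $e'=0$ obtained from \eqref{eq:ideal_profile1}--\eqref{eq:ideal_profile2} (or \eqref{eq:lim_profile1}--\eqref{eq:lim_profile2} in the limit) bound a compact, convex, lens-shaped region $\mathcal{R}$ with vertices $U_\pm$, which is forward-invariant under the profile flow, and into which the one-dimensional unstable manifold of $U_-$ enters (the unstable manifold being one-dimensional because $\det M<0$ at $-\infty$, as computed in \eqref{hypcon}). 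Since $\mathcal{R}$ is compact and $U_+$ is the unique equilibrium in $\overline{\mathcal{R}}\setminus\{U_-\}$ and is an attractor there, the unstable orbit must limit to $U_+$ as $x\to+\infty$, giving the connecting profile, which automatically lies in $\mathcal{R}$ and hence is bounded uniformly. The key point I would emphasize is that the right-hand sides of \eqref{eq:ideal_profile1}--\eqref{eq:ideal_profile2} are polynomial (Remark~\ref{goodprof}), so this entire construction extends continuously to the closed parameter range including $v_+=v_*$, where $e_-=0$.

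For uniform exponential decay, I would invoke the hyperbolicity computation already carried out: linearizing at either endstate gives the $2\times2$ matrix $M$ of \eqref{linprof}, and the analysis following \eqref{hypcon2} shows that $\det M<0$ at $v=v_-=1$ fails only when $v_+=1$, while at $v=v_+$ one has $\det M<0$ except when $v_+=1$; since $\det M\ne 0$ forces the eigenvalues of $M$ to be real of opposite sign (a saddle) when $\det M<0$, and the trace estimate $\trace M\ge\nu>0$ handles the remaining case, there are no purely imaginary linearized rates at $\pm\infty$. Because $M$ depends continuously on $(\Gamma,v_+,\mu,\nu)$ over the compact parameter range and its spectral gap is bounded below uniformly there (the eigenvalues being given by an explicit, continuous formula that stays away from the imaginary axis by the strict inequalities above, which persist on the compact set once $v_+$ is bounded away from $1$), standard stable/unstable manifold theory gives exponential convergence of the profile to $U_\pm$ with rate and constant uniform in all parameters.

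The main obstacle, and the place where care is genuinely required rather than routine, is the uniformity of the hyperbolicity gap and of Gilbarg's geometric construction \emph{up to and including} the degenerate endpoint $v_+=v_*$ (equivalently $e_-=0$), together with uniformity as $v_+\to 1$ is approached but not reached. The strict inequalities $\det M<0$ established pointwise must be promoted to a \emph{uniform} bound $\det M\le -c<0$ (or $|\trace M|\ge c>0$ in the remaining case) over the compact parameter set $\{\Gamma_{min}\le\Gamma\le\Gamma_{max},\ v_*(\Gamma)\le v_+\le 1-\delta,\ \nu_{min}\le\nu\le\nu_{max},\ \mu=1\}$; this follows from continuity and compactness once one checks the inequalities are strict at every point of the closed set, including the limiting profile equations \eqref{eq:lim_profile1}--\eqref{eq:lim_profile2}, but verifying there is no hidden degeneration at $v_+=v_*$ (where $e_-\to 0$ and the region $\mathcal{R}$ may become thin) is the crux. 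I expect this to go through because the polynomial structure keeps everything smooth and the endstates remain nondegenerate and well-separated in the relevant regime, but it is the step that warrants explicit attention.
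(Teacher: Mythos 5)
Your proposal is correct and follows essentially the same route as the paper's (very terse) proof, which simply collects Gilbarg's isocline construction, the hyperbolicity computation of $\det M$ and $\trace M$, and smooth dependence on parameters together with compactness of the parameter range. One small sign slip is worth flagging: at $v=v_-=1$ the computation in \eqref{hypcon}--\eqref{hypcon2} actually gives $\det M=\frac{(\Gamma+2)(1-v_+)}{2\mu\nu}>0$ (so $U_-$ is a source and hyperbolicity there is delivered by $\trace M>0$, not by $\det M<0$, which holds only at $v_+$); you do invoke the trace estimate as the fallback, so your argument still closes.
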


\begin{proof}
Existence, boundedness, and exponential decay of individual profiles 
follow from the discussion above.
Uniform bounds follow by smooth dependence on parameters together with
compactness of the parameter range.
\end{proof}

\section{Evans function formulation}\label{formulation}
From now on, we specialize to the ideal gas case,
%TODO: put back in the mu's, to help with temp. dependent case...
setting without loss of generality $\mu=1$. 

\subsection{Linearized integrated eigenvalue equations}\label{linint}
Defining integrated variables
\[
\tilde v:= \int_{-\infty}^x v\,dy,\;\; \tilde u:=\int_{-\infty}^x u 
\,dy, \;\;\tilde E:=\int_{-\infty}^x E\,dy,
\]
we note that the rescaled equations \eqref{eq:rescale_mass}-- 
\eqref{eq:rescale_energy} can be written in terms of the integrated  
variables in the form
\begin{equation}\label{integrated}
\begin{aligned}
\tilde v_t+\tilde v_x-\tilde u_x&=0,\\
\tilde u_t+\tilde u_x+\frac{\Gamma (\tilde E_x-\frac{\tilde u_x^2} 
{2})}{\tilde v_x}&=\frac{\tilde u_{xx}}{\tilde v_x},\\
\tilde E_t+\tilde E_x+
\frac{\Gamma \tilde u_x (\tilde E_x-\frac{\tilde u_x^2}{2})}{\tilde  
v_x}&=
\frac{ \tilde u_x\tilde u_{xx}}{\tilde v_x}+ \frac{\nu (\tilde E_x- 
\frac{\tilde u_x^2}{2})_x}{\tilde v_x}.
\end{aligned}
\end{equation}
%where we have replaced $v,u,E$ by
%$ \int_{-\infty}^x vdy, \int_{-\infty}^x udy, \int_{-\infty}^x Edy$.

The integrated viscous profile,
\[
\tilde{\hat{v}}:=\int_{-\infty}^x \hat v\,dy, \;\;\tilde{\hat{u}}:= 
\int_{-\infty}^x \hat u\,dy,\;\; \tilde{\hat{E}}:=\int_{-\infty}^x  
\hat E\,dy,
\]
is a stationary solution of \eqref{integrated}. Then we write
\[
\tilde v=\tilde{\hat{v}}+v,\;\; \tilde u=\tilde{\hat{u}}+u,\;\;\tilde  
E=\tilde{\hat{E}}+E,
\]
and we linearize \eqref{integrated} about the integrated profile. By  
an abuse of notation, we
denote the perturbation by $v$, $u$, and $E$. Note also that the  
integrated profile always appears under an $x$-derivative; this  
explains the appearance of ``hats'' in place of ``tilde-hats'' in the  
expression below. Finally, we use the relationship $\hat E=\hat e- 
\frac{\hat u^2}{2}$ to simplify some of the expressions.
We obtain the linearized integrated equations
\begin{equation}\label{linearized}
\begin{aligned}
v_t+v_x-u_x&=0,\\
u_t+u_x+\frac{\Gamma (E_x-\hat u u_x)}{\hat v}
-\frac{\Gamma \hat e}{\hat v^2} v_x
&=\frac{u_{xx}}{\hat v}-
\frac{\hat u_{x}}{\hat v^2}v_x,\\
E_t+E_x+
\frac{\Gamma \hat u (E_x-\hat u u_x)}{\hat v}
+\frac{\Gamma \hat e}{\hat v} u_x
-\frac{\Gamma \hat e\hat u}{\hat v^2} v_x
&=
\frac{ \hat uu_{xx}}{\hat v}
+\frac{ \hat u_x }{\hat v} u_x
-\frac{ \hat u\hat u_x }{\hat v^2} v_x
\\
& + \frac{\nu (E_x-\hat u u_x)_x}{\hat v}
- \frac{\nu \hat e_x}{\hat v^2}v_x.
\end{aligned}
\end{equation}
Defining $\epsilon:=E-\hat u u$, subtracting $\hat u$ times
the second equation from the third, and rearranging,
we obtain, finally, the linearized integrated eigenvalue equations:
\begin{equation}\label{evalue}
\begin{aligned}
&\lambda v+v'-u'=0,\\
&\lambda u+u'+\frac{\Gamma}{\hat v}\eps'
+\frac{\Gamma \hat u_x}{\hat v}u +\left[
-\frac{\Gamma \hat e}{\hat v^2}+
\frac{\hat u_{x}}{\hat v^2}\right]v'
=\frac{u''}{\hat v},\\
&\lambda \eps+\eps'+ \left[\hat u_x-\frac{\nu\hat u_{xx}}{\hat v} 
\right] u
+\left[\frac{\Gamma\hat e}{\hat v}-(\nu+1)\frac{\hat u_x}{\hat v} 
\right]u'+\left[\frac{\nu \hat e_x}{\hat v^2}\right]v'
=\frac{\nu}{\hat v}\eps''.
\end{aligned}
\end{equation}

\subsection{Expression as a first-order system}
Following \cite{BHRZ}, we may express \eqref{evalue} concisely
as a first-order system
\begin{equation}\label{firstorder}
W' = A(x,\lambda) W,
\end{equation}
\begin{equation} \label{evans_ode}
A(x,\lambda) =
\begin{pmatrix}
0 & 1 & 0 & 0 & 0 \\
\lambda\nu^{-1}\hat v & \nu^{-1}\hat v & \nu^{-1}\hat  v\hat u_x-\hat  
u_{xx} & \lambda g(\hat U) & g(\hat U)- h(\hat U) \\
0 & 0&  0 & \lambda & 1\\
0& 0& 0 & 0 & 1\\
0 & \Gamma & \lambda\hat v+\Gamma\hat u_x & \lambda\hat v & f(\hat U)- \lambda
\end{pmatrix},
\end{equation}
\begin{equation}
W = ( \eps ,\eps',u,v,v')^T, \quad \prime  
= \frac{\dif}{\dif x},
\end{equation}
where, using $\hat u_x=\hat v_x$ and \eqref{eq:ideal_profile1} with $ 
\mu=1$,
\begin{equation}
\begin{aligned}
g(\hat U)&:=\nu^{-1}(\Gamma\hat e-(\nu+1)\hat u_x) \\
	&=\nu^{-1}\Gamma\hat e--\frac{\nu+1}{\nu}\hat v_x \\
	&=\nu^{-1}\Gamma\hat e-\frac{\nu+1}{\nu}\Big(\hat v(\hat v-1)+\Gamma  
(\hat e-\hat ve_-)\Big)
\label{eq:g_eq}
\end{aligned}
\end{equation}
\begin{equation}
f(\hat U):=\frac{\hat u_x-\Gamma \hat e}{\hat v}+\hat v=
2\hat v-1-\Gamma e_-.
\label{eq:f_eq}
\end{equation}
\begin{equation}
h(\hat U):=-\frac{\hat e_x}{\hat v}= 
-\nu^{-1}\left(-\frac{(\hat v-1)^2}{2}+(\hat e-e_-)+(\hat v-1)\Gamma e_-\right).
\label{eq:h_eq}
\end{equation}

%TODO: irrelevant (and wrong), delete-K
%We note that 
%$$(g,f,h)(U_-)=(0,1,0), \quad
%(g,f,h)(U_+)=(\Gamma e_+/\nu,-\Gamma e_+/v_+,0)
%$$
%(hence $g_+>0$, $f_+<0$).

\begin{remark}
Remarkably, similarly as for the profile equations,
the entries of $A$ are polynomial in $(\hat v, \hat u, \hat e)$.
Thus, both profile and linearized eigenvalue equations are
perfectly well-behaved for any compact subset of $\Gamma>0$.
\end{remark}

\subsection{Consistent splitting}\label{splitting}
Denote by 
$A_\pm(\lambda):=\lim_{x\to \pm \infty} A(x,\lambda)$ 
the limiting coefficient matrices at $x=\pm \infty$.
(These limits exist by exponential convergence of
profiles $\hat U$, Lemma \ref{profdecay}.)
Denote by $S_\pm$ and $U_\pm$ the stable and unstable 
subspaces of $A_\pm$.

\begin{definition}
Following \cite{AGJ}, we say that \eqref{firstorder} exhibits
consistent splitting on a given $\lambda$-domain if
$A_\pm$ are hyperbolic, with $\dim S_+$ and $\dim U_-$
constant and summing to the dimension of the full space 
(in this case $5$).
\end{definition}

By analytic dependence of $A$ on $\lambda$ and 
standard matrix perturbation theory, $S^+$ and
$U_-$ are analytic on any 
domain for which consistent splitting holds.

\begin{lemma}\label{consistent}
For all $\Gamma >0$, $1 \ge v_+>v_*$, \eqref{firstorder}--\eqref{evans_ode}
exhibit consistent splitting on $\{\Re \lambda \ge 0\}\setminus \{0\}$,
with $\dim S_+=3$ and $\dim U_-=2$.
Moreover, subspaces $S_+$ and $U_-$, along with their associated
spectral projections, extend analytically in $\lambda$ and 
continuously in $\Gamma, \nu, v_-$, to 
$\{\Re \lambda \ge 0\}$ for $\Gamma >0$, $\nu>0$, and $1>v_+ \ge v_*$.
\end{lemma}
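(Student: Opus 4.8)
The plan is to verify consistent splitting first at the level of the limiting matrices $A_\pm(\lambda)$ and then to handle the boundary $\Re\lambda = 0$ and the strong-shock limit $v_+ = v_*$ by a perturbation/continuation argument. First I would compute $A_\pm(\lambda) = \lim_{x\to\pm\infty} A(x,\lambda)$ explicitly, using that the profile derivatives $\hat u_x, \hat u_{xx}, \hat e_x$ vanish at $\pm\infty$ and that $\hat v \to v_\pm$, $\hat e \to e_\pm$, so that $g, f, h$ reduce to their constant endstate values. Because $A(x,\lambda)$ has polynomial entries in $(\hat v, \hat u, \hat e)$ (by the preceding Remarks) and the endstates are uniformly bounded by the Rankine--Hugoniot analysis of Section~\ref{RH}, the matrices $A_\pm(\lambda)$ depend continuously on $(\Gamma, \nu, v_+)$ all the way to the limiting values, and analytically on $\lambda$.

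Next, for $\Re\lambda > 0$ I would check hyperbolicity of $A_\pm(\lambda)$ and count dimensions. The natural route is to exploit the block structure inherited from \eqref{evalue}: the first-order system comes from a parabolic system whose symbol, at the constant endstates, is governed by the dispersion relation obtained by substituting $W \sim e^{\zeta x}$. For a viscous conservation law linearized about a constant state, the standard argument (cf.\ \cite{AGJ,Z.3}) shows that for $\Re\lambda > 0$ the roots $\zeta$ of the characteristic polynomial of $A_\pm(\lambda)$ split cleanly into those with $\Re\zeta > 0$ and those with $\Re\zeta < 0$, with no roots on the imaginary axis, and that the count is determined by the number of incoming/outgoing characteristics of the hyperbolic (convection) part together with the parabolic (diffusion) part; here the convection matrix has all eigenvalues of one sign at each endstate (the Lax shock condition / the explicit endstate computation), yielding $\dim S_+ = 3$ and $\dim U_- = 2$, which sum to $5$. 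I would carry this out by a direct computation with the $5\times 5$ characteristic polynomial, tracking the behavior of its roots as $|\zeta|\to\infty$ (dominated by the parabolic terms, giving $\zeta \sim \pm\sqrt{\lambda\,(\cdot)}$ in the $\eps$ and $u$ components) and as $\lambda \to 0$ (where one root approaches $0$ from each side along the real axis because of the conservation-law structure of the integrated variables). The key point making $\lambda = 0$ itself excluded is exactly that a zero root appears there; away from it the splitting is consistent.

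For the extension to $\Re\lambda = 0$, $\lambda \neq 0$, and for continuity in the parameters up to $v_+ = v_*$, I would invoke that $A_\pm(\lambda)$ remains hyperbolic on $\{\Re\lambda \ge 0\}\setminus\{0\}$ — this is the genuinely substantive sub-claim and follows from the same dispersion-relation analysis, since on the imaginary axis away from the origin the only possible obstruction is a purely imaginary root $\zeta = i\xi$, which would force $\lambda$ to lie on the curve traced by the (hyperbolic--parabolic) symbol, and strict dissipativity of the viscosity/heat-conduction terms together with $\Gamma > 0$ rules this out except at $\lambda = 0$. Granting hyperbolicity, standard matrix perturbation theory (spectral projections depending analytically on $\lambda$ and continuously on matrix entries, as in Kato) gives analytic dependence of $S_+, U_-$ and their projections in $\lambda$ on $\{\Re\lambda\ge 0\}\setminus\{0\}$ and continuous dependence in $(\Gamma,\nu,v_-)$; since the endstate matrices extend continuously to $v_+ = v_*$ (Lemma~\ref{profdecay} and the Rankine--Hugoniot formulas \eqref{e-}--\eqref{e+} are smooth there) and remain hyperbolic there by the same argument, the subspaces extend continuously as well.

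The main obstacle I expect is the verification of hyperbolicity of $A_\pm(\lambda)$ on the closed half-plane minus the origin — i.e., showing the characteristic polynomial has no purely imaginary roots there — since this requires a genuine computation exploiting the specific dissipative structure (positivity of $\nu$, $\mu$, and $\Gamma$) rather than a soft argument; the dimension count and the perturbation-theoretic conclusions are then routine. A secondary subtlety is the behavior near $\lambda = 0$: one must confirm that exactly one root crosses the imaginary axis from each side (not more), which uses the rank/structure of the convection matrix of the integrated system, and explains why $\lambda = 0$ is excluded from the consistent-splitting region while everything else in $\{\Re\lambda\ge 0\}$ is included.
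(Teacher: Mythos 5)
There is a genuine gap in the proposal, concentrated exactly where the lemma is nontrivial: the boundary case $v_+ = v_*$, $e_- = 0$. You write that you ``would invoke that $A_\pm(\lambda)$ remains hyperbolic on $\{\Re\lambda\ge 0\}\setminus\{0\}$'' and that ``strict dissipativity of the viscosity/heat-conduction terms together with $\Gamma>0$ rules [purely imaginary roots] out except at $\lambda=0$,'' then use this hyperbolicity to pass both to $\Re\lambda=0$ and to $v_+=v_*$. This is false at $v_+=v_*$, $x=-\infty$: there $e_-=0$ lies on the boundary of the physical domain, the Kawashima genuine-coupling/dissipativity condition fails, and the limiting matrix $A^*_-(\lambda)$ possesses the eigenvalue $\mu\equiv-\lambda$ (a decoupled hyperbolic mode), which is purely imaginary for any purely imaginary $\lambda$. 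The paper flags exactly this in Remark~\ref{+rmk}. Consequently, the lemma carefully asserts consistent splitting only for $v_+>v_*$ and claims \emph{only} that the subspaces $S_+,U_-$ and their projections extend continuously to $v_+=v_*$; it does not claim hyperbolicity there. Your argument conflates these two statements.

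The paper's actual mechanism at $x=-\infty$, $v_+=v_*$ is different from what you propose: $A^*_-$ is lower block-triangular, splitting into a $2\times 2$ scalar convection--diffusion block (eigenvalues $\tfrac{1\pm\sqrt{1+4\lambda}}{2}$) and the $3\times 3$ isentropic block of \cite{HLZ} (eigenvalues $-\lambda,\ \tfrac{1\pm\sqrt{1+4\lambda}}{2}$). Hyperbolicity does fail on the imaginary axis, but the two-dimensional unstable group (one eigenvalue from each block, both equal to $\tfrac{1+\sqrt{1+4\lambda}}{2}$) stays uniformly spectrally separated from the remaining three eigenvalues on all of $\Re\lambda\ge 0$, and it is this separation --- not hyperbolicity --- that yields analytic/continuous extension of $U_-$. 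Your generic dispersion-relation outline never establishes this, and indeed a ``soft'' dissipativity argument cannot, because the hypothesis it needs is precisely what breaks. Finally, at $x=+\infty$, $v_+=v_*$, the abstract results of \cite{MZ.3} still apply, but one must verify the noncharacteristic condition --- the paper computes $\Delta_{+\infty}=-1-2\Gamma<0$ explicitly; your proposal replaces this with an appeal to a Lax-type sign condition which is not the same check. To repair the proposal, treat the interior $v_+>v_*$ via \cite{MZ.3} as you suggest, but for $v_+=v_*$ you must abandon hyperbolicity at $-\infty$, exploit the block-triangular form and explicit spectral separation of the unstable group there, and separately verify noncharacteristicity at $+\infty$.
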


\begin{proof}
Consistent splitting and analytic extension to $\lambda=0$
follow by the general results of \cite{MZ.3}, except
at the strong-shock limit $v_+=v_*$, where
\begin{equation} \label{A-}
A^*_-(\lambda) =
\begin{pmatrix}
0 & 1 & 0 & 0 & 0 \\
\lambda\nu^{-1} & \nu^{-1} & 0 & 0 &
0 \\
0 & 0&  0 & \lambda & 1\\
0& 0& 0 & 0 & 1\\
0 & \Gamma & \lambda  & \lambda  & 1-\lambda
\end{pmatrix}
\end{equation}
and
\begin{equation} \label{A+}
A^*_+(\lambda) =
\begin{pmatrix}
0 & 1 & 0 & 0 & 0 \\
\lambda\nu^{-1}v_* & \nu^{-1}v_* & 0 & \lambda g(U_+) &
g(U_+) \\
0 & 0&  0 & \lambda & 1\\
0& 0& 0 & 0 & 1\\
0 & \Gamma & \lambda v_* & \lambda v_* & f(U_+)-\lambda
\end{pmatrix},
\end{equation}
where $g(U_+)=\frac{(1-v_*)}{\nu}(v_*- (\nu -1))$ and
$f(U_+)= \frac{\Gamma -2}{\Gamma +2}$.

The matrix $A^*_-$ is lower block-triangular, with diagonal blocks
$$
\begin{pmatrix}
0 & 1 \\
\lambda\nu^{-1} & \nu^{-1} \\
\end{pmatrix},
\qquad
\begin{pmatrix}
0 & \lambda & 1\\
0 & 0 & 1\\
\lambda  & \lambda  & 1-\lambda
\end{pmatrix}
$$
corresponding respectively to the scalar convection-diffusion equation
$$
e_t +e_x= \nu e_{xx}
$$
and the isentropic case treated in \cite{HLZ}, 
The first has eigenvalues $\mu= \frac{1\pm \sqrt{1+4\lambda}}{2}$,
so satisfies consistent splitting on 
$\{\Re \lambda\ge 0\}\setminus \{0\}$, with analytic continuation
(since eigenvalues remain separated) to $\Re \lambda \ge 0$.
The second, as observed in \cite{HLZ},
has eigenvalues $\mu=-\lambda, \frac{1\pm \sqrt{1+4\lambda}}{2}$,
hence likewise satisfies consistent splitting on $\{\Re \lambda \ge 0\}
\setminus \{0\}$ and (since the single unstable eigenvalue
remains separated from the two stable eigenvalues) continues analytically
to $\Re \lambda \ge 0$.
Indeed, the unstable manifold has dimension two for all $\Re \lambda \ge 0$,
hence is analytic on that domain.
This verifies the proposition at $x=-\infty$ by direct calculation.

At $x=+\infty$, the computation is more difficult.  Here, we
refer instead to the abstract results of \cite{MZ.3}, which
assert that hyperbolic--parabolic systems of the type treated
here, including the limiting case, at least for $v_*\ne 0$, 
exhibit consistent splitting on $\{\Re \lambda \ge 0\}\setminus \{0\}$,
with analytic extension to $\Re \lambda \ge 0$,
so long as the shock is noncharacteristic, i.e., the flux Jacobian
associated with the first-order part of the equations
have nonvanishing determinants at $x=\pm\infty$.
These may be computed in any coordinates, in particular
$(v,u,\epsilon)$.
Neglecting terms originating from diffusion, i.e., including only
first-order terms from the left-hand side, we obtain from
\eqref{evalue} the flux Jacobian
$$
\begin{pmatrix}
1 & -1 & 0\\
-\Gamma e/v^2 & 1 & \Gamma/v\\
0 & \Gamma e/v & 1
\end{pmatrix},
$$
which has determinant
$ \Delta=1 - \Gamma^2e/v^2- \Gamma e/v^2 $,
giving for $v_+=v_*$ ($e_-=0$) that
$ \Delta_{-\infty}=1> 0$ and, calculating at $v_+=v_*$ 
that
$$
\Gamma e_+/v_*^2=2,
\quad
\Delta_{+\infty}=-1-2\Gamma < 0.
$$
Thus, we may conclude by
the general results of \cite{MZ.3} that consistent splitting
holds at both $x=\pm \infty$ on $\{\Re \lambda \ge 0 \}
\setminus \{0\}$ for  $1\ge v_+\ge v_*$, with analytic
extension to $\Re \lambda \ge 0$.
\end{proof}

\begin{remark}\label{+rmk}
We note that the results of \cite{MZ.3} do not apply at $x=-\infty$,
$v_+=v_*$, where $e_-=0$ leaves the physical domain.
Specifically, at this value the genuine coupling condition of Kawashima
\cite{KSh}, hence the dissipativity condition of \cite{MZ.3} fails, and 
so we cannot conclude consistent splitting;
indeed, the eigenvalue $\mu \equiv \lambda$ 
(corresponding to the decoupled hyperbolic mode)
is pure imaginary for any pure imaginary $\lambda$.
\end{remark}

\subsection{Construction of the Evans function}\label{evans}
We now construct the Evans function associated with 
\eqref{firstorder}--\eqref{evans_ode}, following the
approach of \cite{MZ.3,PZ}.

\begin{lemma}\label{cbasis}
There exist bases $V^-=(V_1^-, V_2^-)(\lambda)$, 
$V^+=(V_3^+, V_4^+, V_5^+)(\lambda)$ of
$U_-(\lambda)$ and $S_+(\lambda)$, extending analytically in $\lambda$
and continuously in $\Gamma, \nu, v_-$ to
$\{\Re \lambda \ge 0\}$ for $\Gamma>0$, $\nu>0$, and $1>v_+\ge v_*$, determined
by Kato's ODE
\begin{equation}\label{kato}
V'=(PP'-P'P)V,
\end{equation}
where $P$ denotes the spectral projection onto $S_+$, $U_-$, respectively,
and $'$ denotes $d/d\lambda$.
\end{lemma}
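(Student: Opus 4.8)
The plan is to carry out the classical Kato construction of an analytically varying frame for a family of spectral subspaces; all the hypotheses it needs have already been furnished by Lemma \ref{consistent}. There, the spectral projection $P=P_{U_-}(\lambda)$ onto $U_-(\lambda)$ and the spectral projection $P=P_{S_+}(\lambda)$ onto $S_+(\lambda)$ are shown to extend analytically in $\lambda$ to $\{\Re\lambda\ge 0\}$ and continuously in $(\Gamma,\nu,v_+)$, with there the constant ranks $2$ and $3$. Accordingly, the coefficient $PP'-P'P$ of Kato's ODE \eqref{kato} is analytic in $\lambda$ on $\{\Re\lambda\ge 0\}$, and since that set is simply connected it has a single-valued analytic fundamental solution $U(\lambda,\lambda_0)$ there.

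First I would fix a base point $\lambda_0\in\{\Re\lambda\ge 0\}$ together with initial frames $V^-(\lambda_0)$ of $U_-(\lambda_0)$ and $V^+(\lambda_0)$ of $S_+(\lambda_0)$ depending continuously on $(\Gamma,\nu,v_+)$, and then \emph{define} $V^\pm(\lambda):=U(\lambda,\lambda_0)\,V^\pm(\lambda_0)$, i.e.\ the solution of \eqref{kato} through that data; this is analytic in $\lambda$ by the previous paragraph. That $V^\pm(\lambda)$ is actually a basis of $U_-(\lambda)$, resp.\ $S_+(\lambda)$, is the standard Kato fact: differentiating the relation $PV=V$ along \eqref{kato} and using $P^2=P$ shows that $(I-P)V$ satisfies a homogeneous linear ODE with zero data at $\lambda_0$, hence vanishes identically, so $V(\lambda)$ lies in the range of $P(\lambda)$; and since $U(\lambda,\lambda_0)$ is invertible the columns of $V^\pm(\lambda)$ remain linearly independent, hence span the full (respectively) $2$- or $3$-dimensional subspace. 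Continuity of $V^\pm$ in $(\Gamma,\nu,v_+)$, locally uniform in $\lambda$, then follows from continuous dependence of solutions of the linear ODE \eqref{kato} on its coefficient and initial data: the coefficient $PP'-P'P$ is parameter-continuous by Lemma \ref{consistent} and the initial frames by construction (cf.\ the basic convergence lemma of \cite{PZ}). Combining, $V^\pm(\lambda)$ is analytic in $\lambda$ and continuous in $(\Gamma,\nu,v_+)$ on the stated range, as claimed.

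The only step that is not purely mechanical --- and hence the main (if mild) obstacle --- is the \emph{global} parameter-continuous choice of the initial frames $V^\pm(\lambda_0)$, as opposed to a mere pointwise choice. Locally in $(\Gamma,\nu,v_+)$ such a choice always exists (apply the nearby projection to a fixed set of vectors in general position), and to globalize one notes that the admissible parameter region, $\{\Gamma>0,\ \nu>0,\ v_*(\Gamma)\le v_+<1\}$ with $v_*(\Gamma)=\Gamma/(\Gamma+2)$, fibers over the $(\Gamma,\nu)$-rectangle with interval fibers and is therefore contractible; hence the subspace family over it admits a global continuous frame. (Alternatively, on any connected subregion one may take $V^\pm(\lambda_0)=P(\lambda_0)(w_1,\dots)$ for a generic fixed basis $w_1,\dots,w_5$ of $\mathbb{C}^5$.) Since for the Evans-function construction the frames $V^\pm$ are needed only up to a nowhere-vanishing factor analytic in $\lambda$ and continuous in $(\Gamma,\nu,v_+)$, any consistent such choice is equally good; with it fixed, the rest of the argument is the textbook one above.
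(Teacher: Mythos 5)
Your proof is correct and follows essentially the same route as the paper, which simply cites Lemma~\ref{consistent} together with Kato's standard construction on a simply connected domain; you have merely unpacked that citation into the familiar ODE argument (invariance of $\mathrm{Ran}\,P$ under the flow of \eqref{kato}, invertibility of the propagator, parameter-continuity of solutions) and added a small but sensible remark on the global choice of initial frames.
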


\begin{proof}
This follows from Lemma \ref{consistent} by
a standard result of Kato \cite{Kato}, valid on any
simply connected set in $\lambda$ on which $P$ remains analytic.
\end{proof}

\begin{lemma}\label{basis}
There exist bases 
$$
W^-=(W_1^-, W_2^-)(\lambda), 
\quad
W^+=(W_3^+, W_4^+, W_5^+)(\lambda)
$$
of the unstable manifold at $-\infty$ and the stable manifold at $x=+\infty$
of \eqref{firstorder}--\eqref{evans_ode},
asymptotic to $e^{A_-x}V^-$ and $e^{A_+x}V^+$, respectively,
as $x\to \mp\infty$,
and extending analytically in $\lambda$ and continuously in 
$\Gamma, \nu, v_+$ to $\{\Re \lambda \ge 0\}$ 
for $\Gamma>0$, $\nu>0$, and $1>v_+\ge v_*$.
\end{lemma}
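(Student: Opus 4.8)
The plan is to construct the bases $W^\pm$ by the standard conjugation (or ``gap'') lemma argument, upgrading the analytic bases $V^\pm$ of the limiting subspaces $U_-(\lambda)$, $S_+(\lambda)$ supplied by Lemma \ref{cbasis} to genuine solutions of the variable-coefficient system \eqref{firstorder}--\eqref{evans_ode} that decay at the appropriate infinity. Fix the endstate at $x=+\infty$; the argument at $x=-\infty$ is symmetric after reversing $x$. Write $A(x,\lambda)=A_+(\lambda)+\Theta(x,\lambda)$, where, by Lemma \ref{profdecay}, $\Theta(x,\lambda)=A(x,\lambda)-A_+(\lambda)\to 0$ exponentially as $x\to+\infty$, uniformly for $(\Gamma,\nu,v_+)$ in the compact parameter range and uniformly for $\lambda$ in compact subsets of $\{\Re\lambda\ge 0\}$, since $\hat U(x)-U_+$ decays exponentially with rate bounded below and the entries of $A$ are polynomial in $(\hat v,\hat u,\hat e)$. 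One then seeks solutions of the form $W_j^+(x,\lambda)=e^{A_+x}\bigl(V_j^+(\lambda)+Z_j(x,\lambda)\bigr)$ with $Z_j\to 0$ as $x\to+\infty$; substitution reduces this to the integral fixed-point equation
\begin{equation*}
Z_j(x,\lambda)=-\int_x^{+\infty} e^{-A_+(y-x)}\Theta(y,\lambda)e^{A_+(y-x)}\bigl(V_j^+(\lambda)+Z_j(y,\lambda)\bigr)\,\dif y,
\end{equation*}
interpreted with the projection onto the stable directions of $A_+$ built in so that the exponentially growing modes are suppressed.

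The key steps, in order, are: (i) record from Lemma \ref{consistent} that $A_+(\lambda)$ is hyperbolic on $\{\Re\lambda\ge 0\}\setminus\{0\}$ with an analytic spectral gap that, together with the analytic extension to $\lambda=0$, persists continuously in $(\Gamma,\nu,v_+)$ including the strong-shock endpoint $v_+=v_*$, so that the decay rate of $e^{A_+y}$ restricted to $S_+$ and the growth rate on the complement are separated by a uniform amount on compact $\lambda$-sets; (ii) choose a weight $e^{\eta x}$ with $\eta$ between the profile decay rate and the spectral gap, and run the contraction mapping on the Banach space of continuous $Z_j$ with $\|e^{\eta\cdot}Z_j\|_\infty<\infty$, using the exponential decay of $\Theta$ to make the integral operator a contraction; (iii) conclude existence and uniqueness of $Z_j$, hence of $W_j^+$, with $W_j^+\sim e^{A_+x}V_j^+$ as $x\to+\infty$ by construction; (iv) obtain analyticity in $\lambda$ and joint continuity in $(\Gamma,\nu,v_+)$ from the fact that the fixed point depends analytically/continuously on the data — $\Theta$, $A_+$, and $V_j^+$ all do, by Lemmas \ref{profdecay}, \ref{consistent}, \ref{cbasis} — and the contraction estimates are locally uniform, so the implicit-function/uniform-limit argument applies; (v) argue that $W_1^+,W_2^+$ (resp. $W_3^+,W_4^+,W_5^+$ at $-\infty$) are linearly independent since their leading terms $V_j^+$ are, and that they span the full stable (resp. unstable) manifold by a dimension count using $\dim S_+=3$, $\dim U_-=2$. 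This is exactly the basic convergence/tracking lemma of \cite{PZ,MZ.3}, and we may simply invoke it once the hypotheses are checked.

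The one point needing care — and the main obstacle — is uniformity of all estimates up to the strong-shock limit $v_+=v_*$, where the physical domain degenerates ($e_-\to 0$). Here I would lean on two facts already established: Lemma \ref{profdecay} gives exponential decay of profiles uniform in the parameters including $v_+=v_*$ (so $\Theta$ decays at a rate bounded below uniformly), and Lemma \ref{consistent} gives consistent splitting with analytic extension to $\Re\lambda\ge 0$ for $A_\pm^*$ at $v_+=v_*$ as well, so the spectral gap does not close. The subtlety flagged in Remark \ref{+rmk} — that $A_-^*$ at $v_+=v_*$ has the decoupled hyperbolic eigenvalue $\mu\equiv\lambda$, which is purely imaginary when $\lambda$ is — does not actually break the present construction, because the relevant subspace for the unstable manifold at $-\infty$ is the two-dimensional $U_-$, which by Lemma \ref{consistent} still extends analytically to $\Re\lambda\ge 0$; the marginal mode lies in the stable/center part and is simply not tracked. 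Thus the conjugation lemma applies with the stated rates, the constants depending only on the compact parameter box, and one obtains the asserted bases $W^\pm$ with the claimed analyticity and continuity.
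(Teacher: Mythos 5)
Your proposal is correct and follows essentially the same route as the paper, which simply invokes the conjugation lemma of \cite{MeZ} together with the uniform exponential convergence of $A$ to $A_\pm$ from Lemma \ref{profdecay}; you have merely unpacked that lemma into its contraction-mapping proof. (One cosmetic slip: in your step (v) the labels are swapped — $W^+_3,W^+_4,W^+_5$ span the three-dimensional stable manifold at $+\infty$ and $W^-_1,W^-_2$ the two-dimensional unstable manifold at $-\infty$.)
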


\begin{proof}
This follows, using the conjugation lemma
of \cite{MeZ}, by uniform exponential convergence of
$A$ to $A_\pm$ as $x\to \pm \infty$, Lemma \ref{profdecay}.
\end{proof}

\begin{definition}\label{evansdef}
The Evans function associated with \eqref{firstorder}--\eqref{evans_ode}
is defined as
\begin{equation}\label{evanseq}
D(\lambda):=\det(W^+, W^-)|_{x=0}.
\end{equation}
\end{definition}

\begin{proposition}\label{evanscont}
The Evans function $D(\cdot)$ is analytic in $\lambda$ and continuous
in $\Gamma, \nu, v_+$ on $\Re \lambda \ge 0$
and $\Gamma >0$, $\nu>0$, and $1>v_+\ge v_*$.  Moreover, on 
$\{\Re \lambda \ge 0\}\setminus\{0\}$, its zeros correspond in
location and multiplicity with eigenvalues of the integrated
linearized operator $\mathcal{L}$, or, equivalently with
solutions of \eqref{evalue} decaying at $x=\pm \infty$.
\end{proposition}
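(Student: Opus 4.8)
The plan is to establish the three assertions of Proposition \ref{evanscont}---analyticity/continuity, correspondence of zeros with eigenvalues, and matching of multiplicities---in turn, reducing each to a cited result or a standard argument.

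For analyticity in $\lambda$ and continuity in the parameters $(\Gamma,\nu,v_+)$, I would simply observe that $D(\lambda)$ is a determinant built from the basis vectors $W^\pm$, which by Lemma \ref{basis} are themselves analytic in $\lambda$ and continuous in $(\Gamma,\nu,v_+)$ on $\Re\lambda\ge 0$ for $\Gamma>0$, $\nu>0$, $1>v_+\ge v_*$; since the determinant is a polynomial (hence continuous and analytic) function of its arguments, the claim follows immediately. I would note explicitly that this includes the strong-shock limit $v_+=v_*$, which is the whole point of the compactification, and that the analyticity in $\lambda$ at $\lambda=0$ uses the fact (Lemma \ref{consistent}, Lemma \ref{cbasis}) that the stable/unstable subspaces and Kato bases extend analytically through $\lambda=0$ rather than merely to the punctured half-plane.

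For the correspondence of zeros with eigenvalues of $\CalL$ on $\{\Re\lambda\ge 0\}\setminus\{0\}$, the argument is the standard Evans-function dichotomy. On this domain consistent splitting holds (Lemma \ref{consistent}), so $\dim U_-=2$ and $\dim S_+=3$ sum to $5$; hence $W^-$ spans the space of solutions of \eqref{firstorder} decaying as $x\to-\infty$ and $W^+$ spans the space decaying as $x\to+\infty$. A solution of the first-order system \eqref{firstorder} decays at both $\pm\infty$ precisely when the two spans have nontrivial intersection at (say) $x=0$, which by the $5\times 5$ determinant structure is exactly the vanishing of $D(\lambda)=\det(W^+,W^-)|_{x=0}$. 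Unwinding the reduction from \eqref{evalue} to \eqref{firstorder}, a decaying solution $W$ of \eqref{firstorder} corresponds to a decaying solution $(v,u,\eps)$ of the linearized integrated eigenvalue equations \eqref{evalue}, i.e. to an eigenfunction of $\CalL$ with eigenvalue $\lambda$; conversely an eigenfunction gives a decaying $W$. Thus $D(\lambda)=0 \iff \lambda\in\mathrm{spec}\,\CalL$ on $\{\Re\lambda\ge 0\}\setminus\{0\}$.

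Finally, for the matching of multiplicities, I would invoke the standard result (see \cite{AGJ,GZ,Z.3,MZ.3}) that, for this class of Evans-function constructions with analytically-varying bases, the order of vanishing of $D$ at an eigenvalue $\lambda_0$ equals the algebraic multiplicity of $\lambda_0$ as an eigenvalue of $\CalL$; this rests on the fact that the Kato bases $V^\pm$, hence $W^\pm$, are chosen analytically (Lemmas \ref{cbasis}, \ref{basis}), so that no spurious zeros or cancellations are introduced, together with the identification of $D$ (up to a nonvanishing analytic factor) with a Wronskian-type quantity whose order of vanishing detects Jordan-block structure. I expect the only real subtlety---and the step deserving the most care---is ensuring that all of this machinery remains valid uniformly up to the strong-shock endpoint $v_+=v_*$: here one must confirm that the cited results of \cite{MZ.3} indeed apply at $v_+=v_*$ (which Lemma \ref{consistent} has already checked, modulo the harmless failure at $x=-\infty$ noted in Remark \ref{+rmk}, occurring only on a set not meeting $\{\Re\lambda\ge 0\}\setminus\{0\}$ in a way that affects the argument), and that the conjugation lemma of \cite{MeZ} and Kato's construction extend continuously to that limit, which is guaranteed by the uniform exponential decay of profiles in Lemma \ref{profdecay}.
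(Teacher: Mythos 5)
Your proposal is correct and follows essentially the same route as the paper: the analyticity and parameter-continuity of $D$ fall out directly from Lemma \ref{basis} (since a determinant of analytically/continuously varying columns inherits those properties), and the correspondence of zeros with eigenvalues---including multiplicity---is a standard Evans-function result on the region of consistent splitting; the paper cites Gardner--Jones \cite{GJ.1,GJ.2} for this where you cite \cite{AGJ,GZ,Z.3,MZ.3}, but these are equivalent invocations of the same classical fact. One small clarification worth noting: at $v_+=v_*$, Lemma \ref{consistent} does not actually claim consistent splitting on $\{\Re\lambda\ge 0\}\setminus\{0\}$ (the hyperbolic mode at $x=-\infty$ degenerates, as Remark \ref{+rmk} explains); it only asserts the subspaces and projections extend continuously in $v_+$ to that endpoint, which is what feeds Lemma \ref{basis} and hence your first step---the eigenvalue-correspondence assertion of the Proposition at $v_+=v_*$ is not really being claimed and is not needed, since the limiting system is nonphysical and serves only as a compactification anchor.
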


\begin{proof}
The first statement follows by Lemma \ref{basis}, the second
by a standard result of Gardner and Jones \cite{GJ.1, GJ.2},
valid on the region of consistent splitting.
\end{proof}

\begin{remark}\label{stronglimit}
Proposition \ref{evanscont} includes in passing the key information
that the Evans function converges in the strong-shock limit
$v_+\to v_*$ to the Evans function for the limiting system 
at $v_+=v_*$, uniformly on compact subsets of $\{\Re \lambda \ge 0\}$,
as illustrated numerically in Fig. \ref{fig1}.
\end{remark}

\begin{remark}\label{bestevans}
The specification in \eqref{kato} of initializing bases
at infinity is optimal in that it minimizes ``action'' in
a certain sense; see \cite{HSZ} for further discussion. 
In particular, {\rm for any constant-coefficient system,
the Evans function induced by Kato bases \eqref{kato}
is identically constant in $\lambda$}.
For, in this case, bases $W^+$ and $W^-$ are given at $x=0$
by the values $V^\pm$
prescribed in \eqref{kato}, and both evolve 
according to the same ODE, hence $W=(W^-,W^+)_{x=0}$
satisfies $W'=[P,P']W$ and $D(\lambda):=\det W\equiv {\rm constant}$
by Abel's Theorem and the fact that $\trace [P,P']=0$,
where $[P,P']:=PP'-P'P$.
\end{remark}

%TODO: keep the below new rmk?  (I think it's interesting
%and should somewhere be said, but...?-K)
\begin{remark}\label{unification}
More generally, $\det(V^-,V^+)\equiv {\rm constant }$ 
in the ``traveling pulse'' case $U_+=U_-$,
by the argument of Remark \ref{bestevans}, whence the
Evans function constructed here may be seen to agree with
the ``natural'' Evans function 
(independent of the choice of $V^\pm$)
\begin{equation}\label{PWnorm}
E(\lambda):=\frac{\det(W^+,W^-)|_{x=0}}{\det(V^+,V^-)}
=
\frac{\tilde W^+\cdot W^-|_{x=0}}{\tilde V^+\cdot V^-}
\end{equation}
defined in \cite{PW} for that case.
The latter may in turn be seen to agree with
a ($2$-modified) characteristic Fredholm determinant 
of the associated linearized operator $L$ about the wave \cite{GLM07},
formally equivalent to
$$
\mbox{\rm det}_2 (I+(L_0-\lambda)^{-1}(L-L_0))\sim
\frac{\det_2(L-\lambda)}
{ \det_2(L_0-\lambda)},
$$
where $L_0$ denotes the (constant-coefficient) 
linearized operator about the background state $U_\pm$.
Our construction by Kato's ODE thus {\rm gives a natural
extension to the traveling-front case of 
the canonical constructions of \cite{PW,GLM07}
in the traveling-pulse case}, neither of which generalizes
in obvious fashion to the traveling-front setting
(the difficulty in both cases coming from the fact that
$\det (V^+,V_-)$ may vanish).
\end{remark}

\section{High-frequency bounds}\label{high}

We now carry out the main technical work of the paper, establishing
the following uniform bounds on the size of unstable eigenvalues.

\begin{proposition}\label{hf}
Nonstable eigenvalues $\lambda$ of \eqref{evalue}, i.e., eigenvalues
with nonnegative real part, are confined for $\gamma>1$, 
$v_*<v_+\le 1$ to a finite region $|\lambda|\le \Lambda$, for any
\begin{equation}\label{Lambda}
\Lambda  \ge 2\max\{1, \nu\} \max_{x} 
\Big( 
\frac{ |\cF^*_{--}|+|\cF^*_{++}| + 2\sqrt{|\cF^*_{-+}||\cF^*_{+-}|}}
{\hat v^{1/2}} 
(x, \Lambda)
\Big)^2,
\end{equation}
where
\begin{equation}\label{F*}
|\cF^*_{kl}|(x, \Lambda):=
\sum_{i=0}^4 \frac{ |\cF_{-i/2,kl}|}{|\Lambda|^{i/2}}(x),
\end{equation}
$k,l=+,-$,
$\cF_{j,kl}$ are as defined in \eqref{cFexp} below,
and $|\cdot|$ is the matrix operator norm with respect to any specified
norm on $\mathbb{C}^5$.
\end{proposition}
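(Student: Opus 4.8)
The plan is to follow the high-frequency tracking argument of \cite{MZ.3}, adapted to the rescaled integrated system \eqref{firstorder}--\eqref{evans_ode}, exploiting the polynomial structure of $A(x,\lambda)$ noted above. The first step is to find an appropriate balancing (diagonal) coordinate change $W = T(\lambda) Z$ that separates the fast growing/decaying modes from the slow ones in the regime $|\lambda| \to \infty$. Because the second-order terms in \eqref{evalue} carry coefficients $1/\hat v$ and $\nu/\hat v$, the natural rescaling introduces a factor $|\lambda|^{1/2}$ on the ``derivative'' slots; after this change the coefficient matrix splits as $|\lambda|^{1/2}\mathcal{F}(x,\lambda) + (\text{lower order})$, where $\mathcal{F}$ has a limiting block structure whose blocks correspond to the parabolic modes of the $\epsilon$- and $u$-equations. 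I would record the expansion of the leading matrix in inverse half-powers of $|\lambda|$,
\begin{equation}\label{cFexp}
\mathcal{F}(x,\lambda) \sim \sum_{i \ge 0} |\lambda|^{-i/2}\, \mathcal{F}_{-i/2}(x),
\end{equation}
isolating the entries $\mathcal{F}_{j,kl}$ ($k,l \in \{+,-\}$) that couple the two invariant blocks; these are precisely the quantities appearing in \eqref{F*}.

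The heart of the argument is an energy estimate showing that no genuinely decaying-at-both-ends solution of the balanced system can exist once $|\lambda|$ exceeds the bound $\Lambda$ of \eqref{Lambda}. Concretely, for a putative eigenfunction one forms a quadratic functional adapted to the block decomposition — roughly, the difference of the $L^2$ norms of the ``$+$'' (stable-at-$+\infty$) and ``$-$'' (unstable-at-$-\infty$) components, suitably weighted by $\hat v^{1/2}$ — and differentiates in $x$. The diagonal (within-block) contributions are sign-definite and of size $\sim |\lambda|^{1/2}$ by the spectral gap of the limiting blocks; the dangerous terms are the off-diagonal couplings, which are controlled by $\max_x |\mathcal{F}^*_{kl}|(x,\Lambda)$ times the cross terms, estimated by Young's inequality with the $2\sqrt{|\mathcal{F}^*_{-+}||\mathcal{F}^*_{+-}|}$ weighting that appears in \eqref{Lambda}. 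Integrating from $-\infty$ to $+\infty$ and using the decay of the eigenfunction at both ends forces the functional to vanish, hence (given strict positivity of the net coefficient when $|\lambda| > \Lambda$) forces the solution itself to vanish — a contradiction. The factor $2\max\{1,\nu\}$ and the squaring in \eqref{Lambda} arise from converting the $|\lambda|^{1/2}$-scale estimate back to a bound on $|\lambda|$ itself, and from the worst-case of the two diffusion coefficients $1$ and $\nu$ entering the rescaling.

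A few supporting points must be checked. One needs uniform hyperbolicity and a uniform spectral gap for the limiting blocks of $\mathcal{F}_0(x)$ on $\{\Re\lambda \ge 0\}$, which follows from the explicit convection-diffusion structure exactly as in the $A^*_-$ computation of Lemma \ref{consistent}; one also needs the lower-order remainder in the balanced system to be genuinely $o(|\lambda|^{1/2})$ uniformly in $x$, which is where uniform exponential decay of profiles (Lemma \ref{profdecay}) and the polynomial-in-$(\hat v,\hat u,\hat e)$ nature of $A$ are used — they guarantee the $\mathcal{F}_{-i/2}$ are bounded functions of $x$ with the stated decay, so that $|\mathcal{F}^*_{kl}|(x,\Lambda)$ in \eqref{F*} is finite and the $\max_x$ in \eqref{Lambda} is attained. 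The main obstacle, as in \cite{MZ.3}, is the careful bookkeeping of the off-diagonal coupling terms: getting the weighting by $\hat v^{1/2}$ right so that the variable-coefficient cross terms are absorbed, and verifying that the decoupled hyperbolic mode ($\mu \equiv \lambda$, cf. Remark \ref{+rmk}) does not spoil the estimate — here one uses that this mode, being order $|\lambda|$ rather than $|\lambda|^{1/2}$, is in fact the most strongly separated and contributes with a favorable sign. I would allocate most of the write-up to that coupling estimate; the rest is a routine adaptation of \cite{MZ.3}, as advertised in the introduction.
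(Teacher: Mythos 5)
Your preparation is essentially right: you correctly identify the balancing transformation introducing $|\lambda|^{1/2}$ weights on the derivative slots, the resulting block structure with two parabolic pairs of rates $\sim \pm\lambda^{1/2}(\hat v/\mu)^{1/2}$, $\pm\lambda^{1/2}(\hat v/\nu)^{1/2}$ together with one hyperbolic mode, and the expansion $\mathcal{F}\sim\sum_i|\lambda|^{-i/2}\mathcal{F}_{-i/2}$ whose block entries $\mathcal{F}_{j,kl}$ give the quantities in \eqref{F*}.  The issue is the ``heart of the argument.''  You propose to close by an $L^2$ energy estimate on the signed functional $\|X_+\|^2-\|X_-\|^2$, integrating its $x$-derivative and forcing a contradiction for a putative decaying eigenfunction.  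The paper instead proves a \emph{pointwise} Riccati inequality \eqref{Ricatti} for the scalar ratio $\zeta=|X_-|/|X_+|$ and shows, when \eqref{goodcondition} holds, that the half-lines $\Omega_-=\{\zeta\le\zeta_-\}$ and $\Omega_+=\{\zeta\ge\zeta_+\}$ are forward- and backward-invariant cones containing, respectively, the unstable manifold shot from $-\infty$ and the stable manifold shot from $+\infty$; since $\Omega_-$ and $\Omega_+$ are disjoint, the manifolds are transverse and $D(\lambda)\ne0$.  The condition \eqref{goodcondition}, hence the constant in \eqref{Lambda}, is exactly the discriminant condition for the Riccati quadratic \eqref{quad} to have two positive real roots; that is where $2\sqrt{|\mathcal{F}_{-+}||\mathcal{F}_{+-}|}$ comes from — not from optimizing Young's inequality, although the numbers happen to coincide.

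More importantly, your energy proposal has a real gap tied to the hyperbolic mode, which you describe as ``most strongly separated'' and contributing ``with a favorable sign.''  The opposite is the problem.  That mode has rate $-\lambda$, so its contribution to $-\Re\langle X_-, M_-X_-\rangle$ is $\Re\lambda\,|X_{-,\mathrm{hyp}}|^2$, which \emph{vanishes} on the imaginary axis — precisely the boundary case one must control — and indeed Remark \ref{+rmk} emphasizes that at the strong-shock endstate the dissipativity condition of \cite{MZ.3} fails for this mode.  Consequently the diagonal term from the $X_-$ block gives only a non-negative, not strictly positive, lower bound, and there is nothing to absorb the $-|\mathcal{F}_{--}|\,|X_-|^2$ and $-|\mathcal{F}_{-+}|\,|X_-||X_+|$ perturbation terms in your integrand.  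The integrated identity therefore does not close.  The Riccati/cone argument survives exactly because it never needs a positive rate on the $X_-$ side: the bound $|X_-|'\le|\mathcal{F}_{--}||X_-|+|\mathcal{F}_{-+}||X_+|$ uses only $\Re\,\sigma(M_-)\le0$, and all the ``push'' comes from the $X_+$ growth $\delta=\min\{1,\nu^{-1/2}\}\hat v^{1/2}\Re\lambda^{1/2}$, transferred into the ratio equation.  If you wish to salvage an energy route, you would need at minimum a variable weight (effectively rediscovering the invariant-cone boundary $\zeta_\pm(x)$), at which point you have reconstructed the Riccati argument.  I would rewrite the ``heart'' paragraph around the ratio $\zeta$ and invariant regions, and correct the characterization of the hyperbolic mode.
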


Before establishing Proposition \ref{hf}, we give a general discussion
indicating the ideas behind the proof.
For $v_+>v_*$, such high-frequency bounds have already
been established by asymptotic ODE estimates in \cite{MZ.3}.
For $v_+=v_*$, the problem leaves the class studied
in \cite{MZ.3} (specifically, the dissipativity condition 
is neutrally violated as discussed in Remark \ref{+rmk}), 
hence requires further discussion.

However, a brief examination reveals that the argument of 
\cite{MZ.3} applies in this case almost unchanged.
For, recall that the method of \cite{MZ.3} to obtain high-frequency
bounds was to decompose the flow of the first-order eigenvalue
equation for high frequencies into parabolic growth and decay
modes of equal dimensions $r=\dim (u,e)$ with growth
rates $\Re \mu \sim \pm |\lambda|^{1/2}$, 
and hyperbolic modes of dimension $n-r$, in the present case 
$\dim v=1$, with growth rate $\sim \pm (\Re \lambda  + 1)$ 
up to an exponentially decaying error term $\sim e^{-\theta|x|}$,
the delicate point being to separate decaying from growing hyperbolic
modes.

In the present, degenerate case, the hyperbolic rates
are only $\sim \Re \lambda$ plus decaying error term, 
and so the final, delicate part of the argument in \cite{MZ.3} does not apply.
However, since there is only a single hyperbolic mode, this part
of the argument is not needed.
Specifically, the $|\lambda|^{1/2}/C$ spectral gap between parabolic
and hyperbolic modes allow us for high frequencies to decompose the
flow of the eigenvalue equation into the direct sum of
growing parabolic modes blowing
up exponentially at $x=+\infty$, decaying parabolic modes 
blowing up exponentially at $-\infty$,
and a single hyperbolic mode that blows up exponentially 
at $-\infty$ for $v_+>v_*$ and, though it does not blow up
exponentially for $v_+=v_*$, is in any case always transverse
to the unstable manifold at $x=-\infty$.

To put things another way, the unstable manifold at $x=-\infty$
consists for $|\lambda|$ sufficiently large
precisely of growing parabolic modes, which blow up at $x=+\infty$.
Thus, there exist no zeros of the Evans function, since these
correspond to solutions belonging to both the unstable manifold
at $-\infty$ and the stable (hence decaying) manifold at $+\infty$.
This shows the existence of uniform high-frequency bounds- it
remains to establish quantitative bounds by keeping track of constants
throughout the argument.

\begin{remark}\label{exthf}
A review of the above shows that the same argument applies
whenever hyperbolic modes are uniformly decaying or growing,
i.e., in the situation identified in \cite{MZ.4,Z.3,GMWZ.1} 
that all hyperbolic characteristic speeds have a common sign.
Likewise, the multidimensional case may be treated by essentially
the same argument, following the generalization given in \cite{GMWZ.1}.
\end{remark}

%TODO: rmk somewhere that det one transformation between
%these and previous coord's?  NO- NOT NECESSARY... -KZ
\begin{proof}[Proof of Proposition \ref{hf}]
We carry out the argument in two steps.

{\bf 1. Preparation.}
Recasting \eqref{firstorder}, 
\eqref{evans_ode} in the standard coordinates of \cite{MZ.3}, as
\begin{equation}\label{stdfirstorder}
Z' = B(x,\lambda) Z,
\quad
Z = (v,  u, \eps,  u', \eps')^T,
\end{equation}
\begin{equation} \label{stdevans_ode}
B(x,\lambda) =
\begin{pmatrix}
-\lambda & 0 & 0 & 1 & 0 \\
0 & 0&  0 &  1& 0\\
0 & 0&  0 &  0& 1\\
\lambda (f(\hat U)-\hat v) & \lambda\hat v+\Gamma\hat u_x & 0 & f(\hat U) & \Gamma\\
\lambda h(\hat U) &  \nu^{-1}\hat  v\hat u_x-\hat  u_{xx} 
& \lambda\nu^{-1}\hat v & g(\hat U) - h(\hat U) & \nu^{-1}\hat  v\\
\end{pmatrix},
\end{equation}
and we decompose $B$ as $B=\lambda B_1 + B_0$ with
\begin{equation} \label{B1}
B_1(x) =
\begin{pmatrix}
-1 & 0 & 0 & 0 & 0 \\
0& 0& 0 & 0 & 0\\
0& 0& 0 & 0 & 0\\
f(\hat U)-\hat v& \hat v & 0 & 0 & 0\\
h(\hat U) & 0 &  \nu^{-1}\hat v & 0 & 0 & \\
\end{pmatrix},
\end{equation}
and
\begin{equation} \label{B0}
B_0(x) =
\begin{pmatrix}
0 & 0 & 0 & 1 & 0 \\
0 & 0&  0 &  1& 0\\
0 & 0&  0 &  0& 1\\
0 & \Gamma\hat u_x & 0 & f(\hat U) & \Gamma\\
0 &  \nu^{-1}\hat  v\hat u_x-\hat  u_{xx} 
& 0 & g(\hat U) - h(\hat U) & \nu^{-1}\hat  v\\
\end{pmatrix}.
\end{equation}

Noting that $B_1$ is lower block-triangular, with ($1\times 1$)
upper diagonal block $-1$ strictly negative, and ($4\times 4$)
lower diagonal block 
$$
\alpha:=
\begin{pmatrix}
0 & 0 & 0 & 0  \\
0 & 0 & 0 & 0  \\
\hat v & 0& 0 & 0 \\
0 & \nu^{-1}\hat v & 0 & 0 \\
\end{pmatrix},
$$
having all zero eigenvalues, we block-diagonalize
by the lower block-diagonal transformation $Z:=T\mathcal{X}$,
$$
T:=
\begin{pmatrix}
1 & 0 \\
\theta & I_4\\
\end{pmatrix},
\quad
T^{-1}:=
\begin{pmatrix}
1 & 0 \\
-\theta & I_4\\
\end{pmatrix},
$$
$$
\theta:= 
-(\alpha+I_4)^{-1}
\begin{pmatrix} 0\\ 0\\ f(\hat U)-\hat v\\ h(\hat U) \\ \end{pmatrix},
$$
where, since $\alpha^2=0$,
$(I+\alpha)^{-1}=I -\alpha$, hence
$$
\theta=
-\begin{pmatrix} 0\\ 0\\ f(\hat U)-\hat v\\ h(\hat U) \\ \end{pmatrix}
+\alpha
\begin{pmatrix} 0\\ 0\\ f(\hat U)-\hat v\\ h(\hat U) \\ \end{pmatrix}=
\begin{pmatrix} 0\\ 0\\ -f(\hat U)+\hat v\\ -h(\hat U) \\ \end{pmatrix},
$$
and
$$
T^{-1}T'=
\begin{pmatrix}
0 & 0 \\
\theta'& 0\\
\end{pmatrix}
=
\begin{pmatrix}
0 & 0 & 0 & 0 & 0 \\
0 & 0 & 0 & 0 & 0 \\
0 & 0 & 0 & 0 & 0 \\
-\hat v_x & 0 & 0 & 0 & 0 \\
j(\hat U) & 0 & 0 & 0 & 0 \\
\end{pmatrix},
$$
$j(\hat U):= \nu^{-1}\Big((\Gamma e_--(\hat v-1))\hat v_x +\hat e_x \Big)$,
to obtain $\mathcal{X}'=C\mathcal{X}$, 
$$
C= T^{-1}BT - T^{-1}T'= \lambda C_1 +C_0,
$$
where
\begin{equation} \label{C1}
C_1(x,\lambda) =
\begin{pmatrix}
-1 & 0 & 0 & 0 & 0 \\
0& 0& 0 & 0 & 0\\
0& 0& 0 & 0 & 0\\
0& \hat v & 0 & 0 & 0\\
0 & 0 &  \nu^{-1}\hat v & 0 & 0 & \\
\end{pmatrix},
\end{equation}
is in a variant of block Jordan form and
\begin{equation} \label{C0}
C_0(x,\lambda) =
\begin{pmatrix}
\hat v -f(\hat U) & 0 & 0 & 1 & 0 \\
\hat v -f(\hat U) & 0&  0 &  1& 0\\
-h(\hat U) & 0&  0 &  0& 1\\
k(\hat U) & \Gamma\hat u_x & 0 & 2f(\hat U)-\hat v & \Gamma\\
l(\hat U)
&  \nu^{-1}\hat  v\hat u_x-\hat  u_{xx} 
& 0 & g(\hat U) & \nu^{-1}\hat  v\\
\end{pmatrix},
\end{equation}
where 
\begin{equation}\label{kl}
\begin{aligned}
k(\hat U)&:= (2f(\hat U)-\hat v)(\hat v-f(\hat U))-\Gamma h(\hat U)+\hat v_x,\\
l(\hat U)&:= g(\hat U)(\hat v-f(\hat U))-\nu^{-1}\hat vh(\hat U)-j(\hat U).
\end{aligned}
\end{equation}

Making the further transformation $\mathcal{X}= Q\mathcal{Y}$, 
$$
Q:=
\begin{pmatrix}
1 & \beta \\
0 & I_4\\
\end{pmatrix},
\quad
Q^{-1}:=
\begin{pmatrix}
1 & -\beta \\
0 & I_4\\
\end{pmatrix},
$$
$$
\begin{aligned}
\beta&:= 
\lambda^{-1}(0, 0, 1, 0) (I_4+\alpha)^{-1}\\
&= \lambda^{-1}(0, 0, 1, 0) (I_4-\alpha)\\
&=\lambda^{-1}(-\hat v, 0, 1, 0),\\ 
\end{aligned}
$$
$$
Q^{-1}Q'=
\begin{pmatrix}
0 & \beta' \\
0 & 0\\
\end{pmatrix}
=
\lambda^{-1}\begin{pmatrix}
0 & -\hat v_x & 0 & 0 & 0 \\
0 & 0 & 0 & 0 & 0 \\
0 & 0 & 0 & 0 & 0 \\
0 & 0 & 0 & 0 & 0 \\
0 & 0 & 0 & 0 & 0 \\
\end{pmatrix},
$$
we obtain $\mathcal{Y}'=D\mathcal{Y}$, 
$$
\begin{aligned}
D&= Q^{-1}CQ - Q^{-1}Q'\\
&= 
\lambda D_1 + D_0 + \lambda^{-1}D_{-1}+ \lambda^{-2}D_{-2},
\end{aligned}
$$
where $D_1=C_1$,
\begin{equation} \label{D0}
\begin{aligned}
D_0(x,\lambda) =
\begin{pmatrix}
\hat v -f(\hat U) & 0 & 0 & 0 & 0 \\
\hat v -f(\hat U) & 0&  0 &  1& 0\\
-h(\hat U) & 0&  0 &  0& 1\\
k(\hat U) & \Gamma\hat u_x & 0 & 2f(\hat U)- \hat v & \Gamma\\
l(\hat U)
&  \nu^{-1}\hat  v\hat u_x-\hat  u_{xx} 
& 0 & g(\hat U) & \nu^{-1}\hat  v\\
\end{pmatrix},
\end{aligned}
\end{equation}
\begin{equation} \label{D-1}
\begin{aligned}
D_{-1}(x,\lambda) =
\begin{pmatrix}
m(\hat U) & -\hat v^2+f(\hat U)\hat v-\Gamma \hat u_x+\hat v_x & 0 & 3(\hat v-f(\hat U)) & -\Gamma \\
0 & -\hat v(\hat v -f(\hat U)&  0 &  \hat v- f(\hat U)& 0\\
0 & -\hat v h(\hat U)&  0 &  -h(\hat U)& 0\\
0 & -\hat v k(\hat U) & 0 & k(\hat U) &0\\
0 &  -\hat v  l(\hat U) & 0 &  l(\hat U) & 0\\
\end{pmatrix},
\end{aligned}
\end{equation}
and
\begin{equation} \label{D-2}
\begin{aligned}
D_{-2}(x,\lambda) =
\begin{pmatrix}
0 & -\hat v m(\hat U) & 0 & m(\hat U) & 0 \\
0 & 0 & 0 & 0 & 0\\
0 & 0 & 0 & 0 & 0\\
0 & 0 & 0 & 0 & 0\\
0 & 0 & 0 & 0 & 0\\
\end{pmatrix},
\end{aligned}
\end{equation}
where
$$
m(\hat U):= \hat v(\hat v-f(\hat U))-k(\hat U).
$$

Making the ``balancing'' transformation $\mathcal{Y}=V\mathcal{Z}$,
$V=\begin{pmatrix} I_3 & 0 \\
0 &  \lambda^{1/2} I_2  \end{pmatrix}$,
we then obtain $\mathcal{Z}'=E\mathcal{Z}$, 
$ E= V^{-1}DV $,
where
\begin{equation}\label{E}
E=\begin{pmatrix} 
-\lambda & 0 & 0 & 0 & 0\\
0 & 0 &     0  & \lambda^{1/2} & 0\\
0 & 0 & 0 & 0 & \lambda^{1/2}\\
0 & \lambda^{1/2} \hat v & 0 & 0 & 0 \\
0 & 0 & \lambda^{1/2} \nu^{-1}\hat v & 0 & 0 \\
\end{pmatrix}
+\Theta,
\end{equation}
$$
\Theta=\Theta_0 + \lambda^{-1/2} \Theta_{-1/2}
+\lambda^{-1} \Theta_{-1}
+\lambda^{-3/2} \Theta_{-3/2}
+\lambda^{-2} \Theta_{-2},
$$
where 
\begin{equation}\label{Theta0}
\Theta_0(x) =
\begin{pmatrix}
\hat v -f(\hat U) & 0&  0 &  0& 0\\
\hat v -f(\hat U) & 0&  0 &  0& 0\\
-h(\hat U) & 0&  0 &  0& 0\\
0 & 0 & 0 &2f(\hat U)- \hat v & \Gamma\\
0 &  0 & 0 & g(\hat U) & \nu^{-1}\hat  v\\
\end{pmatrix},
\end{equation}
\begin{equation}\label{Theta-1/2}
\Theta_{-1/2}(x) =
\begin{pmatrix}
0 & 0 &  0 & 3(\hat v-f(\hat U))  & -\Gamma\\
0 & 0&  0 &  \hat v- f(\hat U)& 0\\
0 & 0 &  0 &  -h(\hat U)& 0\\
k(\hat U) & \Gamma\hat u_x & 0 & 0 & 0\\
l(\hat U) &  \nu^{-1}\hat  v\hat u_x-\hat  u_{xx} 
& 0 & 0 & 0\\
\end{pmatrix},
\end{equation}
\begin{equation}\label{Theta-1}
\Theta_{-1}(x) =
\begin{pmatrix}
m(\hat U) & -\hat v(\hat v-f(\hat U))+ \hat v_x-\Gamma \hat u _x& 0 & 0 & 0 \\
0 & -\hat v(\hat v -f(\hat U)&  0 &  0 & 0\\
0 & \hat v h(\hat U)&  0 &  0 & 0\\
0 & 0 & 0 & k(\hat U) &0\\
0 &  0  & 0 &  l(\hat U) & 0\\
\end{pmatrix},
\end{equation}
\begin{equation}\label{Theta-3/2}
\Theta_{-3/2}(x) =
\begin{pmatrix}
0 & 0 &  0 & m(\hat U)  & 0\\
0 & 0 &  0 & 0  & 0\\
0 & 0 &  0 & 0  & 0\\
0 & -\hat v k(\hat U) & 0 & 0 & 0\\
0 &  -\hat v  l(\hat U) & 0 &  0 & 0\\
\end{pmatrix},
\end{equation}
\begin{equation}\label{Theta-2}
\Theta_{-2}(x) =
\begin{pmatrix}
0 & -\hat v m(\hat U) & 0 & 0 & 0 \\
0 & 0 &  0 & 0 & 0\\
0 & 0 &  0 & 0 & 0\\
0 & 0 &  0 & 0 & 0\\
0 & 0 &  0 & 0 & 0\\
\end{pmatrix}.
\end{equation}

Finally, setting $\mathcal{Z}= \tilde S X$, with
$$
\tilde S:=
\begin{pmatrix}
1&0\\
0 & \tilde s\\
\end{pmatrix},
\quad
\tilde s:=
\begin{pmatrix}
I & I\\
-A & A
\end{pmatrix},
\quad
\tilde s^{-1}:=
\frac{1}{2}
\begin{pmatrix}
I & -A^{-1}\\
I & A^{-1}\\
\end{pmatrix},
$$
$$
A:=
\begin{pmatrix}
\sqrt{\hat v} & 0\\
0 & \sqrt{\nu^{-1}\hat v} \\
\end{pmatrix},
\quad
A^{-1}:=
\begin{pmatrix}
1/\sqrt{\hat v} & 0\\
0 & 1/\sqrt{\nu^{-1}\hat v} \\
\end{pmatrix},
$$
$$
\tilde S^{-1}\tilde S_x=
\begin{pmatrix}
0 & 0\\
0 & \tilde s^{-1}\tilde s_x \\
\end{pmatrix}
\quad
\tilde s^{-1}\tilde s_x=
(\hat v_x/4\hat v)
\begin{pmatrix}
I & -I\\
-I & I
\end{pmatrix},
$$
we obtain 
\begin{equation}\label{Xeq}
X'= (F+\mathcal{F}) X, 
\end{equation}
where 
\begin{equation}\label{F}
F= \tilde S^{-1}E\tilde S= \begin{pmatrix} M_- & 0 \\0 &  M_+\end{pmatrix}
\end{equation}
with
\begin{multline}\label{Ms}
M_+:= \begin{pmatrix} 
\lambda^{1/2}\hat v^{1/2} & 0\\
0 & \lambda^{1/2}(\hat v/\nu)^{1/2} \\
\end{pmatrix},
\\
M_-:=
\begin{pmatrix}
 -\lambda & 0 & 0 \\
0 &  -\lambda^{1/2} \hat v^{1/2}  & 0 \\
0 & 0 & -\lambda^{1/2}(\hat v/\nu)^{1/2} \\
\end{pmatrix},
\end{multline}
and
\begin{equation}\label{cFexp}
\begin{aligned}
\mathcal{F}&= \tilde S^{-1}\Theta \tilde S -\tilde S^{-1}\tilde S_x\\
&=
\cF_0 + \lambda^{-1/2} \cF_{-1/2}
+\lambda^{-1} \cF_{-1}
+\lambda^{-3/2} \cF_{-3/2}
+\lambda^{-2} \cF_{-2},
\end{aligned}
\end{equation}
where 
\begin{multline}\label{cF0}
\cF_0(x) = \\
\begin{pmatrix}
\hat v -f(\hat U) & 0&  0 &  0& 0\\
\frac{\hat v-f(\hat U)}{2} & 
\frac{2f(\hat U)-\hat v}{2}-\frac{\hat v_x}{4\hat v} & 
\frac{\sqrt{\nu^{-1}}\Gamma}{2}& 
-\frac{2f(\hat U)-\hat v}{2}+\frac{\hat v_x}{4\hat v}&  
\frac{-\sqrt{1/\nu}\Gamma}{2}\\
\frac{-h(\hat U)}{2} & 
\frac{g(\hat U)}{2\sqrt{1/\nu}} & 
\frac{\nu^{-1}\hat  v}{2} - \frac{\hat v_x}{4} &
-\frac{g(\hat U)}{2\sqrt{1/\nu}}& 
-\frac{\nu^{-1}\hat  v}{2} + \frac{\hat v_x}{4}	\\
\frac{\hat v -f(\hat U)}{2} & 
-\frac{2f(\hat U)-\hat v}{2}+\frac{\hat v_x}{4\hat v}  &
 -\frac{\sqrt{1/\nu}\Gamma}{2}& 
\frac{2f(\hat U)-\hat v}{2}-\frac{\hat v_x}{4\hat v}&  
\frac{\sqrt{1/\nu}\Gamma}{2}\\
\frac{-h(\hat U)}{2} &  
-\frac{g(\hat U)}{2\sqrt{1/\nu}}& 
- \frac{\nu^{-1}\hat  v}{2} + \frac{\hat v_x}{4} &
\frac{g(\hat U)}{2\sqrt{1/\nu}}& 
\frac{\nu^{-1}\hat  v}{2} - \frac{\hat v_x}{4}	\\
\end{pmatrix},\\
\end{multline}
%f minus one half equation
\begin{multline}\label{cF-1/2}
\cF_{-1/2}(x) = \\
\begin{pmatrix}
0 &
-3\sqrt{\hat v}(\hat v-f(\hat U)) &
\Gamma\sqrt{\frac{\hat v}{\nu}} &
3\sqrt{\hat v}(\hat v-f(\hat U)) & 
-\Gamma\sqrt{\frac{\hat v}{\nu}}\\
\frac{-k(\hat U)}{2\sqrt{\hat v}} & 
\frac{-\Gamma \hat u_x}{2\sqrt{\hat v}}-\frac{\sqrt{\hat v}}{2}(\hat v-f(\hat U))&  
0 &
\frac{-\Gamma \hat u_x}{2\sqrt{\hat v}}+\frac{\sqrt{\hat v}}{2}(\hat v-f(\hat U))& 
0\\
\frac{-l(\hat U)}{2\sqrt{\hat v/\nu}} & 
-\frac{n(\hat U)}{2\sqrt{\nu^{-1}\hat v}}+\frac{h(\hat U)\sqrt{\hat v}}{2}&  
0 &
-\frac{n(\hat U)}{2\sqrt{\nu^{-1}\hat v}}-\frac{h(\hat U)\sqrt{\hat v}}{2}&
0\\
\frac{k(\hat U)}{2\sqrt{\hat v}} & 
\frac{\Gamma \hat u_x}{2\sqrt{\hat v}}-\frac{\sqrt{\hat v}}{2}(\hat v-f(\hat U))&  
0 &
\frac{\Gamma \hat u_x}{2\sqrt{\hat v}}+\frac{\sqrt{\hat v}}{2}(\hat v-f(\hat U))& 
0\\
\frac{l(\hat U)}{2\sqrt{\hat v/\nu}} & 
\frac{n(\hat U)}{2\sqrt{\hat v/\nu}}+\frac{h(\hat U)\sqrt{\hat v}}{2}&  
0 &
\frac{n(\hat U)}{2\sqrt{\hat v/\nu}}-\frac{h(\hat U)\sqrt{\hat v}}{2}&
0
\end{pmatrix},
\end{multline}
$$
n(\hat U):= \nu^{-1}\hat v\hat u_x-\hat u_{xx},
$$
\begin{equation}\label{cF-1}
\cF_{-1}(x) =
\begin{pmatrix}
m(\hat U) &
q(\hat U) &
0  & 
q(\hat U)&
 0 \\
0 &
\frac{-\hat v(\hat v-f(\hat U))+k(\hat U)}{2} &
0 &  
\frac{-\hat v(\hat v-f(\hat U))-k(\hat U)}{2}&
0\\
0 &
\frac{\hat v h(\hat U)}{2}+\frac{l(\hat U)}{2\sqrt{\nu^{-1}}}&
0 &  
\frac{\hat v h(\hat U)}{2}-\frac{l(\hat U)}{2\sqrt{\nu^{-1}}} &
0 \\
0 &
\frac{-\hat v(\hat v-f(\hat U))-k(\hat U)}{2} &
0 &  
\frac{-\hat v(\hat v-f(\hat U))+k(\hat U)}{2}&
0\\
0 &
\frac{\hat v h(\hat U)}{2}-\frac{l(\hat U)}{2\sqrt{\nu^{-1}}}&
0 &  
\frac{\hat v h(\hat U)}{2}+\frac{l(\hat U)}{2\sqrt{\nu^{-1}}} &
0 \\
\end{pmatrix},
\end{equation}
\begin{equation}
q(\hat U):=-\hat v(\hat v-f(\hat U))-\Gamma \hat u_x+\hat v_x,
\end{equation}
\begin{equation}\label{cF-3/2}
\cF_{-3/2}(x) =
\begin{pmatrix}
0 & -\hat v^{1/2} m(\hat U)  & 0 & \hat v^{1/2} m(\hat U)  & 0\\
0 & \frac{\hat v^{1/2} k(\hat U)}{2} &  0 & \frac{\hat v^{1/2} k(\hat U)}{2} & 0\\
0 & \frac{\hat v^{1/2}l(\hat U)}{2\sqrt{\nu^{-1}}} &  0 &\frac{\hat v^{1/2}l(\hat U)}{2\sqrt{\nu^{-1}}}   & 0\\
0 & -\frac{\hat v^{1/2} k(\hat U)}{2} &  0 & \frac{-\hat v^{1/2} k(\hat U)}{2} & 0\\
0 & -\frac{\hat v^{1/2}l(\hat U)}{2\sqrt{\nu^{-1}}} &  0 & -\frac{\hat v^{1/2}l(\hat U)}{2\sqrt{\nu^{-1}}} & 0\\
\end{pmatrix},
\end{equation}
\begin{equation}\label{cF-2}
\cF_{-2}(x) =
\begin{pmatrix}
0 & -\hat v m(\hat U) & 0 &  -\hat v m(\hat U) & 0 \\
0 & 0 &  0 & 0 & 0\\
0 & 0 &  0 & 0 & 0\\
0 & 0 &  0 & 0 & 0\\
0 & 0 &  0 & 0 & 0\\
\end{pmatrix}.
\end{equation}

{\bf 2. Tracking.}
Denoting $X_-=(X_1,X_2, X_3)^T$, $X_+=(X_4,X_5)^T$, and 
$$
\cF=\begin{pmatrix} \cF_{--} & \cF_{-+}\\
\cF_{+-} & \cF_{++}\\
\end{pmatrix},
$$
we obtain from \eqref{Xeq}--\eqref{F}
\begin{equation}
\begin{aligned}
|X_-|'&\le 
 |\cF_{--}||X_-|+ |\cF_{-+}||X_+|,\\
|X_+|'&\ge  
\min \{1 , \nu^{-1/2} \}\hat v^{1/2}\Re \lambda^{1/2} |X_+|
-|\cF_{+-}||X_-| - |\cF_{++}||X_+|,\\
\end{aligned}
\end{equation}
from which, defining $\zeta:= |X_-|/|X_+|$, 
we obtain by a straightforward computation the Ricatti equation
\begin{equation}\label{Ricatti}
\zeta'\le \Big(-\min \{1 , \nu^{-1/2} \}\hat v^{1/2}\Re \lambda^{1/2} 
+|\cF_{--}|+|\cF_{++}|\Big)\zeta
+|\cF_{-+}|+ |\cF_{+-}|\zeta^2.
\end{equation}

Denote by
\begin{equation}\label{zeta+-}
\begin{aligned}
\zeta_\pm&:=
\frac{ \min \{1 , \nu^{-1/2} \}\hat v^{1/2}\Re \lambda^{1/2}  
-|\cF_{--}|-|\cF_{++}|}{2|\cF_{+-}|}\\
&\quad \pm
\sqrt{
\Big(\frac{ \min \{1 , \nu^{-1/2} \}\hat v^{1/2}
\Re \lambda^{1/2}  -|\cF_{--}|-|\cF_{++}|}{2|\cF_{+-}|}\Big)^2
-\frac{|\cF_{-+}|}{|\cF_{+-}|} }
\end{aligned}
\end{equation}
the roots of
\begin{equation}\label{quad}
\Big(-\min \{1 , \nu^{-1/2} \}\hat v^{1/2}\Re \lambda^{1/2} 
+|\cF_{--}|+|\cF_{++}|\Big)\zeta
+|\cF_{-+}|+ |\cF_{+-}|\zeta^2=0.
\end{equation}
Assuming for all $x$ the condition
\begin{equation}\label{goodcondition}
\max\{1, \nu^{1/2}\}
\frac{ |\cF_{--}|+|\cF_{++}| + 2\sqrt{|\cF_{-+}||\cF_{+-}|}}
{\hat v^{1/2}} 
< \Re \lambda^{1/2},
\end{equation}
$\zeta_\pm$ are positive real and distinct, whence, consulting
\eqref{Ricatti}, we see that $\zeta'<0$ on the interval
$\zeta_-<\zeta<\zeta_+$.

It follows that $\Omega_-:=\{\zeta\le \zeta_-\}$ is an invariant region
under the forward flow of \eqref{Xeq}; moreover, this region
is exponentially attracting for $\zeta < \zeta_+$.
A symmetric argument yields that $\Omega_+:=\{\zeta \ge \zeta_+\}$ is
invariant under the backward flow of \eqref{Xeq}, and exponentially
attracting for $\zeta >\zeta_-$.
Specializing these observations to the constant-coefficient limiting
systems at $x=-\infty$ and $x=+\infty$, we find that the invariant 
subspaces of the limiting coefficient matrices from which the
Evans function is constructed must lie in 
$\Omega_-$ and $\Omega_+$, respectively.
%NOTE, Details of this last point: obviously, genuine eigenvectors
%preserved, so can't be out of this cone.  More generally, Jordan
%block case, generalized eigenvectors NOT preserved, but move toward
%genuine ones only algebraically in time, contradicting exponential
%approach to cone. (this appears in multi-d appendix, so need not repeat..)
By  forward (respectively. backward) invariance of $\Omega_-$ (respectively. $\Omega_+$),
under the full, variable-coefficient flow, we thus find that the manifold
$\Span W^-$ of solutions initiated at $x=-\infty$ in the construction
of the Evans function
%TODO: make sure description of Evans construction agrees in notation
%with this passage here... -KZ
lies in $\Omega_-$ for all $x$, while the manifold $\Span W^+$ of 
solutions initiated at $x=+\infty$ lies in $\Omega_+$ for all $x$.

Since $\Omega_-$ and $\Omega_+$ are distinct,
we may conclude that {under condition \eqref{goodcondition},
$\Span W^+$ and $\Span W^-$ are transverse and
the Evans function does not vanish}.
But \eqref{Lambda} implies \eqref{goodcondition},
by $\Re \lambda^{1/2}\ge \frac{|\lambda|^{1/2}}{\sqrt{2}}$ together with
\eqref{cFexp}.
\end{proof}

\subsection{Universality and convergence in the high-frequency limit}\label{hfconv}
The bounds obtaining from \eqref{Lambda} may in practice be 
rather conservative, as illustrated by the following example.

\begin{example}\label{trackeg}
For the constant-coefficient 
scalar operator $L:= \partial_x^2 - a\partial_x $,
write $(L-\lambda)w=0$ as a first-order system $W'=(A+\Theta)W$,
$W=(w,w'/\lambda^{1/2})^T$, where
$
A=\lambda^{1/2}\begin{pmatrix}0&1\\ 1 & 0\end{pmatrix}$,
$\Theta= \begin{pmatrix}0&0\\0 & a\end{pmatrix}.  $
Block-diagonalizing $A$ by $W=RZ$, $R=\begin{pmatrix}1 & 1\\ 1 & -1
\end{pmatrix}$,
we obtain $Z'=(\tilde A+\tilde \Theta)Z$, where
$
\tilde A={\rm diag}(1,-1)$, 
$\tilde \Theta=R^{-1}\Theta R=
(a/2)\begin{pmatrix} 1& -1 \\
-1 & 1 \end{pmatrix}.
$
Applying the analog of \eqref{goodcondition}
on $\Re \lambda\ge 0$, 
where $\delta=2\Re \lambda^{1/2}\ge |\lambda|^{1/2}$, 
we obtain nonexistence of eigenvalues for 
$|\lambda|^{1/2}\ge 2|a|$, giving eigenbound
$|\lambda|\le 4|a|^2$.
By contrast, standard elliptic energy estimates 
give $|\lambda|\le |a|^2/4$,
which, by direct Fourier transform computation, is optimal.
Comparing, we see that the tracking bound is of the correct order,
$O(|a|^2)$, but with coefficient $16$ times larger than optimal.
\end{example}

This simple calculation 
may explain the ratio of roughly $10$ between 
the nonisentropic bounds found by tracking in Section \ref{hfnum}
and the isentropic energy bounds found in \cite{BHRZ,HLZ}.
The following result may be used to gauge at a practical level the 
efficiency of the analytical tracking bounds by a (nonrigorous, but typically
quite sharp) numerical convergence study.

\begin{proposition}\label{hflem}
On the nonstable half-plane $\Re \lambda \ge 0$, 
\begin{equation}\label{hflim}
\lim_{|\lambda|\to \infty}
D(\lambda)/ e^{
\alpha \lambda^{1/2}}=\hbox{\rm constant},
\end{equation}
\begin{equation}\label{alphaeq}
\alpha:=(1+\nu^{-1/2})\left(
\int_{-\infty}^0 (\hat v^{1/2}(x)-v_-^{1/2})\,\dif x
+\int_0^{+\infty} (\hat v^{1/2}(x)-v_+^{1/2})\,\dif x
\right) \, \hbox{\rm real}.
\end{equation}
\end{proposition}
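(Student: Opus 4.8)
The plan is to extract the exponential rate $\alpha\lambda^{1/2}$ by a WKB (geometric optics) analysis built on the diagonalization already carried out in the proof of Proposition \ref{hf}, and then to read off $\alpha$ directly from the parabolic growth/decay rates of the reduced system. Recall from that proof that, after the chain of transformations $W=\mathcal{T}(x,\lambda)X$ composing $Z=T\mathcal X$, $\mathcal X=Q\mathcal Y$, $\mathcal Y=V\mathcal Z$, $\mathcal Z=\tilde SX$, the first-order eigenvalue system \eqref{firstorder}--\eqref{evans_ode} becomes $X'=(F+\mathcal F)X$ as in \eqref{Xeq}--\eqref{F}, with $F=\mathrm{diag}(M_-,M_+)$, $M_\pm$ as in \eqref{Ms}, and $\mathcal F=\cF_0+O(|\lambda|^{-1/2})$ bounded in $x$ uniformly in $\lambda$ and, by Lemma \ref{profdecay}, exponentially convergent as $x\to\pm\infty$ to its limits. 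For $|\lambda|$ large on $\{\Re\lambda\ge0\}$ one has $\Re\lambda^{1/2}\ge(|\lambda|/2)^{1/2}$, so $F+\mathcal F$ carries a spectral gap of order $|\lambda|^{1/2}$ separating the single hyperbolic eigenvalue $\mu_h(x,\lambda)=-\lambda+(\hat v-f(\hat U))+O(|\lambda|^{-1/2})$ from the four parabolic eigenvalues $\pm\lambda^{1/2}\hat v^{1/2}+O(1)$ and $\pm\lambda^{1/2}(\hat v/\nu)^{1/2}+O(1)$, the three with nonpositive real part spanning the stable subspace at $x=+\infty$ and the two with positive real part the unstable subspace at $x=-\infty$ (matching the dimensions in Lemma \ref{consistent}).

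Next I would upgrade the sector-invariance estimates of the proof of Proposition \ref{hf} to genuine WKB asymptotics for the solution frames. For $|\lambda|$ large, $\Span W^+$ should be spanned by solutions of the form $W_j^+(x)=e^{\mu_j^+x}\exp\!\left(\int_{+\infty}^x(\mu_j(y,\lambda)-\mu_j^+)\,\dif y\right)(r_j^+(x,\lambda)+o(1))$, normalized at $x=+\infty$ to the bases of Lemma \ref{basis}, where $\mu_j(y,\lambda)$ is the corresponding eigenvalue of the reduced system, $\mu_j^+:=\lim_{y\to+\infty}\mu_j(y,\lambda)$, $r_j^+$ the associated eigenvector, and the tail integral converges by uniform exponential convergence of the coefficients; symmetrically for $W^-$ at $x=-\infty$. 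Evaluating at $x=0$ gives $W_j^+(0)=\exp\!\left(-\int_0^{+\infty}(\mu_j(y,\lambda)-\mu_j^+)\,\dif y\right)(r_j^+(0,\lambda)+o(1))$, and for a parabolic mode $\mu_j(y,\lambda)-\mu_j^+=\mp\lambda^{1/2}(\hat v^{1/2}(y)-v_+^{1/2})+O(1)$, contributing to the exponent the term $\pm\lambda^{1/2}\int_0^{+\infty}(\hat v^{1/2}-v_+^{1/2})\,\dif x$ (respectively that term times $\nu^{-1/2}$, for the $(\hat v/\nu)^{1/2}$ mode), while the hyperbolic mode gives $\mu_h(y,\lambda)-\mu_h^+=O(1)$ and carries no $\lambda^{1/2}$; the same holds at $x=-\infty$ with $v_+$ replaced by $v_-=1$ and $\int_0^{+\infty}$ by $\int_{-\infty}^0$.

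Finally I would form the Evans determinant \eqref{evanseq}: factoring the common scalar $\lambda^{1/2}$-order terms out of $\det(W^+,W^-)|_{x=0}$ produces precisely $e^{\alpha\lambda^{1/2}}$ with $\alpha=(1+\nu^{-1/2})\left(\int_{-\infty}^0(\hat v^{1/2}-v_-^{1/2})\,\dif x+\int_0^{+\infty}(\hat v^{1/2}-v_+^{1/2})\,\dif x\right)$ as in \eqref{alphaeq}, the weight $1+\nu^{-1/2}$ being the sum of the coefficients $1$ and $\nu^{-1/2}$ of $\hat v^{1/2}$ in the two parabolic rates $\lambda^{1/2}\hat v^{1/2}$ and $\lambda^{1/2}(\hat v/\nu)^{1/2}=\nu^{-1/2}\lambda^{1/2}\hat v^{1/2}$; all remaining factors --- the convergent tail integrals of the $O(1)$ corrections, the limiting eigenvector frames $r_j^\pm(0,\lambda)$ (which tend to eigenvectors of the diagonal leading part $F$), and the explicit algebraic prefactor $\det\mathcal{T}(0,\lambda)$ combined with the $\lambda$-dependent normalization of the Kato frames $V^\pm$ from \eqref{kato} --- converge as $|\lambda|\to\infty$ to a single constant, yielding \eqref{hflim}. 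Reality of $\alpha$ is immediate since $\hat v,v_\pm>0$ and $\nu>0$, and convergence of the defining integrals is Lemma \ref{profdecay}.

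The main obstacle is this last step together with the one preceding it: turning sector invariance into honest WKB expansions with \emph{convergent}, uniform-in-$\lambda$ vector parts on the whole high-frequency range $\{\Re\lambda\ge0,\ |\lambda|\ \text{large}\}$, and then bookkeeping the algebraic prefactor --- the interplay of $\det\mathcal{T}(0,\lambda)$ with the $\lambda$-dependent size of the Kato bases --- so as to see that it neither blows up nor vanishes but tends to a finite limit. I expect this to be handled either by invoking a convergence lemma in the spirit of \cite{PZ} applied in the balanced coordinates of \eqref{Xeq}, where the leading dynamics is constant-coefficient after the rescaling, or by carrying the tracking estimate of Proposition \ref{hf} one further order to control the slowly varying vector factors.
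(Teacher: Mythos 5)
Your proposal is correct and follows essentially the same route as the paper: both work in the block-diagonalized coordinates of the proof of Proposition~\ref{hf}, identify the parabolic growth/decay rates $\pm\lambda^{1/2}\hat v^{1/2}$ and $\pm\lambda^{1/2}(\hat v/\nu)^{1/2}$ and the $x$-constant hyperbolic rate $-\lambda$, and read off $\alpha$ by integrating the $\lambda^{1/2}$-order variation of these rates against the endstate limits, deferring rigorous control of the lower-order terms to an \cite{MZ.3}-style asymptotic series. The one thing the paper has that would dissolve what you flag as your ``main obstacle'' is Remark~\ref{bestevans}: since the Kato-frame Evans function is \emph{exactly} constant for any constant-coefficient system (Abel plus $\trace[P,P']=0$), the entire algebraic prefactor --- $\det\mathcal{T}(0,\lambda)$, the $\lambda$-dependent Kato normalization, the eigenframes $r_j^\pm$ --- combines to a constant in the piecewise-constant model where $A\equiv A_\pm$ for $x\gtrless 0$, so only the deviation of the variable coefficients from their endstate limits remains to be tracked; this is precisely the integral producing $\alpha$.
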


\begin{proof}
Reviewing the proof of Proposition \ref{hf},
we find that the initial transformation $T$ is asymptotically constant
in $\lambda$ as $|\lambda|\to \infty$, and thus, the projection onto
the ``hyperbolic'' mode corresponding to the 1--1 entry has the
same property.
It follows that the Kato ODE $R'=(PP'-P'P)R$ used to propagate initializing
bases at $\pm \infty$ (see Section \ref{numprot}), 
%TODO: more precise ref. later.
where $P$ is the projection
onto the stable (respectively. unstable) subspace of $A_\pm$, asymptotically
decouples, yielding a constant (stable) hyperbolic basis element, and
two stable and two unstable ``parabolic'' basis elements coming from
the $4\times 4$ lower right-hand block of matrix $E$ further below.
But, the latter decouples into what may be recognized as a pair of
first-order systems corresponding to the scalar variable-coefficient
heat equations $u_t= u_{xx}$ and $u_t= \nu u_{xx}$.
Explicit evaluation of the Kato ODE, similar to but simpler than
the treatment of Burgers' equation in \cite{HLZ}, Appendix D,
then yields that the Evans function for would be asymptotically 
{\it constant} if $A(x,\lambda)$ were identically equal 
to $A_-$ for $x\le 0$ and $A_+$ for $x\ge 0$.
See also Remark \ref{bestevans}, which yields the same result
in much greater generality.

Though $A$ is not constant for $x\gtrless 0$, the $|\lambda|$-asymptotic
flow may be developed as in \cite{MZ.3} in an asymptotic series
in $\lambda^{-1/2}$ (respectively. $\lambda^{-1}$) in parabolic (respectively. hyperbolic)
modes, to see that, up to an asymptotically constant factor
(coming from $c(x)=O(1)$ terms in eigenvalues of various modes,
through $e^{\int_{\pm \infty}^0 (c(x)-c(\pm \infty))\,\dif x}$),
the flow from $y$ to $x$ is given asymptotically by 
$e^{\pm \lambda^{1/2}\int_y^x \nu^{-1/2} \hat v(z)^{1/2}\,\dif z}$
and
$e^{\pm \lambda^{1/2}\int_y^x \hat v(z)^{1/2}\,\dif z}$ in parabolic modes
and
$e^{-\lambda(y-x)}$
in the hyperbolic mode (note: constant rate, so no resulting
correction), 
whence, correcting for variation in the integrand
from the constant-coefficient case, we obtain \eqref{hflim}.

We omit the details, referring the reader to \cite{HLZ,CHNZ}
and especially \cite{MZ.3} for similar but more difficult calculations.
\end{proof}

\begin{remark}\label{fit}
We see from \eqref{hflim} that the asymptotic behavior of contours
is independent of shock amplitude or model parameters,
being determined up to rescaling of $\lambda$ by $\sgn \alpha$.
This explains the ``universal'' quality of contour diagrams arising
here and in \cite{HLZ,CHNZ}.
In practice, it is not necessary to compute $\alpha$,
since the knowledge that limit \eqref{hflim} exists allows
us to determine $\alpha$, $C$ by curve fitting of 
$\log D(\lambda)= \log C+ \alpha \lambda^{1/2}$ 
with respect to $z:=\lambda^{1/2}$, for $|\lambda|\gg 1.$
When $D$ is initialized in the standard way on the real axis,
so that $\bar D(\lambda)=D(\bar \lambda)$, $C$ is
necessarily real.
\end{remark}

\begin{remark}\label{winding}
Restricting the limiting Evans function $D^\dagger:=
Ce^{\alpha \lambda^{1/2}}$
to the imaginary axis, $\lambda=i\tau$, $\tau \in \mathbb{R}$,
we obtain
$$
D^\dagger(i\tau)=Ce^{\alpha |\tau|^{1/2}/2} (\cos + i\sin)(\pm \alpha 
|\tau|^{1/2}),
$$
predicting increasing winding about the origin as $\tau \to \infty$.
\end{remark}

Since the limiting Evans function $D^\dagger:=
Ce^{\alpha \lambda^{1/2} }$
is nonvanishing, we may obtain practical high-frequency bounds by a
convergence study on $D\to D^\dagger$, requiring, say, relative error
$\le 0.1$ to obtain a conservative but reasonably sharp radius.
Here, $D^\dagger$ may be estimated numerically using profile data
in the formulae of Proposition \ref{hflem}, by curve-fitting as described
in Remark \ref{fit}, or, more conventionally,
by numerical extrapolation in the course of the convergence study. 

See, for example, the computation displayed in Figures \ref{fig0a}--\ref{fig0b},
comparing the nonisentropic Evans function to its high-frequency
limit $ Ce^{\alpha \lambda^{1/2}} $
for $\Gamma=2/3$ and $\mu=\nu=1$, at contour radii $\Lambda=25$
and $\Lambda=10$, respectively, 
where $C$ and $\alpha$ have been determined by first taking limits along
the positive real axis.
Clearly, convergence in both cases has already occurred, whence 
radius $\Lambda=10$ is sufficient to bound unstable eigenvalues,
similarly as in the isentropic case \cite{BHRZ,HLZ}.
By comparison (see Section \ref{numprot}), tracking estimates
give the much more conservative bound $\Lambda=100.4$.
More extreme cases are depicted in Figures \ref{fig0c} and \ref{fig0d}
for $\mu=1$, $\nu=5$, and gas constants $\Gamma=2/3$ and $\Gamma=1/5$, respectively.
Note that convergence has already occurred at radius $\Lambda=40$,
which is thus sufficient to bound unstable eigenvalues; by contrast, the bounds obtained by tracking 
are $\Lambda= 391.3$ and $\Lambda= 1755.6$, respectively.

These figures clearly indicate the universal behavior of the high-frequency limit.
This can be seen also in the contour plots of Figure \ref{fig1},
comparing contours for the same model and radius at different values of $v_+$.

\begin{figure}[t]
\begin{center}
\includegraphics[width=10cm]{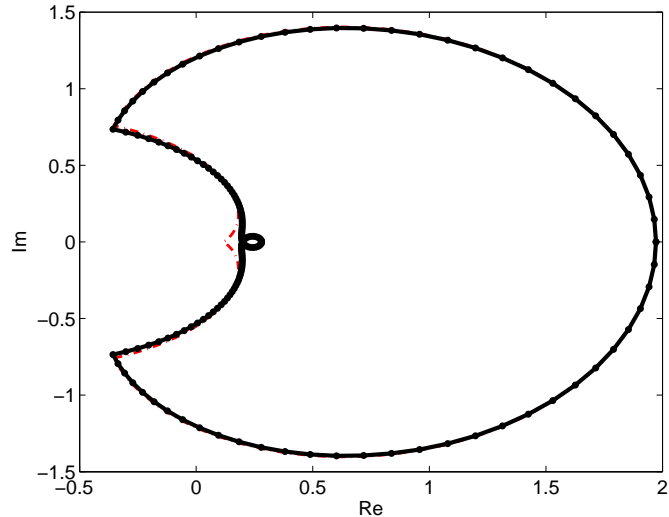} 
\end{center}
\caption{ Universal behavior at high frequency: The images of the semicircle of radius $25$ under the Evans function
and its universal approximant \eqref{genf} ($C$, $\alpha$, and $\beta$ determined by curve fitting), for a monatomic gas, $\Gamma=2/3$,
$\mu=\nu=1$, in the worst case $v_+=v_*=1/4$.  Agreement is nearly exact on the image of the outer,
circular arc and most of the imaginary axis, with deviations for $|\lambda|$ small.}
\label{fig0a}
\end{figure}

\begin{figure}[t]
\begin{center}
\includegraphics[width=10cm]{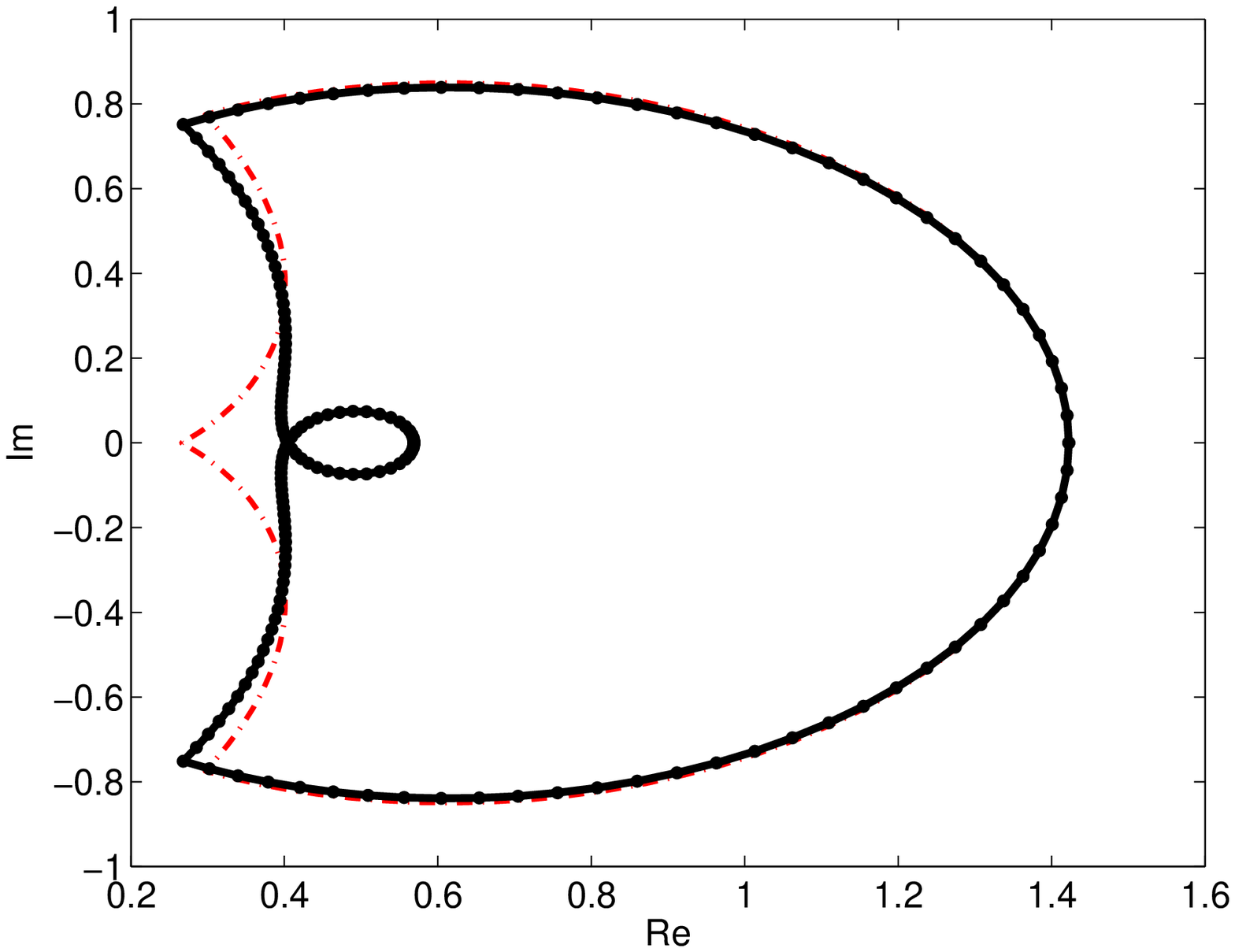} 
\end{center}
\caption{ Universal behavior: The images of the semicircle of radius $10$ under the Evans function
and its universal approximant \eqref{genf}, for $\Gamma=2/3$, $\mu=\nu=1$, $v_+=v_*=1/4$.
(Tracking radius $= 100.4$.)}
\label{fig0b}
\end{figure}

\begin{figure}[t]
\begin{center}
\includegraphics[width=10cm]{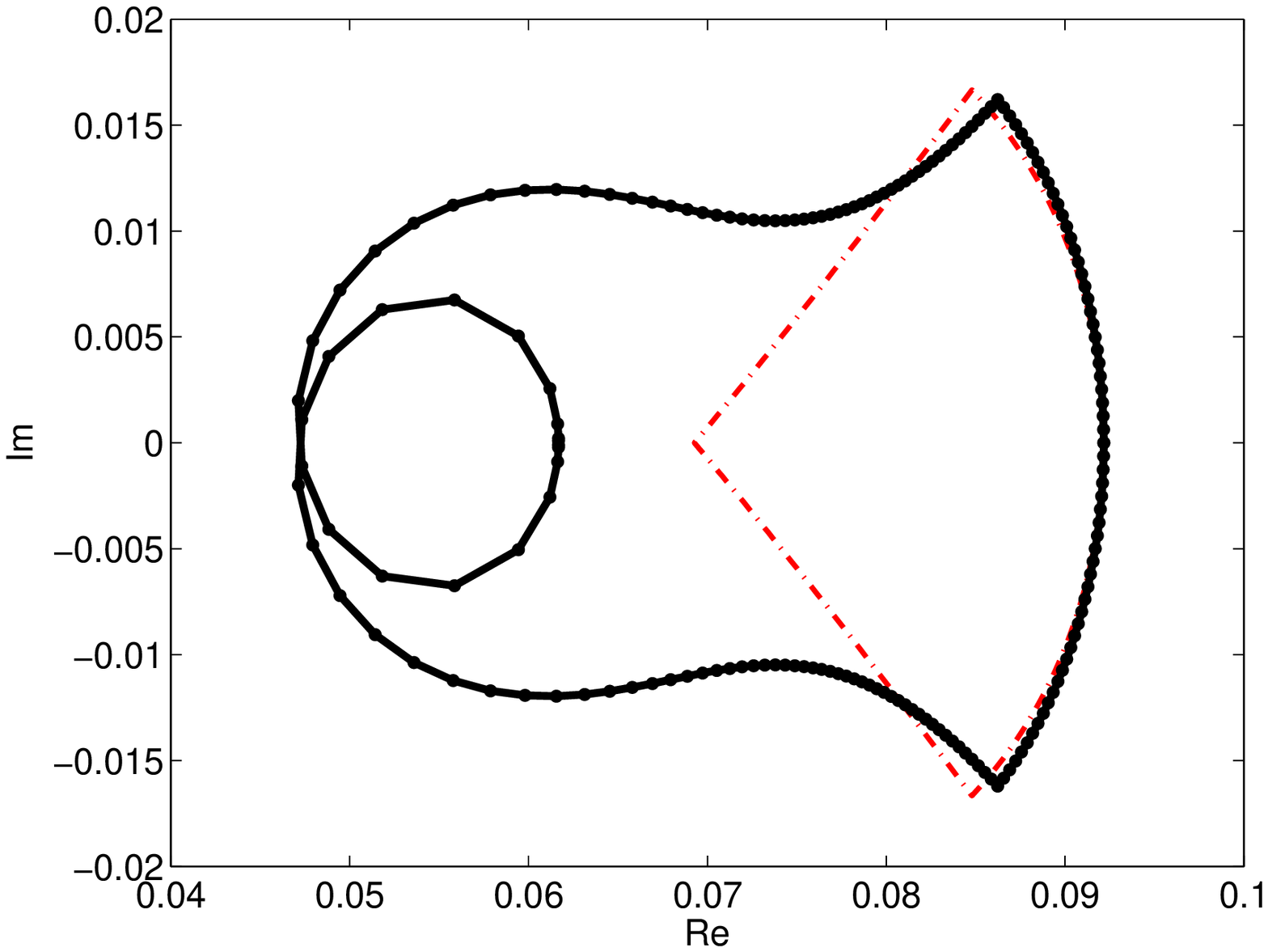} 
\end{center}
\caption{ Universal behavior: The images of the semicircle of radius $40$ under the Evans function
and its universal approximant \eqref{genf}, for $\Gamma=2/3$, $\mu=1$, $\nu=5$, $v_+=v_*=1/4$.
(Tracking radius $=391.3$.)}
\label{fig0c}
\end{figure}

\begin{figure}[t]
\begin{center}
\includegraphics[width=10cm]{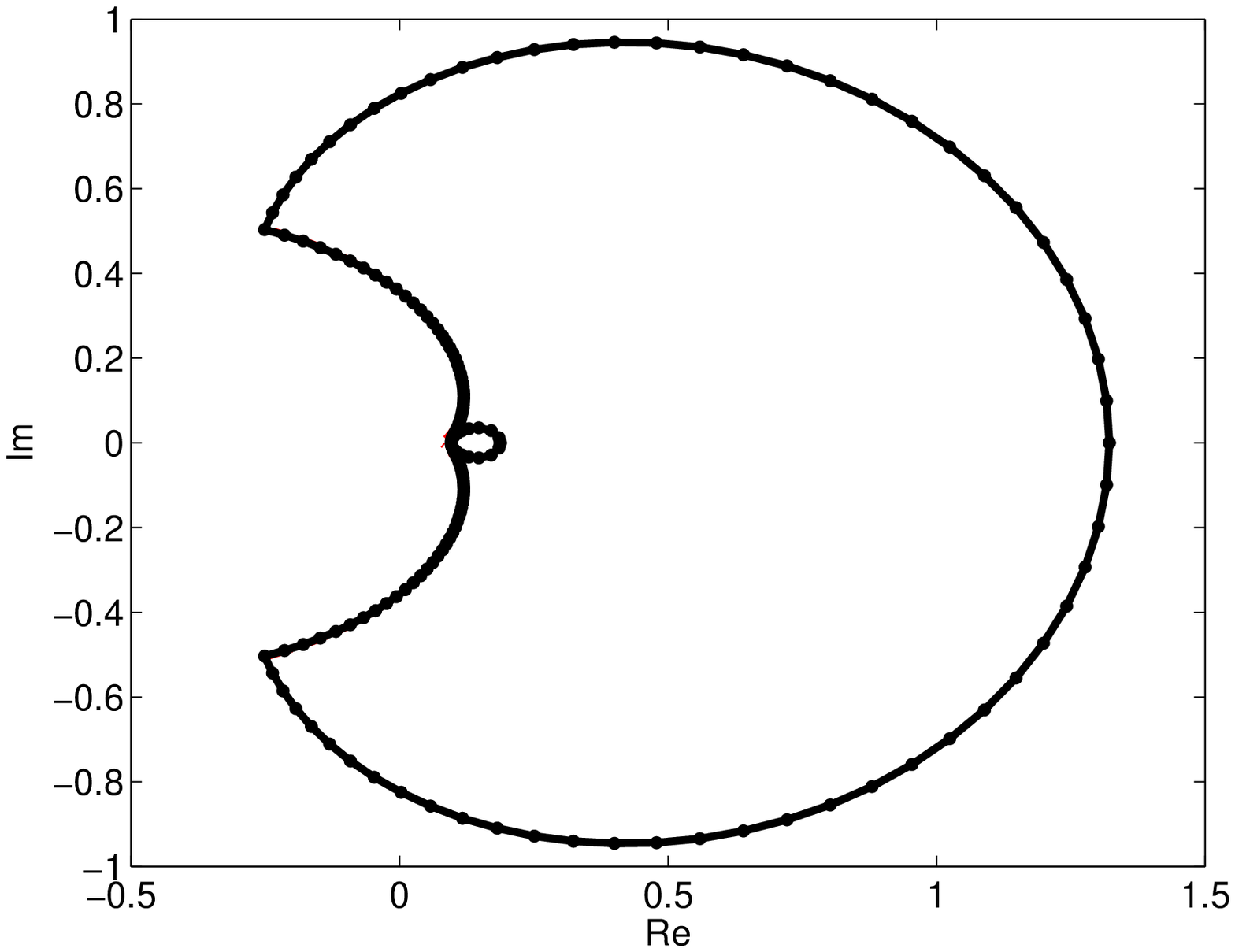} 
\end{center}
\caption{ Universal behavior: The images of the semicircle of radius $40$ under the Evans function
and its universal approximant \eqref{genf}, for $\Gamma=1/5$, $\mu=1$, $\nu=5$, $v_+=v_*=1/11$.
(Tracking radius $=1755.6$.).  Note that the two curves are essentially indistinguishable, except that the 
universal approximate does not loop but instead cusps near the origin.}
\label{fig0d}
\end{figure}

\begin{remark}\label{ghp}
More generally, the argument of Proposition \ref{hflim} yields
\begin{equation} \label{genf}
D(\lambda)/ e^{
\alpha \lambda^{1/2}+ \beta \lambda}=\hbox{\rm constant},
\end{equation}
$\alpha$ and $\beta$ real constants,
for all hyperbolic--parabolic parabolic system of the form
studied in \cite{MZ.3, MZ.4}, where $\beta$ corrects
for variation in the rates of growth (respectively. decay) in hyperbolic modes,
which are given to first order by $\lambda$ times 
their convection rates \cite{MZ.3}.
That $\beta=0$ in the present case is an accident of Lagrangian coordinates,
for which hyperbolic modes are convected (in the rest frame of the shock) 
with the constant fluid velocity $-s$.
In the more general case \eqref{genf}, 
asymptotic behavior of contours is determined
(up to rescaling in $\lambda$) by $\sgn \alpha$ together with the
additional parameter $\beta/\alpha^2$.
%TODO: was this supposed to be there? What does it mean?-K
% which would establish the conjecture.
\end{remark}

%TODO: resolve this question numerically, and either remove, modify,
%or strengthen the conjecture accordingly (KZ).
\begin{remark}\label{smallconj}
Figure \ref{fig0d} is particularly intriguing, showing convergence
also for small frequencies.  This suggests the conjecture
that $D^*$ might converge identically to
$D^\dagger$ in the singular limit $\Gamma \to 0$, $\nu/\mu\to \infty$
for all frequencies, small as well as large.
This would be an interesting direction for further investigation.
We remark that (i) the limit of $D^*$ as $\Gamma \to 0$ {\it is}
accessible by our techniques, Remark \ref{eblowup}, and
%(ii) The limiting case $\Gamma=0$, $v_+=0$ admits an exact profile
%solution $\hat v=\tanh (x/2)$, hence there is at least the possibility of
%an exact solution $D^*|_{\Gamma=0}(\lambda)\equiv Ce^{\alpha \lambda^{1/2}}$.
(ii) the limit $\nu/\mu\to \infty$ should be accessible by standard
singular perturbation techniques.
\end{remark}

%NOTE: This is also interesting in traveling pulse
%case treated by PeW, which coincides with our kato
%normalization.  Rather, something similar is true, I
%guess, not exactly this one.. TODO: include?  NO, not
%so relevant and getting to be too long anyway...

\subsection{The small-amplitude limit}\label{smallamp}
We mention in passing the following, related result
noted in \cite{HLZ}, regarding the {small-amplitude limit}
$v_+\to 1$.

\begin{proposition}\label{smallprop}
The Evans function $D$ converges uniformly as $v_+\to 1$
on compact subsets of $\{\Re \lambda \ge 0\}\setminus\{0\}$,
$\Gamma, \nu, \mu>0$ to a constant $C(\Gamma, \nu, \mu)$.
\end{proposition}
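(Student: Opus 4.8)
The plan is to deduce Proposition~\ref{smallprop} from the continuity of the Evans function in the parameter $v_+$ up to the characteristic endpoint $v_+=1$, together with the fact that at $v_+=1$ the underlying system is constant-coefficient. Indeed, letting $v_+\to 1$ in the Rankine--Hugoniot relations \eqref{e-}--\eqref{e+} gives $u_+\to 0$ and $e_\pm\to e_0:=1/(\Gamma(\Gamma+1))$, so both endstates approach the common state $U_0=(1,0,e_0)$; at $v_+=1$ the profile is trivial, $A(x,\lambda)\equiv \bar A_0(\lambda):=A_\pm(\lambda)\big|_{v_+=1}$, and by Remark~\ref{bestevans} (see also Remark~\ref{unification}, noting $U_+=U_-=U_0$ when $v_+=1$) the Evans function built from the Kato bases \eqref{kato} is then identically constant in $\lambda$, equal to some $C(\Gamma,\nu,\mu)$. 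Thus it suffices to show that $D(\cdot\,;v_+)\to D(\cdot\,;1)\equiv C(\Gamma,\nu,\mu)$ uniformly on each compact $K\subset\{\Re\lambda\ge 0\}\setminus\{0\}$.

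The obstruction is that this convergence is not of the same type as in the strong-shock limit, since the profile decay rates degenerate in the small-amplitude limit (Lemma~\ref{profdecay} explicitly excludes $v_+=1$). Writing $\varepsilon:=|U_+-U_-|=O(1-v_+)$, the explicit polynomial structure of the profile equations \eqref{eq:ideal_profile1}--\eqref{eq:ideal_profile2} and of the coefficient matrix gives $|\hat v-1|,\,|\hat e-e_0|,\,|\hat u_x|,\,|\hat u_{xx}|=O(\varepsilon)$ uniformly in $x$, hence $\|A(\cdot\,,\lambda)-A_\pm(\lambda)\|_{L^\infty(\mathbb R)}=O(\varepsilon)$; but the profile width is $O(1/\varepsilon)$, so $\|A(\cdot\,,\lambda)-A_\pm(\lambda)\|_{L^1(\mathbb R)}=O(1)$ does not vanish, and the conjugation lemma of \cite{MeZ} used in Lemma~\ref{basis} does not by itself yield convergence of $W^\pm$ at $x=0$.

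The remedy is the slow/fast rescaling of the small-amplitude theory of \cite{HuZ.1,PZ} (see also \cite{FS,FS.2}): pass to the stretched variable $\tilde x:=(1-v_+)x$, in which $\hat U$ converges --- uniformly and with uniform exponential decay --- to a limiting (Burgers-type) profile governing the genuinely nonlinear characteristic mode, the transverse components being algebraically slaved to it. The point that makes the present statement comparatively soft is that $\lambda$ is held fixed and bounded away from $0$ on $K$: there the limiting matrices $\bar A_0(\lambda)$, and hence $A_\pm(\lambda)$ for $v_+$ near $1$, are uniformly hyperbolic with spectral gap $\ge\delta(K)>0$ --- the consistent-splitting and analyticity properties of Lemma~\ref{consistent} persisting to $v_+=1$ because the essential spectrum of the constant-coefficient operator about $U_0$ meets $\{\Re\lambda\ge 0\}$ only at the origin. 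Consequently the rescaled system $W_{\tilde x}=(1-v_+)^{-1}A\big(\tilde x/(1-v_+),\lambda\big)W$ has all modes uniformly fast, with growth/decay rates of order $(1-v_+)^{-1}\delta(K)\to\infty$, while the non-constant part of its coefficient is $O(1)$ and integrable in $\tilde x$; the resulting exponential-dichotomy estimates, together with the conjugation/convergence lemmas of \cite{MeZ,PZ}, give $W^\pm\big|_{x=0}=(I+o(1))V^\pm$ as $v_+\to 1$. Since the Kato bases $V^\pm(\lambda;v_+)$ themselves depend continuously on $v_+$ (again by the uniform gap on $K$), we conclude $D(\lambda;v_+)=\det(W^+,W^-)\big|_{x=0}\to\det(V^+,V^-)(\lambda;1)=D(\lambda;1)=C(\Gamma,\nu,\mu)$, uniformly on $K$.

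The main work --- and the only nontrivial point --- is the passage through this rescaling: one must check that the nonisentropic ideal-gas profile has, in the small-amplitude limit, the same slow/fast (Burgers) structure exploited in \cite{HuZ.1,PZ}, and that the spectral hypotheses needed there survive up to $v_+=1$. Both follow routinely from the polynomial form of \eqref{eq:ideal_profile1}--\eqref{eq:ideal_profile2} and from Lemma~\ref{consistent}; accordingly we only sketch the argument, referring to \cite{HuZ.1,PZ,FS,FS.2,HLZ} for the analogous and more involved details.
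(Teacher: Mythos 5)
Your setup is right: as $v_+\to 1$ the endstates merge, the profile degenerates to the constant $U_0$, and Remark~\ref{bestevans} then gives that the Kato-normalized Evans function at $v_+=1$ is constant in $\lambda$. The diagnosis of the difficulty (loss of uniform exponential decay, so the regular-perturbation continuity of Proposition~\ref{evanscont} stops short of $v_+=1$) and the passage to the stretched variable $\tilde x=(1-v_+)x$ are also the right ideas. The gap is in the final estimate: from ``all modes uniformly fast'' plus ``$O(1)$ perturbation, integrable in $\tilde x$'' you conclude $W^\pm|_{x=0}=(I+o(1))V^\pm$. The large spectral gap $\eta\sim(1-v_+)^{-1}\delta(K)$ does make the off-diagonal (mode-mixing) part of the conjugator small, of size $O(|\Theta|/\eta)$. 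But the conjugation equation $\Phi'-[\bar A_0,\Phi]=\Theta(I+\Phi)$ also has a commutant (block-diagonal) part, on which $[\bar A_0,\cdot]$ vanishes and the gap is of no help; there $\Phi_{jj}(0)\sim\int\Theta_{jj}$. Concretely, each mode of $W^-$ at $x=0$ carries a factor $\exp\bigl(\int_{-\infty}^{0}(\mu_j(x,\lambda)-\mu_j^-(\lambda))\,dx\bigr)$ (and similarly for $W^+$ from $+\infty$), and since $\mu_j-\mu_j^\pm=O(1-v_+)$ over a profile of width $O((1-v_+)^{-1})$, these integrals are $O(1)$ uniformly in $v_+$. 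So the multiplicative corrections to $D$ are $O(1)$, not $o(1)$, and the claimed convergence does not follow from the dichotomy estimate alone.

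What is actually needed---and what the paper's proof uses---is the Burgers structure of the small-amplitude limit. For bounded $|\lambda|$ the paper invokes the small-amplitude analysis of \cite{PZ,HuZ.1} as carried out in \cite{HLZ}: after the $\tilde x$-rescaling, $D$ converges to the Evans function of the scalar Burgers eigenvalue problem attached to the genuinely nonlinear characteristic field, and that Burgers Evans function is constant by direct computation. (For large $|\lambda|$ the paper separately checks, via Proposition~\ref{hflem} and \cite{P}, that $\alpha\to 0$ because the limiting profile is a symmetric $\tanh$; this is again a cancellation, not a smallness coming from the gap.) In both regimes, the $O(1)$ scalar factors you discard persist and are accounted for only through this Burgers/symmetry structure. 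Your sketch mentions the Burgers reduction, but the estimate you actually write bypasses it---that bypass is the missing step.
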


\begin{proof}
For $|\lambda|$ sufficiently large, this follows by
Proposition \ref{hflem} together with the fact \cite{P} that profiles
in the small-amplitude converge to an approximate
Burgers equation profile given by a symmetric $\tanh$ function,
for which $\alpha$ vanishes to order $|1-v_+|$ in \eqref{alphaeq}.
For $|\lambda|$ bounded, this follows as described in \cite{HLZ}
by the fact that the Evans function also converges in the small-amplitude
limit to the Evans function associated with the scalar, Burgers equation,
which by direct calculation is constant.
(The latter fact may be deduced, alternatively, by a simple scaling
argument showing that, for Burgers equation, the small-amplitude
limit and large-amplitude limits are equivalent.)
\end{proof}

%TODO: delete the below paragraph? 
%Not so relevant, but maybe nice orienting
%discussion for the reader.. (Jeff?  Greg?  What do you think?-K)
The significance of Proposition \ref{smallprop} 
%TODO: wording?-KZ
%comes from the fact
is
that the exponential rate of decay of profiles to their endstates $U_\pm$
goes to zero in the characteristic limit
$v_+\to 1$, as seen in the proof of Lemma \ref{profdecay}.
Thus, we cannot immediately conclude as in the ``regular'' 
large-amplitude limit $v_+\to v_*$ even that a limit 
exists as $v_+\to 1$.
Moreover, a second consequence is that the computational
domain $[-L_-,L_+]$ on which we carry out numerical evaluation
of the Evans function enlarges to $|L_\pm|\to \infty$
as $v_+\to 1$, since this must be taken roughly proportional
to the inverse of the exponential rate for numerical accuracy;
see Section \ref{inftys}.
Thus, the boundary $v_+=1$ is not directly accessible
by numerical Evans function techniques,
but requires a singular-perturbation analysis, either carried
out numerically or else analytically as above.
Alternatively, small-amplitude instability may be ruled out
by energy estimates as described in the introduction;
see \cite{MN, HuZ.2, BHRZ}.

\begin{remark}\label{smallamprmk}
Similarly as described just below Remark \ref{winding}
for the high-frequency limit, the small-amplitude limit
may be used to obtain a sharp but nonrigorous\footnote{
Recall that rigorous small-amplitude bounds are available by 
energy estimates \cite{HuZ.2,BHRZ}.}
lower bound on the amplitude of unstable shocks by
a convergence study requiring relative error $<1/2$
between Evans values and their constant limit: or,
still sharper, and more conveniently estimated, to 
require relative error $<1/2$ between the Evans function
and its estimated high-frequency approximant.
Convergence in the small- and large-amplitude limits
$v_+\to 1$ and $v_+\to v_*$ 
is illustrated numerically in Fig. \ref{fig1}.
\end{remark}

\section{Numerical protocol}\label{numprot}

We now describe the numerical algorithm, based on approximate
computation of the Evans function, by which we shall
locate any unstable eigenvalues, 
if they exist, in our system, over the compact parameter range under
investigation.  
Specifically, using analyticity of the Evans function in $\Re \lambda \ge 0$,
we numerically compute the winding number around a large semicircle
$B(0,\Lambda)\cap \{\Re \lambda \ge 0\}$ enclosing all possible unstable
roots, obtaining a count of the number of unstable eigenvalues within,
and thus of the total number of unstable eigenvalues.
This approach was first used by Evans and Feroe \cite{EF} 
and has been advanced further in various directions in,
for example, \cite{PW,PSW,AS,Br.1,Br.2,BDG,HuZ.2}.

\subsection{Approximation of the profile}\label{profnum}
Following \cite{BHRZ,HLZ}, we compute the traveling wave profile using \textsc{MatLab}'s {\tt bvp4c} routine, 
which is an adaptive Lobatto quadrature scheme.  These calculations are performed on a finite computational 
domain $[-L_-,L_+]$ with projective boundary conditions $M_\pm (U-U_\pm)=0$.
The values of approximate plus and minus spatial infinity $L_\pm$
are determined experimentally by the requirement that the
absolute error $|U(\pm L_\pm)-U_\pm|$ be 
within a prescribed tolerance $TOL=10^{-3}$.
This requirement is not too demanding in practice; we make more
stringent requirements later in evaluating the Evans function.

\subsection{Bounds on unstable eigenvalues}\label{hfnum} 

We next estimate numerically the coefficients
$|\cF^*_{kl}|(x, \Lambda)$ defined in \eqref{F*}, \eqref{Lambda},
using the numerically generated profiles described above,
generating an iterative sequence $\Lambda_{j+1}:=T(\Lambda_j)$,
$$
T(\Lambda):= 2\max\{1, \nu\} \max_{x} 
\Big( 
\frac{ |\cF^*_{--}|+|\cF^*_{++}| + 2\sqrt{|\cF^*_{-+}||\cF^*_{+-}|}}
{\hat v^{1/2}} 
(x, \Lambda)
\Big)^2.
$$
This is easily seen to converge, 
with odd terms monotone increasing and even terms monotone decreasing
or the reverse, to a fixed point $\Lambda_*=T(\Lambda_*)$,
which, by Proposition \ref{hf}, gives a bound 
$|\lambda|\le \Lambda_*$
on the maximum modulus of unstable eigenvalues $\Re \lambda\ge 0$.
Computations for a range of typical parameter values 
are displayed in Table \ref{tablerel2}, for the worst case
$v_+=v_*$.
Note the degradation of bounds for $\nu\gg\mu$ or $\nu\ll\mu$,
a consequence of multiple scales (stiffness).
% Rmk. \ref{trackrmk}.

\begin{table}
\begin{center}
\begin{tabular}{|c|ccccc|}
\hline
&\multicolumn{5}{c|}{$\nu$}\\
$\Gamma$ & 0.2 & 0.5 & 1.0 & 2.0 & 5.0\\
\hline
0.2 & 398.6 & 388.8 & 385.3 & 733.8 & 1755.6\\
0.4 & 211.7 & 182.3 & 175.1 & 325.0 & 762.0\\
0.6 & 222.5 & 123.4 & 111.5 & 198.4 & 449.8\\
2/3 & 226.8 & 114.3 & 100.4 & 175.3 & 391.3\\
0.8 & 236.5 & 103.9 & 85.3 & 142.6 & 307.1\\
1.0 & 253.8 & 100.8 & 73.7 & 113.8 & 229.2\\
1.2 & 274.3 & 106.6 & 69.7 & 98.7 & 183.1\\
1.4 & 300.8 & 117.5 & 70.5 & 91.6 & 154.9\\
1.6 & 347.4 & 131.8 & 74.5 & 89.8 & 138.0\\
1.8 & 397.3 & 148.7 & 80.8 & 91.8 & 128.6\\
2.0 & 450.3 & 167.5 & 88.7 & 96.7 & 124.7\\

\hline
\end{tabular}
\caption{Tracking bound $\Lambda_*$ vs. $\Gamma$, $\nu$ ($v_+=v_*$).}
\label{tablerel2}
\end{center}
\end{table}

\begin{remark}\label{trackrmk}
The poor rigorous tracking estimates obtained for $\nu\gg\mu$
or $\nu \ll\mu$ could in principle be improved by a refined tracking
estimate separating further the parabolic modes: that is, taking account
of the presence of multiple parabolic scales; see \cite{MZ.3, PZ}
for methodology.
%TODO: ideally?  (wording... -K)
Ultimately, this should be treated by singular perturbation techniques
as in \cite{AGJ},
separating out also fast/slow behavior of the background profile.
%In typical applications, $\nu/\mu \sim 1$; see Remark \ref{simplerule}.
\end{remark}

\subsubsection{High-frequency convergence study}\label{hfconstudy}
%Alternatively, we may obtain more realistic, but nonrigorous unstable
Alternatively, we could obtain more realistic, but nonrigorous unstable
eigenvalue bounds by a convergence study as $|\lambda| \to \infty$,
as described in Section \ref{hfconv}.
A convenient algorithm is to estimate coefficients $C$, $\alpha$ of
the high-frequency approximant $D(\lambda)\sim Ce^{\alpha \sqrt{\lambda}}$
by linear fit of $\log D\sim \log C + \alpha \sqrt{\lambda}$ as
$\lambda$ goes to real positive infinity, then 
approximate by binary search the value $\Lambda_*$ at 
which relative error between $D(\lambda)$
and $Ce^{\alpha \sqrt{\lambda}}$ is less than $TOL_1=10^{-1}$ on
the positive semicircle $|\lambda|=\Lambda_*$, $\Re \lambda\ge 0$,
indicating convergence to this tolerance, hence nonvanishing
of $D$, for $|\lambda|\ge \Lambda_*$, $\Re \lambda\ge 0$.
%TODO: restore later if data available???
%The resulting bounds, displayed in Table \ref{tablerel4},
%are much less conservative than those obtained by rigorous tracking
%estimates of Section \ref{hfnum}.
%(TODO: put in data-Jeff)
%CONJECTURE: Should give bound $\sim 10$ as in isentropic case;
%see Example \ref{trackeg}.
The resulting bounds
are much less conservative than those obtained by rigorous tracking
estimates of Section \ref{hfnum};
see for example the discussion following Remark \ref{winding}.
We do not pursue this here, as our main interest is in rigorous bounds.
{However, the observation seems quite important for 
practical numerical testing,}
as typical differences in radius are an order of magnitude or more, and
the size of the radius appears to be the main limiting factor
in computational efficiency.

\subsection{Approximation of the Evans function}\label{evansnum} 
We compute the Evans functions for comparison by two rather different 
techniques,
both of which have been demonstrated to give good numerical results.

\subsubsection{The exterior product method}
Following \cite{Br.1, Br.2, BrZ, BDG}, we work with
``lifted equations''
$$
\cW'=A^{(k)}\cW, \quad \cW:=W_1\wedge \cdots \wedge W_k,
$$
evolving subspaces encoded as exterior products of basis
elements $W_j$, where
\begin{equation}\label{liftdef}
A^{(k)} (W_1\wedge \cdots \wedge W_k):=
(AW_1\wedge \cdots \wedge W_k)+ \dots +
(W_1\wedge \cdots \wedge AW_k),
\end{equation}
defining $\cW^+$ and $\cW^-$ as $k_+$- and $k_-$-products
of bases $\{W_j^+\}$ and $\{W_j^-\}$ 
of the subspaces of solutions of $W'=AW$ decaying
at $+\infty$ and $-\infty$;
in the present case, $k_+=3$, $k_-=2$.

In this setting, the stable (respectively. unstable) manifold at $+\infty$
(respectively. $-\infty$) corresponds to a single solution/vector, eliminating
difficulties of ``parasitic modes'', etc.; see \cite{BrZ,HuZ.2,BDG,BHRZ}
for further discussion.
We then compute the Evans function as
$$
D(\lambda)= \cW^+\wedge\cW^-|_{x=0}
$$
or, alternatively, as
$
D(\lambda)= \tilde \cW^+\cdot \cW^-|_{x=0},
$
where $\tilde \cW^+$ is an appropriate solution of
the adjoint equation 
$
\tilde \cW'=-(A^{(k)})^*\tilde \cW;
$
see \cite{Br.1, Br.2, BrZ, BDG, HuZ.2,BHRZ} for further details.

\subsubsection{The polar coordinate method (``analytic orthogonalization'')}
An alternative method proposed in \cite{HuZ.2} is to encode
$\cW=\gamma \Omega$, where ``angle''
$\Omega=\omega_1\wedge \cdots \wedge \omega_k$
is the exterior product of an orthonormal basis $\{\omega_j\}$
of $\Span \{W_1, \dots, W_k\}$ evolving independently
of $\gamma$ by
some implementation (e.g., Davey's method) of continuous orthogonalization
and ``radius'' $\gamma$ is a complex scalar evolving by a scalar ODE
slaved to $\Omega$, related to Abel's formula for evolution of a full
Wronskian; see \cite{HuZ.2} for further details.
This might be called ``analytic orthogonalization'', as the main difference
from standard continuous orthogonalization routines is that it restores
the important property of analyticity of the Evans function by the introduction
of the radial function $\gamma$ ($\Omega$ by itself is not analytic).

{\bf Advantages/disadvantages:}  The exterior product method is
{\it linear}, but evolving in a high-dimensional ($\sim n^n$) space.  
The polar coordinate method is {\it nonlinear}, hence less well-conditioned
and involving more complicated function calls, but evolving on
a lower-dimensional manifold ($\sim n^2$).
Thus, there is a tradeoff in dimensions vs. conditioning, with the polar
coordinate method the only reasonable option for high-dimensional systems
and the exterior product method somewhat faster and more efficient for
low-dimensional systems \cite{HuZ.2}.  Both methods are effective (and
reasonably comparable in efficiency) for $n\le 7$ or so.

{\bf Role as numerical check:} 
Since the two methods involve completely different
ODE, one linear and the other nonlinear, agreement in their results
is strong, if indirect, evidence that equations have been 
properly encoded and solutions accurately approximated,
{\it at least on the finite computational domain $[-L_-,L_+]$}.
We address in the following subsection
the separate question of determining appropriate $L_\pm$.

\subsubsection{Determination of approximate spatial infinities}\label{inftys}
%TODO: improve this part- maybe put in sharp bound then relax later?
Denoting by $A^{(k)}_\pm(\lambda)$ the limits at $\pm \infty$ of the lifted
matrix $A^{(k_\pm)}(x,\lambda)$ defined in \eqref{liftdef},
and $\mu_+$ (respectively. $\mu_-$ the eigenvalue of $A^{(k)}_+$ (respectively. $A^{(k)}_-$) 
of smallest (respectively. largest) real part, we find that there holds
a uniform bound
\begin{equation}\label{crude}
e^{(A^{(k)}-\mu)_\pm x}\le C_*, \quad x\lessgtr 0
\end{equation}
on any compact subset of $\Re \lambda\ge 0$,
for $\Gamma$ bounded from zero, and $\mu$, $\nu$ bounded
and bounded from zero, for some $C_*>0$.
For $\lambda$ bounded from zero,
this follows by consistent splitting on $\{\Re \lambda \ge 0\}\setminus\{0\}$, 
and the choice of $k_\pm$ as dimensions of stable (respectively. unstable) 
subspaces of $A_\pm$, which together imply that, away from $\lambda=0$,
$\mu_\pm$ are simple eigenvalues of $A^{(k)}_\pm$.
For $\lambda$ near zero, on the other hand, we may verify directly
that $\mu_\pm$ are semisimple, by the same considerations used
to verify continuous extension of stable (respectively, unstable) subspaces
of $A_\pm$:
simple in the case of $\mu_-$, since the unstable subspace remains
uniformly spectrally separated from remaining eigenvalues of $A_-$; 
semisimple in the case of $\mu_+$,
because hyperbolic characteristics $a_j^+$ are simple, and thus eigenvalues
$\mu_j^+$ of small real part, 
by the standard theory of \cite{GZ,ZH,MZ.3}, are analytic and semisimple,
of form $\mu_j^+\sim -\lambda/a_j^+$,
whence $\mu^+$ (the sum of the $k_+=3$ eigenvalues of largest real part)
is semisimple as well.

%By the theoretical development of \cite{Br.1, Br.2, BHRZ, HLZ},
%specifically, the ``quantitative gap lemma'' of Theorem C.2, \cite{BHRZ},
Applying the ``quantitative gap lemma'' of Theorem C.2, \cite{BHRZ},
we have therefore that the relative error
% $\cE$ 
between the solution $\cW^\pm(\pm L_\pm)$
at plus or minus approximate spatial infinity $x=\pm L_\pm$ and
the constant-coefficient initialization $\cV^\pm e^{\mu_\pm \pm L_\pm}$
is bounded by $\frac{\epsilon}{1-\epsilon}$, for
\begin{equation}\label{quantgap}
\epsilon:= C_*|A^{(k)}(\cdot, \lambda)-A^{(k)}_\pm 
(\lambda)|_{L^1(\pm L,\pm \infty)}.
\end{equation}
Using the bound $|M^{(k)}|\le k|M|$ established in Appendix \ref{liftbd},
and the asymptotic behavior
\begin{equation}\label{asymptotics}
A_j(x,\lambda)-A_{j,\pm} (\lambda)\approx Q_je^{-\theta_\pm |x|},
\end{equation}
$A=:\lambda A_1 + A_0$ for $x\to \pm \infty$, we may thus estimate
$$
\begin{aligned}
\max_{|\lambda|\le \Lambda}
&|A^{(k)}(\cdot, \lambda)-A^{(k)}_\pm (\lambda)|_{L^1(\pm L,\pm \infty)}
\le 
\max_{|\lambda|\le \Lambda}
k |A(\cdot, \lambda)-A_\pm (\lambda)|_{L^1(\pm L,\pm \infty)}\\
&\approx 
\max_{|\lambda|\le \Lambda}
 \frac{k |A(\pm L_\pm, \lambda)-A_\pm (\lambda)|}{\theta_\pm}\\
&\le
  k\frac{ |A_0(\pm L_\pm)-A_{0,\pm} |+ \Lambda
 |A_1(\pm L_\pm)-A_{1,\pm} | }{\theta_\pm}\\
&=
  k\frac{ |A(\pm L_\pm,0)-A_{\pm}(0) |} {\theta_\pm}\\
&\quad +k\frac{ |A(\pm L_\pm, \Lambda)-A_{\pm}(\Lambda)-
 (A(\pm L_\pm,0)-A_{\pm}(0) )| }{\theta_\pm}.\\
\end{aligned}
$$

This gives a theoretical relative error bound
of $TOL$ (tolerance) between initializing basis at $\pm L$ and 
actual basis for the theoretical Evans function, 
provided that
$$
\begin{aligned}
C_*k&\Big( \frac{ |A(\pm L_\pm,0)-A_{\pm}(0) |} {\theta_\pm}
+\frac{ |A(\pm L_\pm, \Lambda)-A_{\pm}(\Lambda)-
 (A(\pm L_\pm,0)-A_{\pm}(0) )| }{\theta_\pm}\Big)\\
&
\le \frac{TOL}{1+TOL}\approx TOL,
\end{aligned}
$$
or, approximately,
\begin{equation}\label{Lcond}
\begin{aligned}
 &|A(\pm L_\pm,0)-A_{\pm}(0) |
+ |A(\pm L_\pm, \Lambda)-A_{\pm}(\Lambda)-
 (A(\pm L_\pm,0)-A_{\pm}(0) )|\\
&\quad \le
\frac{\theta_\pm}{C_*k} \, TOL.
\end{aligned}
\end{equation}

\begin{remark}\label{Lform}
Alternatively, working directly from \eqref{asymptotics},
we may solve \eqref{Lcond} with equality for
\begin{equation}\label{Lest}
L_\pm \approx \frac{\log C_* + \log k+ 
\log (|Q_0^\pm| + \Lambda |Q_1^\pm|) + \log \theta_\pm^{-1}
+\log TOL^{-1}}{\theta_\pm}.
\end{equation}
A reasonably good bound (noting insensitivity of log, and also
cancellation in large $\lambda$ vs. small $\theta$ effects) for $k=2$,
$\Lambda \sim 10^2$, $TOL=10^{-3}$, $|Q_1|=1$ for
the sparse matrix $Q_1$, $C_*=10^2$, 
and throwing out $\theta$ and $|Q_0|$ terms
as negligible for large $|\lambda|$, is
$$
L_\pm\approx \frac{\log 2 + 7\log 10  }{\theta_\pm}
\approx \frac{17}{\theta_\pm}.
$$
%(TODO: Jeff, please compare results of this vs. exact above.- K)
\end{remark}

%\begin{remark}\label{Lestrmk}
%(adapted from \cite{BHRZ}. TODO: update to present situation.)
%Applying \eqref{Lest}, we obtain bounds for 
%$TOL=10^{-3}$ for $L_+\sim TODO$. 
%This is conservative, as we have made little effort to optimize bounds, but still within the realm of our experiments.  Our numerical convergence studies 
%indicate that $L_\pm=TODO$ in fact suffices for $10^{-3}$ accuracy.
%\end{remark}

{\bf Numerical algorithm}
Starting with the values needed for accurate profile approximation,
we increment $L_\pm$ by some fixed step-size (here, we have chosen
step-size $5$) until \eqref{Lcond} is satisfied, 
taking $k=2$, $TOL=10^{-3}$, and
conservatively estimating $C_*=10^2$.
In Table \ref{tablerel4}, we have displayed values of
$\theta_\pm$, $L_\pm$ computed from \eqref{Lcond} with $TOL=10^{-3}$,
$C_*=10^2$, 
for various values of $\Gamma$ and $v_+$, with $\mu=1$ and $\nu$ 
set to the value $\nu= (3/4)\frac{9\gamma -5}{4}$,
$\gamma=\Gamma+1$, predicted by the kinetic theory of gases;
see Appendix \ref{simple}.
In principle, $C_*$ could be estimated numerically for a more
precise bound.
In practice, convergence studies reveal these bounds to be
rather conservative; see
Table \ref{tablerel5}, or \cite{BHRZ,HLZ} in the isentropic case.
%TODO: or, current paper, somewhere, I hope! (TODO- Jeff)
%(TODO: or, just state that we determine $L_\pm$ this way
%for all runs?  Depends I guess on outcome of tests...-K
%JEFF- please update the final statements explaining what we
%do, depending on what exactly you did.-K)

%TODO: Here, there should be convergence study testing
%actual rel. error (measured by larger L solution) at
%L_\pm and again at x=0..

\begin{table}
\begin{center}
\begin{tabular}{|c|ccc|ccc|}
\hline
\multicolumn{7}{|c|}{$v_+=0.7$}\\
\hline
$\Gamma$ & $L_-$ & $\theta_-$ & $17/\theta_-$ & $L_+$ & $\theta_+$ &
$17/\theta_+$\\
\hline
0.2 & 68 & 0.28 & 61 & 68 & 0.27 & 61\\
0.4 & 75 & 0.26 & 67 & 74 & 0.26 & 67\\
0.6 & 83 & 0.23 & 74 & 81 & 0.23 & 74\\
2/3 & 85 & 0.22 & 76 & 83 & 0.23 & 76\\
0.8 & 90 & 0.21 & 81 & 87 & 0.22 & 81\\
1.0 & 98 & 0.20 & 87 & 92 & 0.21 & 87\\
\hline
\multicolumn{7}{|c|}{$v_+=v_*$}\\
\hline
$\Gamma$ & $L_-$ & $\theta_-$ & $17/\theta_-$ & $L_+$ & $\theta_+$ &
$17/\theta_+$\\
\hline
0.2 & 22 & 0.91 & 19 & 24 & 0.83 & 19\\
0.4 & 28 & 0.70 & 25 & 27 & 0.71 & 25\\
0.6 & 34 & 0.57 & 30 & 31 & 0.61 & 30\\
2/3 & 36 & 0.53 & 32 & 33 & 0.59 & 32\\
0.8 & 41 & 0.48 & 36 & 36 & 0.54 & 36\\
1.0 & 47 & 0.41 & 42 & 40 & 0.48 & 42\\
\hline
\end{tabular}
\caption{$L_+$($\theta_+)$/$L_-$($\theta_-$) vs. $\Gamma$, $v_+$.}
\label{tablerel4}
\end{center}
\end{table}

{\bf Translation to $x=0$.}
The above-described algorithm is designed to achieve relative accuracy
$TOL$ of $\cW_\pm$ at $x=\pm L_\pm$, whereas the 
accuracy of the Evans function is determined, rather, by their
relative errors at $x=0$.
A complete description of the error must thus include also a discussion
of possible magnification through the evolution from $\pm L_\pm$ to $0$.
However, as discussed in \cite{Br.1,Br.2}, the flow toward $x=0$
is in fact quite stable, since, by construction, the modes $\cW_\pm$
we seek to approximate are the fastest decaying in the flow toward
infinity, hence the fastest growing in backward flow toward zero, 
with all other modes decaying exponentially in relative size.
Thus, in practice, the resulting magnification in error is negligible;
see \cite{Br.1,Br.2} for further discussion or numerical studies.

%TODO (Jeff): convergence study for some representative value(s),
%to back all of this up? 

%TODO: restore later? (NO, mostly done anyway...--K)
%\begin{example}\label{Leg}
%TODO: for typical gas (say, diatomic? monatomic?), compute
%relative errors of $\cW^\pm$ at $L_\pm$ and at $x=0$, by 
%comparison with ``numerical exact solution'' obtained for
%larger value of $L$.  This should validate above observations
%by (i) exhibiting rel. error at $L_\pm$ of $\le TOL:=10^{-2}$,
%and (ii) exhibiting similar or smaller rel. error at $x=0$,
%also $\le TOL$.  Finally, compute rel. error of $D$ to see
%how it all comes out.
%\end{example}

\begin{remark}\label{bothL}
Since the polar coordinate method computes the
same quantity $\cW$ in different coordinates,
the initialization error bounds derived for the exterior
product method apply also for the polar coordinate method and 
so the same values $L_\pm$ may be used for both computations.
\end{remark}

\subsubsection{ODE protocol}
Following \cite{Br.1,Br.2,BrZ,BDG,HuZ.2,BHRZ},  
to further improve the numerical efficiency and accuracy of the 
shooting scheme, we rescale $\cW$ and $\widetilde{\cW}$ 
to remove exponential growth/decay at infinity, and thus eliminate potential problems with stiffness.  
Specifically, we let $\cW(x) = e^{\mu^- x} \cV(x)$, where $\mu^-$ is the growth rate of the unstable manifold at $x=-\infty$, and we solve instead 
$$
\cV'(x) = (A^{(k)}(x,\lambda)-\mu^- I)\cV(x).
$$
We initialize $\cV(x)$ at $x=-\infty$ to be the eigenvector of 
$A^{(k)}_-(\lambda)$ corresponding to $\mu^-$.  
Similarly, it is straightforward to rescale and initialize 
$\widetilde{\cW}(x)$ at $x=+\infty$.
To produce analytically varying Evans function output, the 
initial data $\cV(-L_-)$ and $\widetilde{\cV}(L_+)$ 
must be chosen analytically using \eqref{kato}.
The algorithm of \cite{BrZ} works well for this purpose, as discussed
further in \cite{BHRZ, HuZ.2}.

The ODE calculations for individual $\lambda$ are
carried out using \textsc{MatLab}'s {\tt ode45} routine, 
which is the adaptive 4th-order Runge-Kutta-Fehlberg method (RKF45), 
after scaling out the exponential decay rate of the constant-coefficient 
solution at spatial infinity, as described just above.
This method is known to have excellent accuracy \cite{BDG,HuZ.2};  
in addition, the adaptive refinement gives automatic error control.  
Typical runs involved roughly $300$ mesh points per side, with error tolerance set to {\tt AbsTol = 1e-6} and {\tt RelTol = 1e-8}.  
%TODO: Jeff- please update the final sentence to current values..-K

\begin{table}
\begin{center}
\begin{tabular}{|c|rcc|}
\hline
$L$ & $\mbox{rel}(\widetilde{W}^+(0))$ & $\mbox{rel}(W^-(0))$ &
$\mbox{rel}(D)$\\
\hline
20 & 2.2(-2) & 5.8(-3) & 3.8(-3)\\
25 & 3.3(-3) & 8.8(-4) & 4.6(-4)\\
30 & 4.7(-4) & 1.3(-4) & 5.3(-5)\\
35 & 6.4(-5) & 1.8(-5) & 5.7(-6)\\
40 & 8.7(-6) & 2.6(-6) & 8.9(-7)\\
45 & 1.1(-6) & 3.5(-7) & 2.2(-7)\\
\hline
\end{tabular}
\caption{
%Shows the convergence 
Convergence of $\widetilde{W}^+(0)$, $W^-(0)$, and
$D$ as $L$ increases from $25$ to $50$, incrementing by $5$, for a
diatomic gas ($\gamma=7/5$ and $\nu/\mu=1.47$.  The values were
computed for a quarter circle of radius $R=69$ consisting of 90
points.  Relative errors were computed at each point and the maximum
value along the contour is reported with the next higher value of $L$
being the baseline.}
\label{tablerel5}
\end{center}
\end{table}

\subsection{Winding number computation}\label{windingalg}

We compute the winding number for fixed parameter values
about the semicircle $\partial\{\lambda:\,
\Re \lambda\ge 0, \, |\lambda|\le \Lambda\}$ 
%TODO: update!
by varying values of $\lambda$ along $180$ points of the contour,
with mesh size taken quadratic in modulus to concentrate
sample points near the origin where angles change more quickly,
and summing the resulting changes in ${\rm arg}(D(\lambda))$,
using $\Im \log D(\lambda) ={\rm arg} D(\lambda) ({\rm mod} 2\pi)$,
available in \textsc{MatLab} by direct function calls.
As a check on winding number accuracy, we test a posteriori that the 
change in argument of $D$ for each step is less than $\pi/8$.  
%TODO: $\pi/8$? or $\pi/25$ as before??
Recall, by Rouch\'e's Theorem, that accuracy is preserved so long 
as the argument varies by less than $\pi$ along each mesh interval.

\section{Numerical results}\label{numresults}

Finally, we describe our numerical results in various cases,
using the numerical method just described, varying $v_+$
between $.7$ and the theoretical lower value $v_*$ in
our rescaled coordinates \eqref{scaling}.
For comparison between these and standard coordinates, it is useful
to convert these values to their associated {\it Mach number}, a
standard dimensionless measure of shock strength discussed further 
in Appendix \ref{mach}. As computed in \eqref{machform}, this
is given by
$ M=\frac{\sqrt{2}}{\sqrt{(\Gamma +2)(v_+-v_*)}} $,
which for $1-v_+$ small reduces using \eqref{phys} to
$$
M=\frac{1}{\sqrt{1-   (1-v_+)\frac{2+\Gamma}{2}} }
\approx 
1 +  (1-v_+)\frac{2+\Gamma}{4} \le 1 + |1-v_+|
$$
for the range $\Gamma\in [0,2]$ under consideration.
In particular, for the upper limit $v_+=.7$ of our computations, 
we have on the reduced range $0\le \Gamma\le 1$ the exact upper bound 
$ M\le (.55)^{-1/2} $, or approximately Mach $1.35$,
%CHANGED: we don't know whether it is in the range proved
%analytically, just that it is pretty darn small.... K
well within the small-amplitude regime. 
%for which stability has been shown analytically in \cite{HuZ.1}.
Recall that stability of small-amplitude shocks
has been shown analytically in \cite{HuZ.1}.
%ENDCHANGED
%TODO: This is a weak point I think... Really, we should go
%down a bit smaller I guess to be sure...
%SOMETHING TO ADD IN REVISION?

The smallest computed physical value $v_+-v_*=10^{-3}$ corresponds to
Mach $M\sim 100$, at which we see already convergence of the Evans
function to that of the nonphysical limit $v_+=v_*$, corresponding
to Mach $M=\infty$.
For a visual comparison, we display iso-Mach (constant Mach number) 
curves in the $\Gamma$-$v_+$ plane in Figure \ref{machfig}.

\begin{figure}[t]
\begin{center}
\includegraphics[width=10cm]{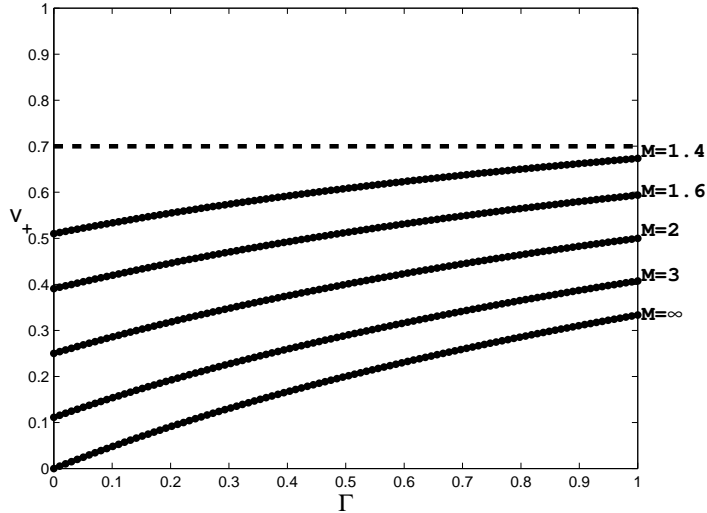}
\end{center}
\caption{ Iso-Mach curves in $\Gamma$, $v_+$.}
\label{machfig}
\end{figure}

%\subsection{Results for $\nu/\mu\sim 1$}\label{windingnum} 

%(Main case)
%
%TODO: Display of convergence, computation
%of winding numbers, etc. (Jeff).
%TODO? Comment: qualitatively extremely similar to the
%contours of the isentropic case; cf. \cite{HLZ}. NO!
%Cons. of universal hf behavior rather... KZ
%
%NOTES:

\subsection{Description of experiments: broad range}\label{windingnum} 
In the main case considered, of $\mu$ and $\nu$ of comparable 
but wide-ranging
size, a first set of experiments was carried out in the
range $\Gamma\in [.2,2]$, $\nu\in [0.2,5]$, sampling at mesh points
\[
\begin{aligned}
(\Gamma,\nu) &\in \{0.2,0.4.,0.6,2/3,0.8,1.0,1.2,1.4,1.6,1.8,2.0\}\\
& \times \{0.2,0.5,1.0,2.0,5.0\},\\
\end{aligned}
\]
$55$ pairs in all, where, for each value $(\Gamma, \nu)$, $v_+$
was varied among $8$ mesh points on a logarithmic scale from $.7$ to $v_*$,
for a total of $440$ runs in all.
In each of the cases that we examined, the winding number was zero,  
indicating spectral stability 
and thereby nonlinear stability and existence of shock layers 
in the vanishing viscosity limit, by the results of 
\cite{MZ.4,Z.3,GMWZ.1,GMWZ.2,GMWZ.3}.
These runs took $12$ days to complete, of which 8 days were spent
on the upper extreme case $\nu=5$, and 
%$TODO$  FIX THIS DISCUSSION
$2$ days were spent on the
lower extreme case $\nu=.2$, both out of physical range.

\subsection{Description of experiments: physical range}
A second set of experiments was carried out for $\Gamma$ values 
corresponding to monatomic, diatomic, etc. gas on a refined mesh
for $\nu$ within the smaller, physical range $\nu\in [1,2]$
indicated by Appendices \ref{constants} and \ref{simple},
sampling at
\[
\begin{aligned}
(\Gamma,\nu) &\in \{2/11,2/9,2/7,2/5,2/3\}\\
& \times \{1.0,1.1,1.2,1.3,1.4,1.5,1.6,1.7,1.8,1.9,2.0\},\\
\end{aligned}
\]
with $v_+$ again
varied among $8$ mesh points on a logarithmic scale from $.7$ to $v_*$,
again a total of $440$ runs.
The results again were winding number zero for each case tested,
indicating spectral stability.  These runs took 10 days to complete.
We remark that runs for $\nu=5$ and $\nu=.2$ took over twice as long to complete 
compared with the average, again reflecting the stiffness alluded to
in Remark \ref{trackrmk}, associated with difference in scale between $\nu$
and $\mu=1$.
%This is really inefficient, already almost out of reasonable range with rigorous
%tracking bounds.
%The bulk of the remaining time was taken up near the characteristic
%small-amplitude limit, with running time typically monotone decreasing
%as shock strength increased.
%Likewise, run-time was seen to decrease with increasing $\Gamma$.
%As mentioned in Section \ref{hfconstudy}, run time could be decreased
%by an order of magnitude or more
%by subsituting for the radius bounds obtained by rigorous tracking estimates
%the more accurate but nonrigorous bounds obtained by convergence study 
%toward the high-frequency limit.

\begin{figure}[t]
\begin{center}
\includegraphics[width=10cm]{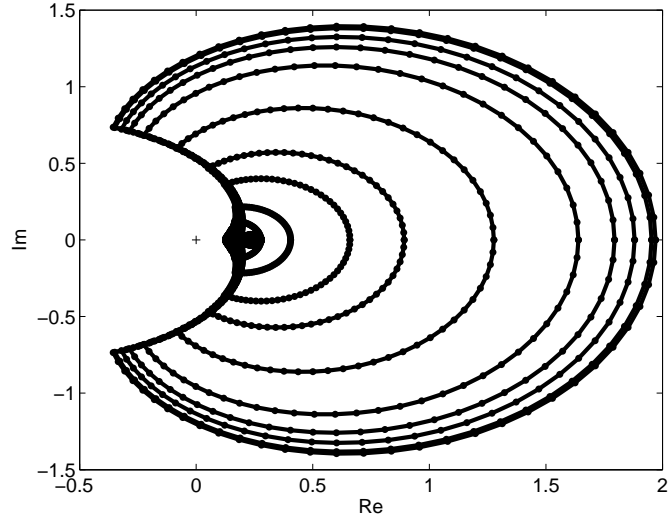} 
\end{center}
\caption{ Convergence to the limiting Evans function as $v_+\to v_*$ for a monatomic gas, $\Gamma=2/3$, $\mu=\nu=1$, $v_*=1/4$.
The contours depicted, going from inner to outer, are images of the semicircle of radius $25$ under the Evans function $D$
for $v_+=.7,.6,.5,.4,.35,.3,.27,.26,.255,.251,.2501$.  The outermost contour is the image under the limiting Evans function $D^*$,
which is essentially indistinguishable from the images for $v_+=.251$ and $v_+=.2501$.}
\label{fig1}
\end{figure}
%The values of v+ are .5,.3,.26,.251,.25 (monatomic v_star=.25).
%
%Radius=25, nu=1.

%OLD, omitted:
%\subsection{Results for $\nu/\mu>>1$, $\nu/\mu<<1$}\label{noncomparable} 
%In this section, use sharper but nonrigorous bounds on unstable
%eigenvalues obtained by convergence study of high-freq. limit.
%(Since otherwise out of numerical range).
%We find again stability.
%
%NOTE: for typical gases, $\nu/\mu\sim 1$ (Remark \ref{simplerule}).
%For liquids, may be very much smaller- however, ideal gas assumptions
%break down anyway, as discussed in Appendices.. (K).
%
%TODO: fill in.
%
%%TODO: maybe later this? (optional)
%\subsection{Monatomic and diatomic gas}\label{monadiatomic} 
%TODO: in this section do two main physical cases with theoretical $\nu$
%value in somewhat greater detail, with convergence studies, etc.
%

%TODO: restore later?  So far, inconclusive, not sure enough
%to include.... LEAVE FOR LATER... demote to remark in appendix...
%\subsection{Temperature-dependent case}\label{tdepcase} 
%TODO: full run, $v_+$ going from $0.7$ down to $v_*$, for
%monatomic case with Chapman law $q=1/2$ and diatomic with
%empirical (or?) law $q=.76$.
%
%%(great for physical side!)
%
%If numerical convergence (as I conjecture), very intriguing/suggestive
%for further investigation/analytic demonstration using analysis
%like that of \cite{HLZ}.

\appendix

\section{Gas constants for air}\label{constants}

In this appendix, we list the various relevant gas constants
for dry air at normal temperatures and pressures.
For the specific
%universal 
%NO! specific gas constant equals universal constant divided by
%molar weight!
gas constant (universal gas constant $R\approx 8.3\,\frac{\unit{J}}{{\rm moles} 
\cdot \unit{K}}$ 
divided by molar mass), 
we have
$$
\bar R\approx 287.05\, \frac{\unit{J}}{\unit{kg}\cdot\unit{K}},
$$
$\unit{J}=$ Joules, $\unit{kg}=$ Kilograms, $\unit{K}=$ degrees Kelvin \cite{Bro2}.
%\footnote{http://en.wikipedia.org/wiki/Density\_of\_air}
For specific heat at constant volume, we have
$$
c_v\approx  716\, \frac{\unit{J}}{\unit{kg}\cdot\unit{K}}
$$
at $0\,^\circ\unit{C}$ (degrees Celsius) \cite{SVK}.
%\footnote{http://www.efunda.com/Materials/common\_...}
%varying up to $\approx  805 \frac{J}{Kg\cdot K}$ at $500$ degrees Celsius.
Computing the dimensionless quantity
$\Gamma=\bar R/c_v$, we thus obtain 
$$
\Gamma \approx 0.401,
$$
in remarkable agreement with the value $\Gamma=.4$ predicted
by statistical mechanical approximation $\Gamma=\frac{2}{2n+1}$
for a diatomic gas, $n=2$.
(Recall that air is composed of roughly $78\%$ nitrogen,
$21\%$ oxygen, and $1\%$ neon, so it is essentially a diatomic mixture.)

For thermal conductivity, the ratio of heat flux to temperature
gradient asserted by Fourier's law of heat conduction, 
we have the value
$$
\kappa \approx .025\, \frac{\unit{W}}{\unit{m}\cdot\unit{K}},
$$
$\unit{W}=$ Watts $=$ Joules per second,
%\footnote{http://en.wikipedia.or/wiki/Thermal\_conduc...}
and for dynamic or ``first'' viscosity, the rate of
proportionality of shear stress to velocity gradient
of a shear flow asserted in Newton's law of viscosity,
the value
$$
\mu_1\approx
(1.78)\times 10^{-5}\, \frac{\unit{kg}}{\unit{m}\cdot \unit{s}}.
$$
$\unit{m}=$ meters, $\unit{s}=$ seconds \cite{Bro2,Whi}.
%\footnote{http://en.wikipedia.org/wiki/Viscosity...}
%
Collecting these values, we may compute the ``constant-volume Prandtl 
number''
\begin{equation}\label{prandtl}
{\rm Pr}_v:= c_v \mu_1/\kappa \approx
\frac{ (716)(1.78)\times 10^{-5}}{ (.025)}
\approx .510,
\end{equation}
or very nearly $1/2$.
This is related to the usual (constant-pressure) Prandtl number
${\rm Pr}:=c_p \mu_1/\kappa$ by ${\rm Pr}_v={\rm Pr}/\gamma$, where
$\gamma:=c_p/c_v$ is the heat capacity ratio, or {\it adiabatic index},
relating specific heat at constant pressure $c_p$ to specific heat at
constant volume $c_v$, under ideal gas assumptions, 
$\gamma=1+\Gamma$.  For a variety of gases over a fairly wide range of temperatures \cite[p. 43]{Whi},
${\rm Pr}\approx 0.7$.
%\footnote{http://en.wikipedia.org/wiki/Prandtl\_number;
%http://en.wikipedia.org/wiki/Adiabatic\_index}

Recall that the effective viscosity appearing in the
one-dimensional Navier--Stokes equations is $\mu=2\mu_1+\mu_2$,
the sum of twice the dynamic viscosity and the ``second viscosity'' $\mu_2$, 
which is commonly taken as $\mu_2=-(2/3)\mu_1$ based on the assumption
that pressure equals ``mean pressure'' defined as one-third the
trace of the three-dimensional stress tensor, 
an approximation that appears to be in good agreement with
experiment at least for monatomic and diatomic gases
\cite{Ba, Ro}.
We therefore take $\mu\approx (4/3)\mu_1$, 
giving
\begin{equation}\label{prandtlel}
\nu/\mu = (3/4){\rm Pr}_v^{-1} \approx 1.47.
\end{equation}
Interesting values for computation are thus
$\Gamma=.4$ ($\gamma=1.4$), $\nu/\mu=1.47$, modeling air.

\begin{remark}\label{alter}
A convenient alternative formula involving commonly tabulated
dimensionless quantities is
\begin{equation}\label{prandtlrel2}
\nu/\mu = (3/4)\gamma{\rm Pr}^{-1},
\end{equation}
where ${\rm Pr}$ denotes (usual) Prandtl number and $\gamma=c_p/c_v$
the adiabatic index.
Assuming the value ${\rm Pr}\approx 0.7$ typical of simple
(e.g., monatomic, diatomic) gases, we obtain the
general rule of thumb
\begin{equation}\label{simplerule}
\nu/\mu \approx (1.09)\gamma;
\end{equation}
see Table \ref{Prcomp} for more precise values.
%valid for arbitrary gases obeying the ideal gas assumptions.
\end{remark}

\begin{remark}\label{altGrun}
We note also the alternative formula
$$
\Gamma= \Big(\frac{c_p}{\bar R}-1\Big)^{-1}
$$
for the Gruneisen constant 
in terms of specific heat at constant
pressure. (For air, $c_p\approx 1\,\unit{J}/\unit{g}\cdot \unit{K}= 1,000\, \unit{J}/\unit{kg}\cdot\unit{K}$.)
%TODO, CHECK physical tables on web: $c_p\approx 1.0$ I seem
%to recall, so $c_r\approx 1.43$, ${\rm Pr}\approx .715$.
\end{remark}
%TODO: what is standard notation for specific heat ratio???

\section{Gas constants for other gases}\label{simple}
\subsection{Ideal Gases}
Using the dimensionless formula 
\eqref{prandtlrel2}, we next investigate typical parameter values
for some common gases.
For an ideal gas, the Prandtl number is predicted
by {\it Eucken's formula} \cite{Bro}
\begin{equation}\label{Euc}
{\rm Pr}=\frac{4\gamma}{9\gamma-5},
\end{equation}
giving in combination with \eqref{prandtlrel2} the simple relation
\begin{equation}\label{simplest}
\nu/\mu= (3/4)\frac{9\gamma -5}{4}.
\end{equation}
We display in Table \ref{Prcomp} 
the relation for various gases between the theoretical
value \eqref{Euc} and experimentally measured values for Pr
(Table 1.9-1 of \cite{Bro}, 
adapted from \cite{Lo}, p. 250).
Though not displayed, experimental values of the adiabatic index
$\gamma=c_p/c_v$ match
theoretical predictions to within $1\%$ for monatomic and diatomic
gases, $3\%$ for triatomic, and $4.8\%$ for five-atomic gas CH$_4$,
as do experimental values vs. theoretical predictions of
the Gruneisen coefficient $\Gamma=\bar R/c_v$.
%NOTE: the experimental $\gamma$ match theoretical values to within
%$3\%$ for triatomic, $1\%$ for monatomic and diatomic,
%$4.8\%$ for CH$_4$, $5$-atomic.

In summary, the equation of state and temperature law
predicted by ideal gas theory appear to be extremely accurate
at usual temperatures,
while the predictions involving viscosity ($\mu$, Pr, $\kappa$)
degrade with molecular complexity, being nearly exact for
monatomic gases, quite good for most diatomic gases, but
only a first approximation for triatomic and more complicated
gases.
Indeed, the derivation of viscosity and heat conduction 
formulae involves additional simplifying assumptions
whose validity degrades with structural complexity
\cite{Bro}.
{\it Thus, we may use with some confidence
the theoretical prediction \eqref{simplest}
for simple gases,
but must consult experimental data 
for complex gases}.
% or liquids.}  NO!

\begin{table}
\begin{center}
\begin{tabular}{| c | c | c | c | c | c | c|}
\hline
Gas & $\gamma$ (th.) & Pr (th.) & Pr (exp.) & $\nu/\mu$ (th.) 
& $\nu/\mu$ (exp.) & rel. err.\\
\hline
He &     5/3  &   .667 &    .694  & 1.88  & 1.80 & 4\%\\
Ar &     &   &.669   & & 1.87& .05\%\\
${\rm H}_2$ &  7/5     &   .737 &     .712  & 1.43  & 1.47 & 2\%\\
${\rm N}_2$ &       &   &  .735  &  & 1.43& 0\%\\
${\rm O}_2$ &       &   & .732  &  & 1.43& 0\%\\
CO          &       &   &  .763  &  & 1.38& 3.5\%\\
CO$_2$    &   4/3 & .762   &  .819 &  1.31 & 1.22& 6.8\%\\
H$_2$S    &    &  &  .929 &  & 1.08 & 17.6\%\\
%exp. heat ratio. 1.25
%SO$_2$    &    & &  .833 &  & 1.20 & 8\%\\
SO$_2$    &    & &  .833 &  & 1.17 & 10.7\%\\
CH$_4$    &   5/4 & .8   &  .777 & 1.17  & 1.28 & 9.5\%\\
\hline
\end{tabular}
\bigskip
\caption{
Theoretical vs. experimental values of ${\rm Pr}$ and $\nu/\mu$
and relative error in $\nu/\mu$ at $20\,^\circ\unit{C}=68\,^\circ\unit{F}$ (room 
temperature).}
%TODO: how make degrees??? (here and elsewhere-search `degree') DONE-gdl
\label{Prcomp}
\end{center}
\end{table}

\begin{remark}
From \eqref{gammaformula} and \eqref{Euc} 
we obtain the theoretical range
$$
\gamma\in [1,1.66...],\quad
\nu/\mu \in [.75, 1.88].
$$
%well within range of our computations.
\end{remark}

\subsection{Temperature dependence and the 
kinetic theory of gases}\label{kinetic}
%TODO: more directed references here...
Though the ratio \eqref{Euc} of viscosity and heat conductivity predicted by
the kinetic theory of gases is constant, the absolute size predicted
for either one depends on temperature, $T$.
For example, the predicted viscosity for a monatomic gas
obtained through Chapman--Enskog expansion
of the Boltzmann equations with hard-sphere potential is 
{\it Chapman's formula}
\begin{equation}\label{chapman}
\mu_1= \mu_1(T)=(2/3)
\sqrt \frac{mkT}{\pi \sigma^2},
\end{equation}
where $m$ is mass per particle in kilograms $\unit{kg}$, $k$ is the Boltzmann
constant in Joules per degrees Kelvin $\unit{J}/\unit{K}$, $T$ is temperature in 
degrees Kelvin $\unit{K}$, and $\sigma$ is the collision cross section
in meters squared $\unit{m}^2$, given approximately by one-half diameter
squared \cite{Ba,Bro}.
This appears to give good physical agreement, and a
refined version given by
{\it Sutherland's formula}
\begin{equation}\label{sutherland}
\mu_1= \frac{(1+\tilde m)T^{3/2}}{T+\tilde m},
\end{equation}
$\tilde m$ constant, to give extraordinarily good agreement \cite{Bro}.
More generally, viscosity is typically modeled by
\begin{equation}\label{viscdep}
\mu_1 =CT^q,
\quad
1/2\le q\le 1,
\end{equation}
with $q\approx .76$ for air \cite{L}.

%TODO: something like this?
%Properly, we should include the above-described temperature-dependence
%in the physical study of large-amplitude shock layers.
%This can readily be accomoated in our analysis, as described in
%Appendix \ref{tempdep}, with only minor changes.

%\begin{remark}\label{temprmk}
Properly, we should include the above-described temperature-dependence
in the physical study of large-amplitude shock layers.
Though beyond the scope of the present project, this appears to
be feasible by a slight modification of the same techniques,
as we discuss further in Appendix \ref{tempdep}.
%\end{remark}

\section{Liquids and dense gases}\label{dense}

%For dense gases or liquids, the ideal gas assumptions break down,
%and the polytropic equation of state \eqref{Gammaeq} is replaced
%by more sophisticated versions such as
%Peng--Robinson or ``stiffened polytropic'' equations of state
%\cite{H,HVPM}.

For comparison, values of 
the Prandtl number ${\rm Pr}$ for various media at $20\,^\circ\unit{C}$ are \cite[p. 80]{Whi}:
%\footnote{http://en.wikipedia.org/wiki/Prandtl\_number}
\begin{itemize}
\item around $0.024$ for mercury,
\item around $0.7$ for air and helium,
\item around $7$ for water,
\item around $16$ for ethyl alcohol,
\item around $10,000$ for castor oil, and
\item around $12,000$ for glycerin.
%\item around $7\times10^{21}$ for Earth's mantle
%\item between $100$ and $40,000$ for engine oil,
%\item between $4$ and $5$ for R-12 refrigerant
\end{itemize}
%TODO: Find specific heat ratios $\gamma:=c_p/c_v$ and
%Grueisen numbers $\Gamma:=R/c_v$, to determine 
%parameters $\Gamma$ and $\nu/\mu$.
%There is the second question whether ideal pressure law gives
%good model in these contexts-- check on web.
The adiabatic index (specific heat ratio) of water is 
$\gamma\approx 1.01\approx 1.0$,
%with $C_v$ ($c_v$ expressed in moles) of $\approx 74.53$.
%Thus, 
hence
$$
%\Gamma=R/C_v
%\approx \frac{287.05}{74.53} \approx 3.85,
%NO! THIS IS SPECIFIC GAS CONST FOR AIR! WRONG CALC!
%\quad
\nu/\mu= \gamma/ {\rm Pr}\approx {\rm Pr}^{-1}\approx .143,
$$
quite far from the gas values of Table \ref{Prcomp}.
%with the
%high value of $\Gamma$ indicating the essentially incompressible 
%nature of the fluid.
%I'll look up phenomenological setting of $\Gamma$ for water, and
%see what we can do...
%(though anyway, it's rather incompressible as we see by high $\Gamma$... -K)
%

Moreover, for dense gases or liquids, the ideal gas assumptions break down,
and the polytropic equation of state \eqref{Gammaeq} must be replaced
by more sophisticated versions such as
Peng--Robinson or ``stiffened polytropic'' equations of state
\cite{H,HVPM}.
%
%Behavior does not match the ideal gas equation of state
%$p=\Gamma \rho e$, but is better modeled 
For example, water is often modeled
by a stiffened equation of state
\begin{equation}\label{stiffenedeos}
p=\Gamma \rho e - \gamma P_0
\end{equation}
behaving like a prestressed material, with base stress $P_0$
and $\Gamma$ determined empirically: for example, $\gamma \sim 6.1$ and
$$
P_0=2,000\, \unit{MPa}=2\times 10^9\, \unit{N}/\unit{m}^2= 
%2GPa
%20,000 atmospheres
2\times 10^9\,\unit{kg} /\unit{m} \cdot \unit{s}^2
$$
or $\gamma=7.42$ and $P_0=296.2\, \unit{MPa}$ \cite{IT,HVPM}.
Similar techniques are used to model liquid argon, nickel, mercury, etc.
\cite{H,CDM}.

It would be very interesting to investigate the effects on stability
of these modifications in the polytropic equation of state.
For the moment, what we can say, physically, 
is that {\it insofar as they conform
to a polytropic (ideal gas) equation of state, shock waves are stable}.
However, above a critical density, even standard, simple (e.g., monatomic
or diatomic) gases
are observed {\it not} to conform to a polytropic law \cite{C},
and in this regime our mathematical conclusions hold no sway.

\section{Computation of the Mach number}\label{mach}

A dimensionless quantity measuring shock strength is 
the {\it Mach number} 
\[
M = \frac{u_- - \sigma}{c_-}
\]
(for a left-moving shock), where 
$u_-$ is the downwind velocity, $\sigma$ is the shock speed, 
and $c_-$ is the downwind sound speed, all in Eulerian coordinates.  
The conservation of mass equation in Eulerian coordinates
is $\rho_t + (\rho u)_x = 0$,  giving jump condition 
$\sigma [\rho] = [\rho u]$, or, in Lagrangian coordinates,
\[
\sigma = \frac{u_+ v_- - u_- v_+}{v_- - v_+}.
\]
Thus, 
\[
M = \frac{u_- - \sigma}{c_-} = \frac{v_-(u_- - u_+)}{c_-(v_- - v_+)} = \frac{v_- [u]}{c_- [v]} = -s \frac{v_-}{c_-},
\]
which, under our normalization $s=-1$, $v_-=1$, gives
$$
M=c_-^{-1}=(\Gamma (1+\Gamma) e_-)^{-1/2}
$$
or, using $e_-=\frac{(\Gamma +2)(v_+-v_*)}{2\Gamma(\Gamma +1)}$,
\begin{equation}\label{machform}
M=\frac{\sqrt{2}}{\sqrt{(\Gamma +2)(v_+-v_*)}}.
\end{equation}
In the strong-shock limit $v_+\to v_*$, $M\sim (v_+-v_*)^{-1/2}$;
in the weak shock limit $v_+\to 1$,  $M \to 1$.

\section{Lifted Matrix bounds}\label{liftbd}

%Finally, we
We establish the following useful bound on the operator
norm of the ``lifted'' matrix $A^{(k)}$ acting on $k$-exterior products
$V=V_1\wedge \cdots \wedge V_k$ by the operation
$$
A^{(k)}V:=
AV_1\wedge \cdots \wedge V_k+ \dots + V_1\wedge \cdots \wedge AV_k.
$$
induced by a given matrix $A$, where $A^{(k)}$ by convention is
represented with respect to standard basis elements
$e_{j_1} \wedge \cdots e_{j_k}$.

\begin{lemma}\label{Akbd}
In the matrix operator norm $|\cdot |_p$ with respect to $\ell^p$,
\begin{equation}\label{Akbdeq}
|A^{(k)}|_p\le k|A|_p,
\quad 1\le p\le \infty. 
\end{equation}
\end{lemma}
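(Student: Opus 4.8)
The plan is to realize the lifted operator $A^{(k)}$ as the restriction of a sum of $k$ operators, each of $\ell^p$ operator norm $|A|_p$, acting on the full tensor power $(\mathbb{C}^n)^{\otimes k}$, and then to transfer the bound back to $\Lambda^k\mathbb{C}^n$.

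Identify $(\mathbb{C}^n)^{\otimes k}$ with $\ell^p$ on $k$-tuples of indices, with its standard basis $\{e_{i_1}\otimes\cdots\otimes e_{i_k}\}$, and consider the alternation map $\iota\colon\Lambda^k\mathbb{C}^n\to(\mathbb{C}^n)^{\otimes k}$ defined on wedge basis vectors (for $j_1<\cdots<j_k$) by
$$\iota(e_{j_1}\wedge\cdots\wedge e_{j_k}):=\sum_{\sigma\in S_k}\sgn(\sigma)\,e_{j_{\sigma(1)}}\otimes\cdots\otimes e_{j_{\sigma(k)}}.$$
I would then record two facts. First, $\iota$ intertwines $A^{(k)}$ with the Leibniz operator $L:=\sum_{\ell=1}^{k}I^{\otimes(\ell-1)}\otimes A\otimes I^{\otimes(k-\ell)}$ on $(\mathbb{C}^n)^{\otimes k}$; equivalently, $\iota(\Lambda^k\mathbb{C}^n)$ is $L$-invariant and the restriction of $L$ to it is $\iota\,A^{(k)}\iota^{-1}$. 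This is immediate from the Leibniz (derivation) form of the definition of $A^{(k)}$ together with the fact that $L$ commutes with permutations of the tensor factors. Second, $\iota$ multiplies the $\ell^p$ norm by the single scalar $(k!)^{1/p}$ (read as $1$ when $p=\infty$): for a fixed increasing multi-index $J=(j_1<\cdots<j_k)$ the $k!$ tensors $e_{j_{\sigma(1)}}\otimes\cdots\otimes e_{j_{\sigma(k)}}$ are distinct standard basis vectors, each occurring with coefficient $\pm1$, so $|\iota(e_J)|_p=(k!)^{1/p}$, and for distinct multi-indices $J\ne J'$ the supports of $\iota(e_J)$ and $\iota(e_{J'})$ are disjoint, whence $|\iota(w)|_p=(k!)^{1/p}\,|w|_p$ for all $w\in\Lambda^k\mathbb{C}^n$ written in the standard wedge basis.

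Granting these, the estimate follows in two lines. Since $\iota$ is a scalar dilation, the $\ell^p$ operator norm of any self-map of $\Lambda^k\mathbb{C}^n$ is unchanged when transported through $\iota$; hence $|A^{(k)}|_p$ equals the $\ell^p$ operator norm of the restriction of $L$ to the invariant subspace $\iota(\Lambda^k\mathbb{C}^n)$, which is at most $|L|_p$, and
$$|L|_p\le\sum_{\ell=1}^{k}\big|\,I^{\otimes(\ell-1)}\otimes A\otimes I^{\otimes(k-\ell)}\,\big|_p=k\,|A|_p,$$
the last equality because tensoring with identity matrices leaves the $\ell^p$ operator norm unchanged (one line: $B\otimes I$ acts as $B$ on each coordinate slice, so $|(B\otimes I)x|_p^p=\sum_j|Bx_{\cdot,j}|_p^p\le|B|_p^p|x|_p^p$, with equality attained). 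This gives \eqref{Akbdeq}, and it is sharp, since $A=cI$ yields $A^{(k)}=kc\,I$ on $\Lambda^k\mathbb{C}^n$.

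The only step requiring genuine care is the norm-comparison fact: it is what licenses the tensor-power detour, and it holds precisely because, with the wedge basis normalized to unit coordinate length, distinct basis wedges embed into tensors with disjoint coordinate supports, so $\iota$ is a true scalar dilation rather than merely bounded both ways. As a cross-check one can read off the endpoints directly from the sparse form of $A^{(k)}$: in the wedge basis the $J$-th column has diagonal entry $\sum_\ell A_{j_\ell j_\ell}$ and off-diagonal entries $\pm A_{i,j_\ell}$ with $i\notin J$, so its $\ell^1$ length is at most $\sum_{\ell=1}^k\sum_i|A_{i,j_\ell}|\le k|A|_1$, giving $|A^{(k)}|_1\le k|A|_1$, and dually $|A^{(k)}|_\infty\le k|A|_\infty$; these two alone do not give intermediate $p$ by interpolation, since $|A|_1^{1/p}|A|_\infty^{1/p'}$ can strictly exceed $|A|_p$, which is why the argument is run at the level of operators rather than of norm bounds.
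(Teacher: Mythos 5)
Your proof is correct, and it takes a genuinely different route from the paper's. The paper's proof establishes the bound for $p=1$ directly (maximum column sum, exactly your cross-check at the end), obtains $p=\infty$ by duality via $(A^{(k)})^*=(A^*)^{(k)}$ and $|M|_\infty=|M^*|_1$, and then invokes Riesz--Thorin to pass to intermediate $p$. You correctly identify the flaw in that last step: from $|A^{(k)}|_1\le k|A|_1$ and $|A^{(k)}|_\infty\le k|A|_\infty$, interpolation yields only $|A^{(k)}|_p\le k\,|A|_1^{1/p}|A|_\infty^{1-1/p}$, and the factor $|A|_1^{1/p}|A|_\infty^{1-1/p}$ can strictly exceed $|A|_p$ (e.g.\ for $A=\bigl(\begin{smallmatrix}1&1\\0&1\end{smallmatrix}\bigr)$ and $p=2$ one has $|A|_1^{1/2}|A|_\infty^{1/2}=2$ but $|A|_2\approx 1.618$), so the paper's interpolation argument does not actually give \eqref{Akbdeq}. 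Your tensor-power detour repairs this: the alternation map $\iota$ is an honest scalar dilation in $\ell^p$ (because distinct unordered wedge multi-indices land on disjoint coordinate blocks), the Leibniz operator $L$ preserves $\iota(\Lambda^k\mathbb{C}^n)$ and restricts there to the conjugate of $A^{(k)}$, and $|L|_p\le k|A|_p$ follows from triangle inequality together with $|I^{\otimes(\ell-1)}\otimes A\otimes I^{\otimes(k-\ell)}|_p=|A|_p$. This gives the stated constant $k$ uniformly in $p$ and, as you note, it is sharp ($A=cI$). In short, the endpoints-and-interpolate strategy of the paper is the more elementary-looking one but does not close for $1<p<\infty$, while your operator-level embedding gives a clean proof valid for all $p$ simultaneously; the paper's argument is fully rigorous only at $p=1$ and $p=\infty$, which happen to be the cases used downstream in Section \ref{inftys}.
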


\begin{proof}
It is sufficient to establish \eqref{Akbdeq} for $p=1$ and $p=\infty$,
the result for other $p$ following from the Riesz--Thorin interpolation
theorem
%TODO: reference here please.
$$
|M|_p\le |M|_{s_1}^{\theta_1} |M|_{s_2}^{\theta_2},
$$
$\theta_j>0$, $\theta_1 + \theta_2=1$, for $s_1<p<s_2$.

The $\ell^1$ operator norm is equal to the maximum over all columns
of the sum of moduli of column entries, or, equivalently, the maximum
$\ell^1$ norm of the image of a standard basis element.
Applying this definition to $A^{(k})$, we find readily that
the $\ell^1$ norm of the image of a standard basis element
is bounded by the sum of the $\ell^1$ norms of $k$ terms of form 
$$
Ae_{j_1}\wedge \cdots\wedge e_{j_k},
$$
expanded in standard exterior product basis elements.
As each of these clearly has $\ell^1$ norm bounded by
the $\ell^1$ norm of $Ae_{j_1}$, and thus by $|A|_{1}$,
we obtain the result for $p=1$.
The result for $p=\infty$ may be obtained by duality,
using $(A^{(k)})^*=(A^*)^{(k)}$ together with
$|M|_\infty=|M^*|_1$.
\end{proof}

\section{Temperature-dependent viscosity}\label{tempdep}

Finally, we discuss the changes needed to accommodate a common
temperature or other dependence in the
coefficients of viscosity and heat conduction.
For concreteness, we focus on the case
\begin{equation}\label{visclaw}
\mu(e)=\mu_* e^q, \quad
\kappa (e)=\nu_* e^q,
\qquad
%TODO: for theoretical treatment (later), might change to
%0<q < 1,
0\le q \le  1,
\end{equation}
$\mu_*$, $\nu_*$ constant,
encompassing the Chapman formula 
\eqref{chapman} predicted by the kinetic theory of gases
as well as the more general formula \eqref{viscdep},
indicating afterward by a few brief remarks the extension
to more general situations.

\subsection{Rescaling}\label{trescale}
Under \eqref{eq:ideal_gas}, \eqref{visclaw}, 
it is easily checked that the Navier--Stokes
equations are invariant under the modified rescaling 
\begin{equation}\label{tscaling}
(x,t,v,u,T)\to 
(-\epsilon s|\epsilon s|^{-2q}x, \epsilon s^2|\epsilon s|^{-2q}t,
 v/\epsilon, -u/(\epsilon  s), T/(\epsilon^2 s^2)),
\end{equation}
consisting of \eqref{scaling} with $x$ and $t$ rescaled by the
common factor $|\epsilon s|^{-2q}$.
For the Chapman viscosity $q=1/2$, this reduces to
\begin{equation}\label{chapscaling}
(x,t,v,u,T)\to 
(-(\sgn s) x, |s|t,
 v/\epsilon, -u/(\epsilon  s), T/(\epsilon^2 s^2)),
\end{equation}
essentially undoing the rescaling in the $x$-coordinate.
Evidently, the Rankine-Hugoniot analysis of Section \ref{RH}
goes through unchanged. 

\subsection{Profile equations}\label{tprof}
The profile equations \eqref{eq:ideal_profile1}--\eqref{eq:ideal_profile2}
are unaffected by dependence of $\mu$, $\kappa$.
Setting $\nu_*:=\kappa_*/c_v$, and making the change of independent
variable
\begin{equation}\label{yvar}
\frac{\dif x}{\dif y}= \mu= \mu_* e^q,
\end{equation}
we may thus reduce them to the form 
\begin{align}
v'&=\frac{1}{\mu_*}\left[v(v-1)+\Gamma (e-ve_-)\right],\label{eq:tprofile1}\\
%TODO: fix! parentheses in wrong place! (DONE).
e'&=\frac{v}{\nu_*}\left[-\frac{(v-1)^2}{2}+(e-e_-)+(v-1)\Gamma e_-\right]
\label{eq:tprofile2}
\end{align}
already treated in the constant-viscosity case, $'=\frac{d}{dy}$.
To obtain the profile in original $x$-coordinates, we have
only to recover the change of coordinates $x=x(y)$ by
solving \eqref{yvar} with $e=\hat e(y)$, $\hat e$ the constant-viscosity
profile.

Recalling that $\hat e(y)=e_\pm + O(e^{-\theta |y|})$,
$\theta>0$ for $y\gtrless 0$, where $e_+>0$ and $e_->0$
except in the strong-shock limit $v_+\to v_*$,
we find that $x$ and $y$ are equivalent coordinates on $x>0$,
and on $x<0$ are equivalent for $v_+$ bounded from the strong-shock limit $v_*$.
However, in the strong-shock limit $v_+=v_*$, $e_-=0$, we have
the interesting phenomenon that the $x<0$ part of the shock profile
terminates at a finite value $X_-=x(-\infty)$, since
$$
|x(-\infty)|=|\int_0^{-\infty} (e(y)/c_v)^q\,\dif y|
\le C\int_0^{-\infty} e^{-q\theta |y|}\,\dif y <+\infty.
$$

\begin{remark}
Holding $(v,u,e)_-$ fixed, and varying $v_+$ toward its
minimum value $v_{*}$,
%NOTE: v unchanged by rescaling, so long as v_-=1..
we find that 
$$
s=\sqrt{-[p]/[v]}\to \infty
$$
as $p_+\sim e_+ \sim (e_+/e_-)\sim (v_+-v_*)^{-1}$, since
$v_+$ is bounded from zero, $e_-$ is being held constant,
and ratio $e_+/e_-$ is independent of scaling so may be
computed from formulae \eqref{e-}--\eqref{e+}. 
Thus, shock width in the constant-viscosity case is 
of order $|v_+-v_*|$ going to zero as shock strength
(measured in specific volume $v$) goes to 
its maximum value of $|1-v_*|$.
By comparison, for the Chapman viscosity $\mu\sim e^{1/2}$,
the shock width remains approximately constant in the strong-shock
 limit, a significant deviation in the theories.
See, e.g., \cite{Tru,H} for further discussion of this and related points.
\end{remark}

\begin{remark}\label{anydep}
Clearly, the same procedure may be used to determine profiles
for arbitrary $\mu=\mu(v,u,e)$, $\kappa/\mu$ constant,
setting $\frac{\dif x}{\dif y}=\mu(v,u,e)$ in \eqref{yvar}.
\end{remark}

\subsubsection{Limiting behavior}\label{limprof}
The limiting profile equations at $v_+=v_*$ are
\begin{align}
v'&=\frac{1}{\mu_*}\left[v(v-1)+\Gamma e\right],\label{eq:tlim_profile1}\\
e'&=\frac{v}{\nu_*}\left[-\frac{(v-1)^2}{2}+e\right].
\label{eq:tlim_profile2}
\end{align}
Linearizing about $U_-$ gives
\begin{equation}\label{tlinprof}
\begin{pmatrix}
v \\ e
\end{pmatrix}'=
M
\begin{pmatrix}
v \\ e
\end{pmatrix},
\qquad
M:=
\begin{pmatrix}
\mu_*^{-1} & \Gamma\mu_*^{-1}\\
0  & \nu_*^{-1}\\
\end{pmatrix}.
\end{equation}
For $\nu_*/\mu_*>1$, we have, therefore, that the
slow unstable manifold at $-\infty$ is tangent to
$(\Gamma (\nu_*/\mu_*)/(1-(\nu_*/\mu_*)),1)$, with growth rate
$\sim e^{-\nu_*^{-1}y}$, and thus generically
\begin{equation}\label{newrel}
\frac{\hat e_y}{\hat e}\sim  
%\nu_*^{-1} ,
\frac{1}{\nu_*},
\quad
\frac{\hat u_y}{\hat e}=
\frac{\hat v_y}{\hat e}\sim 
%-\nu_*^{-1}\Gamma/(1-\nu_*^{-1}) \quad
%\frac{\Gamma}{1-\nu_*} \quad
\frac{\Gamma /\mu_*}{1-\nu_*/\mu_*} \quad
\hbox{\rm as $y\to -\infty$}.
\end{equation}
For $\nu_*/\mu_*<1$, the slow manifold is
tangent to $(1,0)$, and we have the opposite situation that, generically,
$\frac{\hat e_y}{\hat e}\sim  \mu_*^{-1}$, 
$\frac{\hat u_y}{\hat e}\to \pm \infty$ as $y\to -\infty$.

\subsection{Eigenvalue equations}\label{teig}
Computing the linearized integrated eigenvalue equations as in 
Section \ref{linint}, and making the change of variables \eqref{yvar},
we obtain after some rearrangement, the modified, temperature-dependent
equations
\begin{equation}\label{yevalue}
\begin{aligned}
&\lambda \hat \mu v+v'-u'=0,\\
&\lambda \hat \mu u
+\Big[ 1+ \frac{q\hat e_y}{\hat e\hat v} 
-\frac{q\hat u_y^2}{\hat \mu \hat e \hat v} \Big]u'
+\Big[\frac{\Gamma}{\hat v} 
-\frac{q \hat u_y}{\hat e \hat v} \Big]\eps'
+\frac{\Gamma \hat u_y}{\hat v}u +\left[
-\frac{\Gamma \hat e}{\hat v^2}+
\frac{\hat u_{y}}{\hat v^2}\right]v'
=\frac{ u''}{\hat v},\\
&\lambda \hat \mu \eps+
%WOW! Cancellation!
%\Big[ 1 +\frac{\nu_* q\hat e_y}{\hat e \hat v}
%-\frac{q\nu_* \hat e_y}{\hat e \hat v} \Big]\eps'
\eps'
+ \left[\hat u_y-\frac{\nu_* \hat u_{yy}}{\hat v} \right] u
+\left[\frac{\Gamma\hat e}{\hat v}-(\nu_*+1)\frac{\hat u_y}{\hat v} 
-\frac{q\nu_* \hat u_y\hat e_y}{\hat \mu \hat e \hat v} 
\right]u'\\
&
\qquad \qquad \qquad \qquad \qquad
\qquad \qquad \qquad \qquad \qquad
\qquad \qquad 
+\left[\frac{\nu_*\hat e_y}{\hat v^2}\right]v'
=\frac{\nu_*}{\hat v}\eps'',
\end{aligned}
\end{equation}
$'=\frac{\dif}{\dif y}$,
where $(\hat v, \hat u, \hat e)=(\hat v, \hat u, \hat e)(y)$
are just as in the constant-viscosity case.

This yields a first-order eigenvalue system
\begin{equation}\label{tfirstorder}
W' = A(y,\lambda) W,
	\end{equation}
\begin{equation} \label{tevans_ode}
A(y,\lambda) =
\begin{pmatrix}
0 & 1 & 0 & 0 & 0 \\
\lambda \hat \mu \nu_*^{-1}\hat v & \nu_*^{-1}\hat v & \nu_*^{-1}\hat  v\hat u_y-\hat  
u_{yy} & \lambda \hat \mu  g(\hat U) & g(\hat U)- h(\hat U) \\
0 & 0&  0 & \lambda \hat \mu  & 1\\
0& 0& 0 & 0 & 1\\
0 & \Gamma-\frac{q\hat u_y}{\hat e} & \lambda \hat \mu \hat v+\Gamma\hat u_y 
& 
%\lambda \hat \mu 
%j(\hat U) 
%\hat v \big( 1 -\frac{q\hat u_y^2}{\hat \mu \hat e \hat v} \big)
\lambda \big( \hat \mu \hat v  -\frac{q\hat u_y^2}{\hat e}\big)
& f(\hat U)- \lambda \hat \mu 
\end{pmatrix},
\end{equation}
\begin{equation}
W = (\eps ,\eps', u, v, v')^T,
\quad \prime  
= \frac{d}{dy},
\end{equation}
where 
\begin{equation}
\begin{aligned}
f(\hat U)&:=\frac{\hat u_y-\Gamma \hat e}{\hat v}+
\hat v \Big( 1+ \frac{q\hat e_y}{\hat e\hat v} 
-\frac{q\hat u_y^2}{\hat \mu \hat e \hat v} \Big)
\\
&=
2\hat v-1-\Gamma e_-
 +\Big(\frac{q\hat e_y}{\hat e} 
-\frac{q\hat u_y^2}{\hat \mu \hat e } \Big),
\end{aligned}
\label{eq:tf_eq}
\end{equation}
\begin{equation}
\begin{aligned}
g(\hat U)&:=\nu_*^{-1}\Big(\Gamma\hat e-(\nu_*+1)\hat u_y \Big)
-\frac{q \hat u_y\hat e_y}{\hat \mu \hat e \hat v} ,
\label{eq:tg_eq}
\end{aligned}
\end{equation}
\begin{equation}
h(\hat U):=-\frac{\hat e_y}{\hat v}= 
-\nu_*^{-1}\left(-\frac{(\hat v-1)^2}{2}+(\hat e-e_-)+(\hat v-1)\Gamma e_-\right),
\label{eq:th_eq}
\end{equation}
%\begin{equation}
%j(\hat U):= \hat v \Big[ 1+ \frac{q\hat e_y}{\hat e\hat v} 
%j(\hat U):= 
%\hat v \Big[ 1 -\frac{q\hat u_y^2}{\hat \mu \hat e \hat v} \Big],
%\label{eq:tj_eq}
%\end{equation}
\begin{equation}
\hat \mu:=\hat e^q,
\label{eq:te_eq}
\end{equation}
reducing for $q=0$ to that of the constant-viscosity case.
We omit the details, which are straightforward but tedious.

Noting that all terms not appearing in the constant-viscosity
case involve derivatives of the profile, hence vanish at $y=\pm \infty$
so long as $e_\pm$ (appearing in denominators) do not vanish,
we may conclude by the constant-viscosity analysis consistent splitting
away from the strong-shock limit $v_+\to v_*$, $e_-\to 0$.
We may thus construct an Evans function that is analytic in $\lambda$
and continuous in all parameters away from the strong-shock limit.

%Moreover, under assumption \eqref{odd}, the coefficient $A$
%is bounded and smooth down to the strong shock limit $v_+\to v_*$;
%likewise, a straightforward analysis similar to that of the constant-viscosity
%case verifies consistent splitting for $v_+>v_*$ and (by direct calculation)
%for the formal limiting system $v_+=v_*$.

\subsubsection{Limiting behavior}\label{eiglim}
Assume, as for the physical cases considered above, that
\begin{equation}\label{odd}
\nu/\mu\equiv \nu_*/\mu_* >1.
\end{equation}
Then, by the limiting analysis \eqref{newrel}, 
all terms in $A(y,\lambda)$ {\it remain bounded and well-defined 
in the strong-shock limit}.
Thus, we may hope for convergence as in the constant-viscosity case.
%TODO: restore later?
%as investigated numerically in Section \ref{tdepcase}.

On the other hand, new terms $\hat e_y/\hat e$, $\hat u_y/\hat e$ of order
$$
\frac{\hat e- e_-}{\hat e} = 1- \frac{e_-}{\hat e}
$$
exhibit singular behavior in the $v_+ \to v_*$, $e_-\to 0$ limit
reminiscent of that of the isentropic case \cite{HLZ}.
In particular,
since $e_-/\hat e $ approaches its limiting value $1$ as $y\to -\infty$
only as $|\hat e-e_-|/e_-$, this means that $|A-A_\pm|$ does not
decay at uniform exponential rate as $y\to -\infty$, but only
as $O(e_-^{-1}e^{-\theta |y|})$, $\theta>0$, so that the strong-shock
limit is a singular perturbation in the sense of \cite{PZ, HLZ} and
not a regular perturbation as in the constant-viscosity case.
Thus, 
%TODO: which?
%a proof of convergence 
an analysis of behavior
in the strong-shock limit is likely
to involve a delicate boundary-layer analysis similar to that of
\cite{HLZ} in the isentropic case.
This appears to be a very interesting direction for further investigation.
%TODO: nonetheless, try to support this conjectured conv. numerically-
%I do think it is so! -KZ

%and convergence
%in the strong shock limit on compact subsets of $\Re \lambda \ge 0$.
%%We omit the details, which will be given in a future work. 
%%focusing on this case.
%
%We can thus carry out the analysis of the strong shock limit,
%{\it in the favorable $y$-coordinates}, essentially as in
%the constant-viscosity case.
%What remains to be done is a high-frequency analysis giving uniform
%bounds on unstable eigenvalues in order to conclude as in the
%constant-viscosity case stability in the strong-shock limit.
%%We intend to treat this physically important case in a future work.

\begin{remark}\label{unexpected}
The appearance of condition \eqref{odd} is unexpected,
dividing into two cases the physical behavior in the strong-shock limit.
\end{remark}

\begin{remark}\label{prelimrmk}
Our numerical experiments, though still quite preliminary 
(restricted to diatomic gas,
$\Gamma= .4$, $\nu_*/\mu_*= 1.43$, $q=0.5$) so far
again
indicate unconditional stability, also in the temperature-dependent case. 
%but not convergence in the strong shock limit.
\end{remark}

\subsection{General dependence}\label{gendep}
We conclude by discussing briefly the case of
general, possibly inhomogeneous dependence of viscosity on
$(v,u,T)$.
The homogeneous case goes exactly as before, working in $y$-coordinates
and noting Remark \ref{anydep} and the discussion of Section \ref{teig}.

The inhomogeneous case is well-illustrated by
Sutherland's formula \eqref{sutherland}.
Fixing a left-hand state (the ``true'' strong-shock limit),
without loss of generality $v_-=1$,
and taking $v_+\to v_*$, 
rescale by \eqref{tscaling} with $q=1/2$ and 
$\epsilon=1$, $s=s(v_+)$. 
The result {\it in the rescaled coordinates} is
\begin{equation}\label{intdep}
\mu= (4/3)s^{-1} \frac{(1+\tilde m)(s^2T)^{3/2}}{(s^2T)+\tilde m} 
=(4/3)\frac{(1+\tilde m)T^{3/2}}{T+s^{-2} \tilde m},
\end{equation}
converging in the strong-shock limit $v_+\to v_*$,
$s\sim |v_+-v_*|^{-1}\to \infty$ to the homogeneous 
Chapman formula 
$$
\mu= (4/3)(1+\tilde m)T^{1/2}.
$$
Other inhomogeneous laws go similarly, converging in the strong-shock
limit in each case to an appropriate homogeneous law.
Thus, {\it inhomogeneous dependence poses no essential
new difficulty}.

%TODO: include/improve this rmk?  Interesting, but relevance
%not that clear, esp. since relation of numerical visc. to
%actual discrete behavior quite unclear, esp. for large amplitudes.
%So, conclusion: NO, OMIT. -K
%\begin{remark}\label{numvisc}
%An interesting direction might be to study stability
%of Euler equations with numerical viscosity, depending
%on all variable through the flux Jacobian.
%\end{remark}\label{numvisc}

%TODO: review for final check...
%\section{TODO}
%Remaining todos, ordered by priority:
%
%1. Big job 1(a): double number of $\lambda$ points for $\Gamma=.2$.
%
%3. Convergence study (like Figure 5) for temperature-dependent viscosity
%(high priority).
%
%4. Graph iso-Mach-lines (fix this).
%
%5. Table 2, high-freq. radius bounds. (OMITTED BY DECISION)
%
%6. $\nu>>\mu$, $\nu<<\mu$, just two computations, for appropriate
%sections (done using high-freq. convergence not tracking bounds).
%OMITTED, but could be restored... NO POINT I THINK --K
%
%7. continuation in $v_+$ of initializing subspaces (low priority,
%but eventually should be done.
%
%OTHER? (Can't think of any...!)


\begin{thebibliography}{10}

\bibitem{AGJ}
J.~Alexander, R.~Gardner, and C.~Jones.
\newblock A topological invariant arising in the stability analysis of
   travelling waves.
\newblock {\em J. Reine Angew. Math.}, 410:167--212, 1990.

\bibitem{AS}
J.~C. Alexander and R.~Sachs.
\newblock Linear instability of solitary waves of a {B}oussinesq-type equation:
  a computer assisted computation.
\newblock {\em Nonlinear World}, 2(4):471--507, 1995.


\bibitem{Ba}
G.~K. Batchelor.
\newblock {\em An introduction to fluid dynamics}.
\newblock Cambridge Mathematical Library. Cambridge University Press,
  Cambridge, paperback edition, 1999.

\bibitem{BHRZ}
B.~Barker, J.~Humpherys, K.~Rudd, and K.~Zumbrun.
\newblock Stability of viscous shocks in isentropic gas dynamics.
\newblock (Preprint, 2007).

%\bibitem{BHLRZ}
%B.~Barker, J.~Humpherys, O.~Lafitte, K.~Rudd, and K.~Zumbrun.
%\newblock Stability of isentropic Navier--Stokes shocks.
%\newblock To appear, Appl. Math. Letters, 2007.
%

\bibitem{BDG}
T.~J. Bridges, G.~Derks, and G.~Gottwald.
\newblock Stability and instability of solitary waves of the fifth- order
   {K}d{V} equation: a numerical framework.
\newblock {\em Phys. D}, 172(1-4):190--216, 2002.

\bibitem{Br.1}
L.~Q. Brin.
\newblock {\em Numerical testing of the stability of viscous shock  waves.}
\newblock PhD thesis, Indiana University, Bloomington, 1998.

\bibitem{Br.2}
L.~Q. Brin.
\newblock Numerical testing of the stability of viscous shock waves.
\newblock {\em Math. Comp.}, 70(235):1071--1088, 2001.

\bibitem{BrZ}
L.~Q. Brin and K.~Zumbrun.
\newblock Analytically varying eigenvectors and the stability of  viscous shock
   waves.
\newblock {\em Mat. Contemp.}, 22:19--32, 2002.
\newblock Seventh Workshop on Partial Differential Equations, Part I  (Rio de
   Janeiro, 2001).

\bibitem{Bro}
W.~B.~Brower.
\newblock {\em A Primer in Fluid Mechanics: Dynamics of Flows in One 
Space Dimension.}
\newblock CRC Press, 1999.

\bibitem{Bro2}
W.~B.~Brower.
\newblock {\em Theory, Tables, and Data for Compressible Flow.}
\newblock Hemisphere Publishing, 1990.

\bibitem{CDM}
F.~Cirak, R.~Deiterding, and S.~P.~Mauch,
\newblock Large-scale fluid-structure interaction simulation of
viscoplastic and fracturing thin-shells subjected to shocks and detonations.
\newblock {\em Comput. \& Structures}, 85:1049--1065, 2007.

\bibitem{CHNZ}
N.~Costanzino, J.~Humpherys, T.~Nguyen, and K.~Zumbrun,
\newblock Spectral stability of noncharacteristic boundary layers
of isentropic Navier--Stokes equations.
\newblock (Preprint, 2007).

\bibitem{C}
M.~S.~Cramer.
\newblock Nonclassical dynamics of classical gases.
\newblock In {\em Nonlinear Waves in Real Fluids}, 
%(ed. A. Kluwick), 
pp. 91-145. Springer, 1991.
%\newblock In {\em Advances in the theory of shock waves,
%Progr. Nonlinear Differential Equations Appl.}, Vol. 47,
%pages 307--516.  Birkh\"{a}user Boston, Boston, MA, 2001.

\bibitem{EF}
J.~W. Evans and J.~A. Feroe.
\newblock Traveling waves of infinitely many pulses in nerve equations.
\newblock {\em Math. Biosci.}, 37:23--50, 1977.

\bibitem{FS}
H.~Freist{\"u}hler and P.~Szmolyan.
\newblock Spectral stability of small shock waves.
\newblock {\em Arch. Ration. Mech. Anal.}, 164(4):287--309, 2002.

\bibitem{FS.2}
H.~Freist{\"u}hler and P.~Szmolyan.
\newblock Spectral stability of small-amplitude viscous shock waves
in several dimensions.
\newblock (Preprint, 2007).

\bibitem{GZ} R.~A. Gardner and K.~Zumbrun.
\newblock The gap lemma and geometric criteria for instability of  viscous shock profiles.
\newblock {\em Comm. Pure Appl. Math.}, 51(7):797--855, 1998.

\bibitem{GJ.1} R.~Gardner and C.K.R.T.~Jones.
\newblock A stability index for steady state solutions of
boundary value problems for parabolic systems.
\newblock {\em J. Differential Equations.}, 91(2):181--203, 1991.

\bibitem{GJ.2} R.~Gardner and C.K.R.T.~Jones.
\newblock Traveling waves of a perturbed diffusion equation
arising in a phase field model.
\newblock {\em Indiana Univ. Math. J.} 38(4):1197--1222, 1989.

\bibitem{GLM07} F.~Gesztesy, Y.~Latushkin, and K.A.~Makarov. 
\newblock Evans Functions, Jost Functions, and Fredholm Determinants.
\newblock {\em Arch. Ration. Mech. Anal.}, to appear.

\bibitem{Gi} D. Gilbarg, 
\newblock The existence and limit behavior of the one-dimensional shock layer.
\newblock {\em Amer. J. Math.} 73:256--274, 1951.

\bibitem{GMWZ.1}  O.~Gu\`es, G.~M\'etivier, M.~Williams, and K.~Zumbrun.
\newblock Navier--Stokes regularization of multidimensional Euler shocks.
\newblock {\em Ann. Sci. \'Ecole Norm. Sup.} 39(4):75--175, 2006.

\bibitem{GMWZ.2}  O.~Gu\`es, G.~M\'etivier, M.~Williams, and K.~Zumbrun.
\newblock Viscous boundary value problems for symmetric systems with
variable multiplicities.
\newblock (Preprint, 2006).

\bibitem{GMWZ.3}  O.~Gu\`es, G.~M\'etivier, M.~Williams, and K.~Zumbrun.
\newblock Nonclassical multidimensional viscous and inviscid shocks.
\newblock To appear, {\em Duke Math. J.}, 2007.

\bibitem{HVPM}
K.~K.~Haller, Y.~Ventikos, D.~Poulikakos, and P.~Monkewitz.
\newblock Computational study of high-speed liquid droplet impact.
\newblock {\em J. Appl. Phys.}, 92(5):2821--2828, Sept 2002.

\bibitem{H}
W.~G. Hoover.
\newblock Structure of a shock-wave front in a liquid.
\newblock {\em Phys. Rev. Lett.}, 42(23):1531--1534, Jun 1979.

\bibitem{HLZ}
J.~Humpherys, O.~Lafitte, and K.~Zumbrun.
\newblock Stability of viscous shock profiles in the high Mach number  
limit.
\newblock (Preprint, 2007).

\bibitem{HLyZ}
J.~Humpherys, G.~Lyng, and K.~Zumbrun.
\newblock Multidimensional spectral stability of 
large-amplitude Navier--Stokes shocks.
%\newblock (Preprint, 2007).
\newblock (In preparation).

\bibitem{HuZ.1}
J.~Humpherys and K.~Zumbrun.
\newblock Spectral stability of small-amplitude shock profiles for
dissipative symmetric hyperbolic-parabolic systems.
\newblock {\em Z. Angew. Math. Phys.}, 53(1):20--34, 2002.

\bibitem{HuZ.2}
J.~Humpherys and K.~Zumbrun.
\newblock An efficient shooting algorithm for Evans function calculations in large systems.
\newblock {\em Phys. D}, 220(2):116--126, 2006.

\bibitem{HSZ}
J.~Humpherys, B.~Sandstede, and K.~Zumbrun.
\newblock Efficient computation of analytic bases in {E}vans function analysis
  of large systems.
\newblock {\em Numer. Math.}, 103(4):631--642, 2006.

\bibitem{IT}
D.~Igra and K.~Takayama.
\newblock Experimental and numerical study of the initial stages
in the interaction process between a planar shock wave and
a water column
\newblock (Preprint).

\bibitem{Kato}
T.~Kato.
\newblock {\em Perturbation theory for linear operators}.
\newblock Classics in Mathematics. Springer-Verlag, Berlin, 1995.
\newblock Reprint of the 1980 edition.

\bibitem{KSh}
Y.~Shizuta and S.~Kawashima.
\newblock On the normal form of the symmetric hyperbolic-parabolic
 systems associated with the conservation laws.
\newblock {\em Tohoku Math. J.}, 40:449--464, 1988.

\bibitem{Lo}
L. B.~Loeb. 
\newblock {\em The kinetic theory of gases}.
\newblock Dover, New York, New York, USA, 1934.

%
%\bibitem{L.1}
%T.-P. Liu.
%\newblock Nonlinear stability of shock waves for viscous  conservation laws.
%\newblock {\em Mem. Amer. Math. Soc.}, 56(328):v+108, 1985.
%
%\bibitem{L.2}
%T.-P. Liu.
%\newblock Pointwise convergence to shock waves for viscous  conservation laws.
%\newblock {\em Comm. Pure Appl. Math.}, 50(11):1113--1182, 1997.
%

\bibitem{L}
V. C.~Liu,
\newblock On the separation of gas mixtures by suction of the
thermal diffusion boundary layer.
\newblock {\em Quart. J. Mech. Appl. Math.}, VII, 1957.

\bibitem{MZ.3}
C.~Mascia and K.~Zumbrun.
\newblock Pointwise {G}reen function bounds for shock profiles of  
systems with
   real viscosity.
\newblock {\em Arch. Ration. Mech. Anal.}, 169(3):177--263, 2003.

\bibitem{MZ.4}
C.~Mascia and K.~Zumbrun.
\newblock Stability of large-amplitude viscous shock profiles of
   hyperbolic-parabolic systems.
\newblock {\em Arch. Ration. Mech. Anal.}, 172(1):93--131, 2004.

%\bibitem{MZ.2}
%C.~Mascia and K.~Zumbrun.
%\newblock Stability of small-amplitude shock profiles of symmetric
   %hyperbolic-parabolic systems.
%\newblock {\em Comm. Pure Appl. Math.}, 57(7):841--876, 2004.

\bibitem{MN}
A.~Matsumura and K.~Nishihara.
\newblock On the stability of travelling wave solutions of a one-dimensional model system for compressible viscous gas.
\newblock {\em Japan J. Appl. Math.}, 2(1):17--25, 1985.

\bibitem{MeZ}
G.~M{\'e}tivier and K.~Zumbrun.
\newblock Large viscous boundary layers for noncharacteristic nonlinear
   hyperbolic problems.
\newblock {\em Mem. Amer. Math. Soc.}, 175(826):vi+107, 2005.

\bibitem{P} R.~Pego.
\newblock Stable viscosities and shock profiles for
systems of conservation laws.
\newblock {\em Trans. Amer. Math. Soc.}, 282:749--763, 1984.

\bibitem{PSW}
R.~L. Pego, P.~Smereka, and M.~I. Weinstein.
\newblock Oscillatory instability of traveling waves for a {K}d{V}-{B}urgers
  equation.
\newblock {\em Phys. D}, 67(1-3):45--65, 1993.

\bibitem{PW}
R.~L. Pego and M.~I. Weinstein.
\newblock Eigenvalues, and instabilities of solitary waves.
\newblock {\em Philos. Trans. Roy. Soc. London Ser. A Math. Phys. Eng. Sci.}, 340(1656):47--94, 1992.


\bibitem{PZ}
R.~Plaza and K.~Zumbrun.
\newblock An {E}vans function approach to spectral stability of
small-amplitude shock profiles.
\newblock {\em Discrete Contin. Dyn. Syst.}, 10:885-924, 2004.
\newblock (Preprint, 2002).

\bibitem{Ro}
L.~Rosenhead.
\newblock 
A discussion on the first and second viscosities of fluids.
\newblock {\em Proc. Roy. Soc. London Ser. A Math. Phys. Eng. Sci.}, 226(1164):1--6, 1954.

\bibitem{Sm}
J.~Smoller.
\newblock {\em Shock waves and reaction-diffusion equations}.
\newblock Springer-Verlag, New York, second edition, 1994.

\bibitem{Se.1}
D.~Serre.
\newblock {\em Systems of conservation laws. 1}.
\newblock Cambridge University Press, Cambridge, 1999.
\newblock Hyperbolicity, entropies, shock waves, Translated from the 1996
  French original by I. N. Sneddon.

\bibitem{Se.2}
D.~Serre.
\newblock {\em Systems of conservation laws. 2}.
\newblock Cambridge University Press, Cambridge, 2000.
\newblock Geometric structures, oscillations, and initial-boundary value
  problems, Translated from the 1996 French original by I. N. Sneddon.

%\bibitem{Se}
%D.~Serre.
%\newblock La transition vers l'instabilit\'e pour les ondes de chocs
%multi-dimensionnelles.
%\newblock {\em Trans. Amer. Math. Soc.}, 353:5071--5093, 2001.

\bibitem{S} 
M.~Slemrod. 
\newblock Dynamic phase transitions in a van der Waals fluid.
\newblock {\em J. Differential Equations}, 52(1):1--23, 1984.

\bibitem{SVK}
V. V. Sychev, A. A. Vasserman, A. D. Kozlov, G. A. Spiridonov, and V. A. Tsymarny.
\newblock {\em Thermodynamic Properties of Air}.
\newblock National Standard Reference Data Service of the USSR: A Series of Property Tables,
\newblock Hemisphere Publishing, 1987.

%\bibitem{SX}
%A.~Szepessy and Z.~P. Xin.
%\newblock Nonlinear stability of viscous shock waves.
%\newblock {\em Arch. Rational Mech. Anal.}, 122(1):53--103, 1993.

\bibitem{Tru}
C.~Truesdell.
\newblock The present status of the controversy regarding 
the bulk viscosity of fluids.
\newblock {\em  Proc. of the Royal Soc. of London, Series A Math. Phys. Eng. Sci.}, 226(1164):59--65, 1954.

\bibitem{Whi}
F. M. White.
\newblock {\em Viscous Fluid Flow}.
\newblock 2nd ed., McGraw-Hill, 1991.

%\bibitem{Z.2}
%K.~Zumbrun.
%\newblock Multidimensional stability of planar viscous shock waves.
%\newblock In {\em Advances in the theory of shock waves,
%Progr. Nonlinear Differential Equations Appl.}, Vol. 47,
%pages 307--516.  Birkh\"{a}user Boston, Boston, MA, 2001.
%

\bibitem{Z.5} K.~Zumbrun.
\newblock Planar stability criteria for multidimensional viscous shock waves.
\newblock  {\em CIME lecture notes, Cetraro, Italy 2003},
Preprint (2003) (95 pp.); to appear, CIME research notes, Springer (2007).

\bibitem{Z.3}
K.~Zumbrun.
\newblock Stability of large-amplitude shock waves of compressible
   {N}avier-{S}tokes equations.
\newblock In {\em Handbook of mathematical fluid dynamics. Vol.  III}, pages
   311--533. North-Holland, Amsterdam, 2004.
\newblock With an appendix by Helge Kristian Jenssen and Gregory Lyng.

\bibitem{Z.4} K.~Zumbrun.
\newblock Dynamical stability of phase transitions in the $p$-system with 
viscosity-capillarity.
\newblock {\em SIAM J. Appl. Math.} 60(6):1913--1924, 2000.

\bibitem{ZH}
K.~Zumbrun and P.~Howard.
\newblock Pointwise semigroup methods and stability of viscous shock  waves.
\newblock {\em Indiana Univ. Math. J.}, 47(3):741--871, 1998.

\bibitem{ZS}
K.~Zumbrun and D.~Serre.
\newblock Viscous and inviscid stability of multidimensional planar shock fronts.
\newblock {\em Indiana Univ. Math. J.}, 48 (1999) 937--992.

\end{thebibliography}
\end{document}